\titleformat{\section}{\bfseries\filcenter}{\thesection}{1em}{}
\titleformat{\subsection}{\bfseries}{\thesubsection}{1em}{}
\titleformat{\subsubsection}[runin]{\bfseries}{\thesubsubsection}{1em}{}[.]
\tikzset{
    >=stealth',
    pil/.style={
           ->,
           thick,
           shorten <=2pt,
           shorten >=2pt,}
}
\newcommand{\Z}{\mathbb{Z}}
\newcommand{\C}{\mathbb{C}}
\newcommand{\R}{\mathbb{R}}
\renewcommand{\P}{\mathbb{P}}
\newcommand{\Id}{\mathrm{Id}}
\renewcommand{\Im}[1]{\mathrm{Im}\,#1}
\newcommand{\tr}{\mathrm{tr}}
\newcommand{\RP}{\mathbb{R}\mathbb{P}}
\newcommand{\GL}{\mathrm{GL}}
\newcommand{\D}{\mathcal D}
\newcommand{\sign}[1]{\mathrm{sgn}\,#1}
\newcommand{\sgn}[1]{\mathrm{sgn}\,#1}
\newcommand{\const}{\mathrm{const}}
\newcommand{\Ker}[1]{\mathrm{Ker}\,#1}
\newcommand{\ord}[1]{\mathrm{ord}\,#1}
\renewcommand{\v}[1]{Q^n({#1})}
\renewcommand{\r}{r}
\renewcommand{\l}{l}
\newcommand{\Dr}{\D_\r}
\newcommand{\Dl}{\D_\l}
\newtheorem{lemma}{Lemma}[section]
\newtheorem{proposition}[lemma]{Proposition}
\newtheorem{corollary}[lemma]{Corollary}
\newenvironment{customthm}[1]
  {\innercustomthm}
  {\endinnercustomthm}
\theoremstyle{definition}
\newtheorem{remark}[lemma]{Remark}
\newtheorem{definition}[lemma]{Definition}
\title{The pentagram map, Poncelet polygons, and commuting difference operators}    
\author{Anton Izosimov\thanks{Department of Mathematics, University of Arizona, e-mail: \tt{izosimov@math.arizona.edu}}}
\date{}
\begin{document}
\maketitle
\abstract{

The pentagram map takes a planar polygon $P$  to a polygon $P'$ whose vertices are the intersection points of consecutive shortest diagonals of~$P$. This map is known to interact nicely with Poncelet polygons, i.e. polygons which are simultaneously inscribed in a conic and circumscribed about a conic. A theorem of R.\,Schwartz says that if $P$ is a Poncelet polygon, then the image of $P$ under the pentagram map is projectively equivalent to $P$. In the present paper we show that in the convex case this property characterizes Poncelet polygons: if a convex polygon is projectively equivalent to its pentagram image, then it is Poncelet. The proof is based on the theory of commuting difference operators, as well as on properties of real elliptic curves and theta functions.

}

\tableofcontents

\section{Introduction}
It is a classical result of A.\,Clebsch  \cite{clebschPaper} that every planar pentagon $P$ is projectively equivalent to the pentagon $P'$ formed by intersections of diagonals of~$P$.  More precisely, if we label the vertices of $P$ and $P'$ in such way that the $k$'th vertex of $P'$ is opposite to the $k$'th vertex of $P$ (see Figure \ref{Fig:pentagons}), then there is a projective transformation that takes $P$ to $P'$ and respects the labelings. 

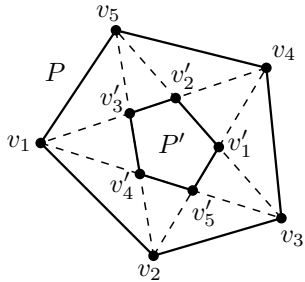
\begin{figure}[h]
\centering
\begin{tikzpicture}[ scale = 1]
\coordinate (A) at (0,1);
\coordinate (B) at (1.5,-0.5);
\coordinate (C) at (3.2,0);
\coordinate (D) at (3,2);
\coordinate (E) at (1,2.5);
\fill (A) circle [radius=2pt];
\fill (B) circle [radius=2pt];
\fill (C) circle [radius=2pt];
\fill (D) circle [radius=2pt];
\fill (E) circle [radius=2pt];

\node[label={[shift={(0,0)}]left:\small$P$}] at (0.6, 1.95) () {};
\node[label={[shift={(0,0)}]left:\small$P'$}] at (2.2, 1) () {};

\draw  [line width=0.3mm]  (A) -- (B) -- (C) -- (D) -- (E) -- cycle;
\draw [dashed, line width=0.2mm, name path=AC] (A) -- (C);
\draw [dashed,line width=0.2mm, name path=BD] (B) -- (D);
\draw [dashed,line width=0.2mm, name path=CE] (C) -- (E);
\draw [dashed,line width=0.2mm, name path=DA] (D) -- (A);
\draw [dashed,line width=0.2mm, name path=EB] (E) -- (B);

\node[label={[shift={(0.15,0)}]left:\small${v_1}$}] at (A) () {};
\node[label={[shift={(-0.05,0.15)}]below:\small$v_2$}] at (B) () {};
\node[label={[shift={(0.15,0.1)}]below:\small$v_3$}] at (C) () {};
\node[label={[shift={(-0.2,0.2)}]right:\small$v_4$}] at (D) () {};
\node[label={[shift={(-0.1,-0.15)}]above:\small$v_5$}] at (E) () {};

\path [name intersections={of=AC and BD,by=Ep}];
\path [name intersections={of=BD and CE,by=Ap}];
\path [name intersections={of=CE and DA,by=Bp}];
\path [name intersections={of=DA and EB,by=Cp}];
\path [name intersections={of=EB and AC,by=Dp}];

\fill (Ap) circle [radius=2pt];
\fill (Bp) circle [radius=2pt];
\fill (Cp) circle [radius=2pt];
\fill (Dp) circle [radius=2pt];
\fill (Ep) circle [radius=2pt];

\draw  [line width=0.3mm]  (Ap) -- (Bp) -- (Cp) -- (Dp) -- (Ep)  -- cycle;

\node[label={[shift={(-0.15,0.05)}]right:\small$v_1'$}]at (Ap) () {};
\node[label={[shift={(0.1,-0.15)}]above:\small$v_2'$}]at (Bp) () {};
\node[label={[shift={(0.2,0.2)}]left:\small$v_3'$}] at (Cp) () {};
\node[label={[shift={(0.2,-0.15)}]left:\small$v_4'$}] at (Dp) () {};
\node[label={[shift={(0.15,0.15)}]below:\small$v_5'$}] at (Ep) () {};

\end{tikzpicture}
\caption{Pentagons $P = v_1v_2v_3v_4v_5$ and $P' = v_1'v_2'v_3'v_4'v_5'$ are projectively equivalent.}\label{Fig:pentagons}
\end{figure}


Furthermore, as was proved by R.\,Schwartz \cite{schwartz2007poncelet}, Clebsch's theorem is true for all  {Poncelet} polygons with odd number of vertices. Recall that a \textit{Poncelet polygon} is a polygon which is inscribed in a conic and circumscribed about another conic. In particular, any pentagon is Poncelet, while for $n$-gons with $n\geq 6$ being Poncelet is a non-trivial restriction. Poncelet polygons owe their name to J.-V.\,Poncelet and his famous ``{porism}'' which says that {if there exists an $n$-gon inscribed in a conic $C_1$ and circumscribed about a conic $C_2$, then any point of $C_1$ is a vertex of such an $n$-gon} (see Figure \ref{Fig:poncelet}). 
\begin{figure}[t]
\centering
\includegraphics[width = 5 cm]{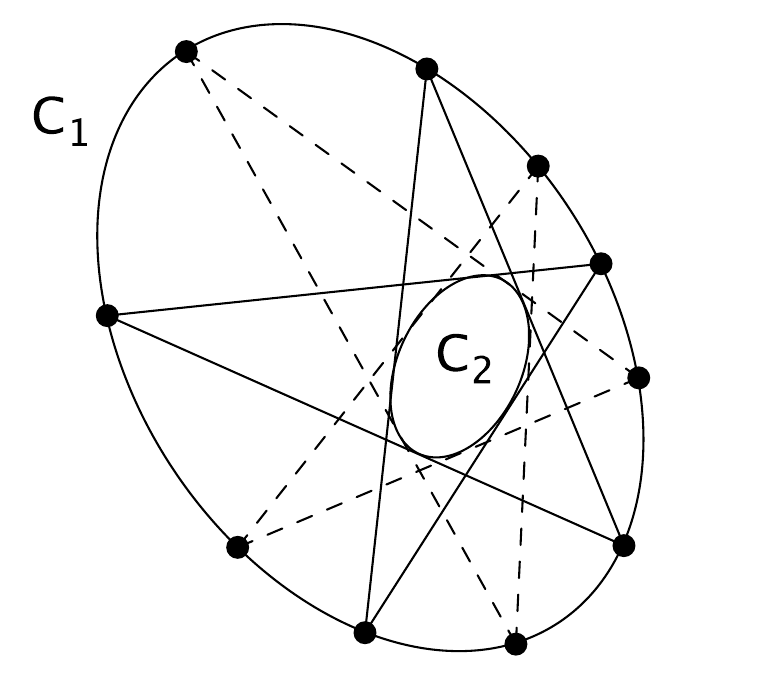}
\caption{Every point of $C_1$ is a vertex of a pentagon inscribed in $C_1$ and circumscribed about $C_2$.}\label{Fig:poncelet}
\end{figure}

\begin{figure}[b]
\centering
\begin{tikzpicture}[scale = 1.1]
\coordinate (A) at (0,0);
\coordinate (B) at (1.5,-0.5);
\coordinate (C) at (3,1);
\coordinate (D) at (3,2);
\coordinate (E) at (1,3);
\coordinate (F) at (-0.5,2.5);
\coordinate (G) at (-1,1.5);

\fill (A) circle [radius=2pt];
\fill (B) circle [radius=2pt];
\fill (C) circle [radius=2pt];
\fill (D) circle [radius=2pt];
\fill (E) circle [radius=2pt];
\fill (F) circle [radius=2pt];
\fill (G) circle [radius=2pt];

\draw  [line width=0.3mm]  (A) -- (B) -- (C) -- (D) -- (E) -- (F) -- (G) -- cycle;
\draw [dashed, line width=0.2mm, name path=AC] (A) -- (C);
\draw [dashed,line width=0.2mm, name path=BD] (B) -- (D);
\draw [dashed,line width=0.2mm, name path=CE] (C) -- (E);
\draw [dashed,line width=0.2mm, name path=DF] (D) -- (F);
\draw [dashed,line width=0.2mm, name path=EG] (E) -- (G);
\draw [dashed,line width=0.2mm, name path=FA] (F) -- (A);
\draw [dashed,line width=0.2mm, name path=GB] (G) -- (B);

\path [name intersections={of=AC and BD,by=Fp}];
\path [name intersections={of=BD and CE,by=Gp}];
\path [name intersections={of=CE and DF,by=Ap}];
\path [name intersections={of=DF and EG,by=Bp}];
\path [name intersections={of=EG and FA,by=Cp}];
\path [name intersections={of=FA and GB,by=Dp}];
\path [name intersections={of=GB and AC,by=Ep}];

\fill (Ap) circle [radius=2pt];
\fill (Bp) circle [radius=2pt];
\fill (Cp) circle [radius=2pt];
\fill (Dp) circle [radius=2pt];
\fill (Ep) circle [radius=2pt];
\fill (Fp) circle [radius=2pt];
\fill (Gp) circle [radius=2pt];

\draw  [line width=0.3mm]  (Ap) -- (Bp) -- (Cp) -- (Dp) -- (Ep) -- (Fp) -- (Gp) -- cycle;

\node[label={[shift={(0.2,-0.1)}]left:\small$v_1$}] at (A) () {};
\node[label={[shift={(0.1,0.15)}]below:\small$v_2$}] at (B) () {};
\node[label={[shift={(0.25,0.25)}]below:\small$v_3$}] at (C) () {};
\node[label={[shift={(-0.2,0.15)}]right:\small$v_4$}] at (D) () {};
\node[label={[shift={(0,-0.15)}]above:\small$v_5$}] at (E) () {};
\node[label={[shift={(0.15,0.1)}]left:\small$v_6$}] at (F) () {};
\node[label={[shift={(0.15,0)}]left:\small$v_7$}] at (G) () {};

\node[label={[shift={(0.25,-0.25)}]left:\small$v_1'$}] at (Ap) () {};
\node[label={[shift={(0.1,0.15)}]below:\small$v_2'$}] at (Bp) () {};
\node[label={[shift={(0.25,0.3)}]below:\small$v_3'$}] at (Cp) () {};
\node[label={[shift={(-0.2,0.15)}]right:\small$v_4'$}] at (Dp) () {};
\node[label={[shift={(0,-0.15)}]above:\small$v_5'$}] at (Ep) () {};
\node[label={[shift={(0.2,0.2)}]left:\small$v_6'$}] at (Fp) () {};
\node[label={[shift={(0.1,0)}]left:\small$v_7'$}] at (Gp) () {};
\node at (1,1.5) () {$P'$};
\node at (0,3) () {$P$};

\end{tikzpicture}
\caption{A convex polygon $P$ is Poncelet if and only if it is projectively equivalent to $P'$.}\label{Fig:labeling}
\end{figure}
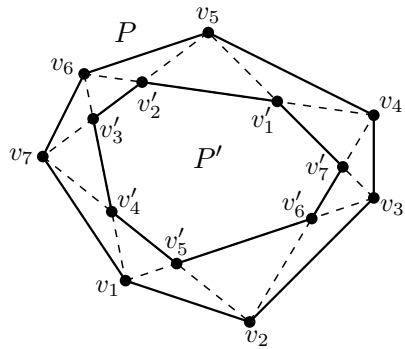
Schwartz's generalization of Clebsch's theorem is as follows. Let $P$ be an $n$-gon with odd $n$, and let $P'$ be the polygon whose vertices are the intersections of consecutive shortest diagonals of~$P$, i.e. diagonals connecting second nearest vertices. Label the vertices of $P'$ as in Clebsch's theorem: the $k$'th vertex of $P'$ is opposite to the $k$'th vertex of $P$ (see Figure~\ref{Fig:labeling}). Assume that $P$ is Poncelet. Then there is a projective transformation that carries $P$ to $P'$ and respects the labelings (a weaker result saying that if $P$ is Poncelet then $P'$ is circumscribed about a conic was known to Darboux, see \cite[Theorem~2.1]{dragovic2014bicentennial}). 
The goal of the present paper is to show that in the convex setting the converse is also true. More precisely, we prove the converse of Schwartz's theorem for a broader class of \textit{weakly convex} polygons. Weak convexity is a technical condition (see Definition~\ref{def:lcp} below) which in particular holds for truly convex polygons. 

\begin{customthm}{A}\label{thm1}
Let $P$ be a weakly convex closed polygon with an odd number of vertices. Let also $P'$ be the polygon whose vertices are the intersections of consecutive shortest diagonals of~$P$, labeled as in Figure~\ref{Fig:labeling}. Assume that there is a projective transformation that carries $P$ to $P'$ and respects the labelings. Then $P$ (and hence $P'$) is a Poncelet polygon.
\end{customthm}

Combining Theorem \ref{thm1} with Schwartz's theorem, we get the following characterization of weakly convex Poncelet polygons:

\begin{corollary}\label{thm1cor}
Let $P$ be a weakly convex closed polygon with an odd number of vertices. Let also $P'$ be the polygon whose vertices are the intersections of consecutive shortest diagonals of~$P$, labeled as in Figure~\ref{Fig:labeling}. Then $P$ is Poncelet if and only if it is projectively equivalent to $P'$.
\end{corollary}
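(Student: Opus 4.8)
The plan is to obtain Corollary~\ref{thm1cor} as an immediate consequence of Theorem~\ref{thm1} together with Schwartz's theorem (recalled in the Introduction), so the proof consists of assembling these two results and carries no genuinely new content. Throughout, I read ``$P$ is projectively equivalent to $P'$'' in the sense made precise in Theorem~\ref{thm1}: there is a projective transformation carrying $P$ to $P'$ that respects the vertex labelings of Figure~\ref{Fig:labeling} (the $k$-th vertex of $P'$ being opposite the $k$-th vertex of~$P$).

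First I would prove the ``only if'' direction. Assume $P$ is Poncelet. Since $P$ has an odd number of vertices, Schwartz's generalization of Clebsch's theorem applies and provides a projective transformation taking $P$ to $P'$ and respecting the labelings; hence $P$ is projectively equivalent to~$P'$. Note that this implication uses neither convexity nor weak convexity of $P$, only that the number of vertices is odd.

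Next I would prove the ``if'' direction, which is where Theorem~\ref{thm1} — and hence the bulk of the paper — is used. Assume that $P$ is weakly convex, has an odd number of vertices, and is projectively equivalent to $P'$, i.e.\ some labeling-respecting projective transformation carries $P$ to~$P'$. These are exactly the hypotheses of Theorem~\ref{thm1}, which then asserts that $P$, and therefore also the projectively equivalent polygon~$P'$, is a Poncelet polygon. Combined with the previous paragraph, this gives the claimed equivalence.

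I expect no substantial obstacle here once Theorem~\ref{thm1} is available; the only point deserving attention is the consistency of the vertex labeling. The labeling of $P'$ is the canonical one attached to the labeling of $P$, and it is precisely this matching that Theorem~\ref{thm1} requires, which is why the phrase ``respects the labelings'' (as in Figure~\ref{Fig:labeling}) is kept in the statement. If one instead wished to allow an arbitrary projective equivalence between the unlabeled polygons, one would first need an elementary preliminary step — using that for a (weakly) convex polygon the cyclic order of the vertices is intrinsic — to reduce to a labeling-respecting equivalence before invoking Theorem~\ref{thm1}.
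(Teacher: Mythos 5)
Your proposal is correct and is exactly the paper's argument: the ``only if'' direction is Schwartz's theorem and the ``if'' direction is Theorem~\ref{thm1}, which is precisely how the paper derives Corollary~\ref{thm1cor} (``Combining Theorem~\ref{thm1} with Schwartz's theorem\dots''). Your remark about keeping the labeling convention consistent is a sensible clarification but adds nothing beyond what the paper already assumes.
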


The map taking the polygon $P$ in Figure \ref{Fig:labeling} to $P'$ is known as the \textit{pentagram map}. It was defined by Schwartz in 1992 \cite{schwartz1992pentagram} but became especially popular in the last decade thanks to the discovery that it is a discrete integrable system \cite{ovsienko2010pentagram, ovsienko2013liouville, soloviev2013integrability}, and also because of its connections with cluster algebras~\cite{GLICK20111019, Gekhtman2016, glick2015, kedem2015t, fock2014loop}. 
Since the pentagram map commutes with projective transformations, it is usually considered as a dynamical system on the space of polygons modulo projective equivalence. Our result can thus be viewed as a description of fixed points of the pentagram map, which has been an open question since Schwartz's first paper \cite{schwartz1992pentagram}.
\begin{remark}
Note that the pentagram map can be considered either on labeled polygons (i.e. polygons with labeled vertices), or on unlabeled ones.  Theorem~\ref{thm1} describes fixed points of the pentagram map on the space of projective equivalence classes of \textit{labeled} polygons, where the labeling rule is depicted in Figure~\ref{Fig:labeling}. Although this is not the only possible labeling, it is the only one for which the pentagram map commutes with the action of the dihedral group, and hence the most symmetric of all labelings. A more common, non-symmetric labeling is given by the rule  $v_k' := (v_{k-1}, v_{k+1}) \cap (v_k, v_{k+2})$. One can easily see that the only fixed points of the pentagram map with this non-symmetric labeling are regular polygons (again, assuming that the number of vertices is odd). 
The problem  of describing the fixed points of the pentagram map for an \textit{arbitrary} labeling can also be approached using the techniques of the present paper, but due to the break of symmetry one should not expect an answer as nice as for the symmetric labeling.
\end{remark}

Theorem~\ref{thm1} also has an interpretation in terms of billiards. Indeed, if the conic $C_1$ circumscribed about a Poncelet polygon $P$ is confocal to the inscribed conic $C_2$ (which can always be arranged by applying a suitable projective transformation), then $P$ can be viewed as a closed trajectory of a billiard ball in the domain bounded by $C_1$. Conversely, any closed billiard trajectory in a conic is a Poncelet polygon.
So, Corollary \ref{thm1cor} establishes a correspondence between fixed points of the pentagram map and periodic billiard trajectories in conics. Also note that, as shown in~\cite{levi2007poncelet}, the fact that a closed billiard trajectory in a conic is projectively equivalent to its pentagram image is essentially a corollary of integrability of the corresponding billiard system. At the same time, we show that if $P$ is projectively equivalent to its pentagram image then the vertices of $P$ are contained in a conic. So, one may hope to combine our results with the approach of \cite{levi2007poncelet} to show that for \textit{any} integrable billiard the impact points of a periodic trajectory are contained in a conic, and hence shed some light on the Birkhoff conjecture which says that the only integrable billiards are the ones in conics.

It is also an interesting question whether this correspondence between periodic billiard trajectories in conics and fixed points of the pentagram map extends to higher dimensions. There exists numerous generalizations of the pentagram map to higher-dimensional spaces~\cite{Gekhtman2016, khesin2013, khesin2016, felipe2015} and one may wonder if their fixed points are related to periodic trajectories of billiards in multidimensional quadrics. 




\begin{remark}\label{ex:complex}
We do not know if Theorem \ref{thm1} is true with no convexity-type assumptions, but it is for sure not true over complex numbers, as demonstrated by the following example.
Let $\lambda := \exp({{2\pi\mathrm{i}}/{7}})$, where $\mathrm{i} = \sqrt{-1}$, and let $P$ be a heptagon in $\C^2$ with vertices $v_k := (\lambda^{2k}, \lambda^{3k})$. Then a direct computation shows that there exists a {projective} (in fact, even affine) transformation $\phi$ taking $P$ to $P'$ (see also Remark \ref{rem:crit} below for a conceptual proof). Moreover, for any vertex $v$ of $P$ and $v'$ of $P'$, the map $\phi$ can be chosen to take $v$ to $v'$. This means that the projective equivalence class of $P$ is fixed by the pentagram map, regardless of the labeling convention used to define the map. However, $P$ is not Poncelet. Moreover, it is not even inscribed in a conic. Indeed, the vertices of $P$ lie on a semi-cubical parabola $y^2 = x^3$, which has at most six intersection points with any conic. So, there exists no conic which contains all seven vertices of $P$.\par
Theorem \ref{thm1} being not true over $\C$ is one of the reasons one should not expect it to have any kind of ``elementary'' proof, as such a proof would be valid over any field. Another reason is that the theorem is not true for non-closed polygons (see Remark \ref{rm:nc} below). Again, if there was a local (i.e. involving only few adjacent vertices) geometric construction producing inscribed and circumscribed conics for $P$ based on the projective equivalence between $P$ and $P'$, such a construction would work no matter whether $P$ is closed or not.
\end{remark}

We now outline the scheme of the proof of Theorem \ref{thm1}. As a first step, we prove the theorem under an additional assumption that the polygon $P$ is \textit{self-dual}. 
Recall that the dual of a polygon is the polygon in the dual projective plane whose vertices are the sides of the initial one. We label the vertices of the dual polygon as shown in Figure \ref{Fig:dual}. A polygon is self-dual if it is projectively equivalent to its dual.
\begin{figure}[t]
\centering
\begin{tikzpicture}[scale = 0.9]
\coordinate (A) at (0,0);
\coordinate (B) at (1.5,-0.5);
\coordinate (C) at (3,1);
\coordinate (D) at (3,2);
\coordinate (E) at (1,3);
\coordinate (F) at (-0.5,2.5);
\coordinate (G) at (-1,1.5);

\coordinate (Ap) at (2,2.5);
\coordinate (Bp) at (0.25,2.75);
\coordinate (Cp) at (-0.75,2);
\coordinate (Dp) at (-0.5,0.75);
\coordinate (Ep) at (0.75,-0.25);
\coordinate (Fp) at (2.25,0.25);
\coordinate (Gp) at (3,1.5);

\draw  [line width=0.3mm]  (A) -- (B) -- (C) -- (D) -- (E) -- (F) -- (G) -- cycle;

\node[label={[shift={(0.2,-0.1)}]left:\small$v_1$}] at (A) () {};
\node[label={[shift={(0.1,0.15)}]below:\small$v_2$}] at (B) () {};
\node[label={[shift={(0.25,0.25)}]below:\small$v_3$}] at (C) () {};
\node[label={[shift={(-0.2,0.15)}]right:\small$v_4$}] at (D) () {};
\node[label={[shift={(0,-0.15)}]above:\small$v_5$}] at (E) () {};
\node[label={[shift={(0.15,0.1)}]left:\small$v_6$}] at (F) () {};
\node[label={[shift={(0.15,0)}]left:\small$v_7$}] at (G) () {};

\node[label={[shift={(0.25,-0.2)}]left:\small$v_1^*$}] at (Ap) () {};
\node[label={[shift={(0.1,0.15)}]below:\small$v_2^*$}] at (Bp) () {};
\node[label={[shift={(0.25,0.3)}]below:\small$v_3^*$}] at (Cp) () {};
\node[label={[shift={(-0.2,0.15)}]right:\small$v_4^*$}] at (Dp) () {};
\node[label={[shift={(0,-0.15)}]above:\small$v_5^*$}] at (Ep) () {};
\node[label={[shift={(0.2,0.2)}]left:\small$v_6^*$}] at (Fp) () {};
\node[label={[shift={(0.1,0)}]left:\small$v_7^*$}] at (Gp) () {};
\fill (A) circle [radius=2pt];
\fill (B) circle [radius=2pt];
\fill (C) circle [radius=2pt];
\fill (D) circle [radius=2pt];
\fill (E) circle [radius=2pt];
\fill (F) circle [radius=2pt];
\fill (G) circle [radius=2pt];

\end{tikzpicture}
\caption{The $k$'th vertex of the dual polygon is opposite to the $k$'th vertex of the initial one.}\label{Fig:dual}
\end{figure}
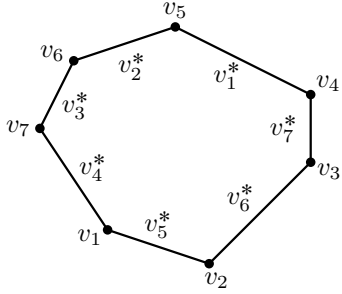

Under this additional assumption of self-duality, Theorem \ref{thm1} is true in a more general setting of \textit{twisted polygons}, i.e. polygons that are closed only up to a projective transformation. More precisely, a twisted $n$-gon is a sequence $v_k \in \P^2$ such that $v_{k+n} = \psi(v_k)$ for a certain projective transformation~$\psi$, called the \textit{monodromy}.

\begin{customthm}{B}\label{thm2}
Let $P$ be a weakly convex twisted $n$-gon with odd $n$, and let $P'$ be as in Theorem \ref{thm1}. Assume that $P$ is self-dual and projectively equivalent to $P'$. Then $P$ is a Poncelet polygon.
\end{customthm}

\begin{remark}\label{rm:nc}
Theorem \ref{thm2} is not true without the self-duality assumption (in other words, Theorem~\ref{thm1} is not true for non-closed polygons). As an example, consider a polygon $P$ in $\R^2$ whose vertices are given by $v_k := (4^k, 8^k)$. This is a twisted $n$-gon for any $n$, weakly convex and projectively equivalent to its pentagram image $P'$. %
%
%
%
However, $P$ is not Poncelet (cf. Remark \ref{ex:complex}).
\end{remark}

The proof of Theorem \ref{thm2} is based on the theory of commuting difference operators, elliptic curves, and theta functions. Given the result of the theorem, the appearance of elliptic curves is not surprising, as their connection to Poncelet polygons is well-known. However, we are not given \textit{a priori} that $P$ is Poncelet, so there should be some other source where the elliptic curve is coming from. In our approach, that source is the theory of commuting difference operators. Namely, we show that a twisted polygon $P$ is projectively equivalent to its pentagram image $P'$ if and only if certain associated difference operators commute (see Section \ref{sec:fpcdo}). The general theory then says that the joint spectrum of those operators is a Riemann surface $\Gamma$, called the \textit{spectral curve} (see e.g.~\cite{krichever2003two}). Using further that the operators in question are of special form and, in particular, dual to each other (which is a reflection of self-duality of $P$), we show that the genus of $\Gamma$ is at most $1$, i.e. $\Gamma$ is rational or elliptic (see Section~\ref{ss:genus}). This is one of the few places in the proof where we use weak convexity in an essential way. Without that assumption, one only seems to be able to conclude that the genus is at most $2$. In particular, it seems possible to construct non-weakly-convex counterexamples to Theorems \ref{thm1} and \ref{thm2} using genus $2$ theta functions.

\par

The next step is to show that our upper bound on the genus of $\Gamma$ implies that $P$ is Poncelet (as a priori there is no connection between the elliptic curve $\Gamma$ and the one that is classically associated with a Poncelet polygon). More precisely, we show that elliptic spectral curves correspond to generic Poncelet polygons, while rational curves correspond to their degenerations (such as the regular polygon). To that end, we express coordinates of vertices of $P$ in terms of certain meromorphic functions on the spectral curve $\Gamma$. One ends up with elementary functions or theta functions, depending on whether the curve $\Gamma$ is rational or elliptic. In both cases, using relations between those functions (e.g. Riemann's relation in the elliptic case), one shows that $P$ is a Poncelet polygon, so Theorem \ref{thm2} holds (see Section~\ref{sec:rat} for the rational case and Section \ref{ss:sd} for the elliptic case). Furthermore, in the elliptic case the spectral curve $\Gamma$ turns out to be isogenous to the elliptic curve attached to $P$ due to its Poncelet property.

\par

After that, we proceed to prove Theorem \ref{thm1}. To that end, we first show that the self-duality assumption of Theorem \ref{thm2} is not too restrictive. Namely, given a polygon that is projectively equivalent to its pentagram image, it can be what is called \textit{rescaled} so that it becomes self-dual. This rescaling (closely related to the notion of \textit{spectral parameter}) is a one-parametric group of transformations of the moduli space of twisted polygons which preserves weak convexity but not closedness. So, starting with a closed polygon as in Theorem \ref{thm1}, we rescale it to a non-necessarily closed, but self-dual, weakly convex polygon, which is the setting of Theorem \ref{thm2}. This way, we conclude that a weakly convex closed polygon projectively equivalent to its pentagram image is Poncelet up to rescaling. The last step is to show that this rescaling must actually be trivial. To that end, we show that no non-trivial rescaling of a weakly convex Poncelet polygon is closed. In the rational case, this is proved by an elementary argument (see Section \ref{ss:genrat}), while the elliptic case requires careful analysis of the real part of the spectral curve (see Section~\ref{ss:genell}). In the latter case, the proof once again essentially relies on weak convexity. 

In addition to proofs of Theorems \ref{thm1} and \ref{thm2}, the paper contains an appendix (Section \ref{sec:app}) where we establish an auxiliary result on correspondence between dual difference operators and dual polygons. Although that result seems to be well-known, we could not find a proof in the literature that does not rely on a computation. So, we provide a proof here.\par

We tried to make the exposition self-contained. In particular, we do not assume
that the reader is familiar with the general theory of integrable systems or commuting difference operators. Only basic knowledge of Riemann surfaces is assumed.

\par 

\smallskip
{\bf Acknowledgments.} The author is grateful to Boris Khesin, Valentin Ovsienko, Richard Schwartz, and Sergei Tabachnikov for fruitful discussions. Some of the figures were created with help of software package Cinderella.  
 This work was supported by NSF grant DMS-2008021.

\section{Background results: polygons, difference operators, and corner invariants}

This section is an overview of mostly well known results on the relation between difference operators and polygons. Namely, we give an introduction to difference operators in Section \ref{sec:primer}, after which we connect them to polygons in Section \ref{sec:polygons}. Note that while the description of the moduli space of polygons in terms of difference operators is well-known, our point of view is slightly different from the standard one. In particular, we identify the space of polygons with a certain \textit{quotient} of the space of third order difference operators, as opposed to the standard approach in which one identifies polygons with a certain \textit{subspace} of that space. In that respect, our approach is close to that of \cite{conley2019lagrangian}. In addition, we provide, in Section \ref{sec:ci}, another description of the polygon space, in terms of so-called corner invariants. Note that while corner invariants per se are not heavily used in the paper, they are needed to define weakly convex polygons and rescaling. Rescaling is also defined in Section \ref{sec:ci}, while weakly convex polygons are discussed in the next Section \ref{sec:lcp}. 

\par

\subsection{A primer on difference operators}\label{sec:primer}

This section is a brief introduction to the elementary theory of difference operators. Our terminology mainly follows that of \cite{van1979spectrum}.
Let $\R^\infty$ be the vector space of bi-infinite sequences of real numbers. For $\xi \in \R^\infty$ and any $k \in \Z$, let $\xi_k \in \R$ be the $k$'th entry of the sequence $\xi$, so that $\xi = (\xi_k)_{k \in \Z}$. Let also $m_- \leq m_+$ be integers. A linear operator $\D \colon \R^\infty \to \R^\infty$ is called \textit{a difference operator supported in $[m_-,m_+]:=\{m_-, \dots, m_+\}$} if it can be written as
\begin{equation}\label{dodef}
(\D\xi)_k = \! \sum_{j = m_-}^{m_+} \! a_k^j \xi_{k+j},
\end{equation}
where $a_k^j \in \R$ for every $k \in \Z$ and every $j \in [m_-,m_+]$. In matrix terms, this can be rewritten as
	\begin{align}\label{infMatrix0}
	\D\xi = \left(\begin{array}{ccccccccc}  \ddots   & & \ddots & & \\  & a_{k-1}^{m_-}   & \dots & a_{k-1}^{m_+} & \\ & & a_k^{m_-} &  \dots & a_k^{m_+} & &  \\ & & & a_{k+1}^{m_-} & \dots & a_{k+1}^{m_+} &\\ & & &  &  \ddots & &\ddots \end{array}\right)\xi,
	\end{align}
so difference operators can be equivalently described as those whose matrices are {finite band} (i.e. have only finitely many non-zero diagonals). Furthermore, denoting, for every $j$, the sequence of $a_k^j$'s by $a^j$, formula \eqref{dodef} can be rewritten as
	\begin{align}\label{genDiffOp}
\D = \!\sum_{j = m_-}^{m_+} a^j T^j,
\end{align}
where $T \colon \R^\infty \to \R^\infty$ is the left shift operator $(T\xi)_k = \xi_{k+1}$, and each $a^j \in \R^\infty$ acts on $\R^\infty$ by term-wise multiplication. 
\par
The \textit{order} of difference operator \eqref{dodef} is the number $\ord \D := m_+ - m_-$. Difference operator~\eqref{dodef} is called \textit{properly bounded} if $a_k^{m_-} \!\neq 0$ and $a_k^{m_+} \!\neq 0$ for every $k \in \Z$. Clearly, for a properly bounded difference operator $\D$ one has
$
\dim \Ker \D = \ord \D.
$ Sequences $\xi^{1}, \dots, \xi^{d} \in \Ker \D$, where $d:=\ord \D$, form a basis in $\Ker \D$ if and only if the associated \textit{difference Wronskian}
$$
W_k :=  \left|\begin{array}{ccc}\xi^{1}_{k} & \dots & \xi^{d}_{k} \\ & \dots   \\  \xi^{1}_{k+d-1}  & \dots & \xi^{d}_{k+d-1} \end{array}\right|,
$$
where $|M|$ stands for the determinant of the matrix $M$, is non-vanishing for some $k \in \Z$. This is equivalent to non-vanishing of $W_k$ for any $k$ due to the relation
\begin{equation}\label{wrons}
  W_{k+1} = (-1)^d \frac{a_{k-m_-}^{m_-}}{a_{k-m_-}^{m_+}}  W_k.
\end{equation}
\par
Along with $\R^\infty$, difference operators naturally act on the space $(\R^d)^\infty$ of bi-infinite sequences of vectors in $\R^d$. The case $d = \ord \D$ is of particular interest. Let $V  \in (\R^d)^\infty$, where $d = \ord \D$, be a solution of the difference equation $\D V = 0$. Define scalar sequences $\xi^{1}, \dots, \xi^{d} \in \R^\infty$ by setting $\xi_k^j$ to be equal to the $j$'th coordinate of $V_k$. We say that $V$ is a \textit{fundamental solution} if the sequences $\xi^{1}, \dots, \xi^{d} \in \R^\infty$ form a basis in $\Ker \D$. As follows from the Wronskian criterion, a solution of $\D V = 0$ is fundamental if and only if the vectors $V_k, \dots, V_{k + d -1}$ are linearly independent for some (equivalently, for all) $k \in \Z$.

A difference operator $\D$ is \textit{$n$-periodic} if its coefficients $a_k^j$ are $n$-periodic in the index $k$. This is equivalent to saying that $\D$ commutes with the $n$'th power of the shift operator: $\D T^n = T^n\D$, so the kernel of an $n$-periodic operator $\D$ is invariant under the action of $T^n$. The finite-dimensional operator $T^n\vert_{\Ker \D}$ is called the \textit{monodromy} of $\D$. Note that the eigenvectors of the monodromy are exactly scalar {quasi-periodic} solutions of the equation $\D\xi = 0$, i.e. solutions which belong to the space
\begin{equation}\label{qspace}
\v{z} := \{ \xi \in \R^\infty \mid \xi_{k+n} = z\xi_k\}
\end{equation} 
for some $z \in \R^*$. \par

The monodromy can also be understood in terms of fundamental solutions. 
Namely, notice that any two fundamental solutions $V, V' \in  (\R^d)^\infty$ of $\D$ are related by $V' = AV$, where $A \in \GL_d(\R)$ acts on $(\R^d)^\infty$ by term-wise multiplication. Furthermore, if $V$ is a fundamental solution of an $n$-periodic operator $\D$, then so is $T^nV$, which means that $T^n V =  AV$ for some $A \in \GL_d(\R)$. In other words, we have $V_{k+n} = AV_k$ for every $k \in \Z$, which means that the fundamental solution of a periodic difference operator is always quasi-periodic. Furthermore, the matrix $A$ can be easily seen to be the transpose of the monodromy matrix $M$ of $\D$, written in the basis of $\Ker \D$ associated with the fundamental solution~$V$.  In particular, this implies that the Wronskian of an $n$-periodic operator $\D$ satisfies $W_{k+n} = (\det M)W_k$. Combined with
 \eqref{wrons}, the latter formula gives the following expression for the determinant of the monodromy, which will be used several times throughout the paper:
\begin{equation}\label{monodet}
\det M = (-1)^{nd} \prod_{k=1}^n \frac{ a_k^{m_-}}{a_k^{m_+}}.
\end{equation}


The \textit{dual} of the operator~\eqref{genDiffOp} is defined by
 $$
 \D^* := \sum_{j = l}^{m} T^{-j} a^j = \sum_{j = l}^{m} \tilde a^jT^{-j},
 $$
 where $\tilde a^j_k = a_{k-j}^j$.
In other words, 
$\D^*$ is the formal adjoint of $\D$ with respect to the $l^2$ inner product on $\R^\infty$, i.e.
$
\langle \xi, \D \eta \rangle = \langle \D^*\xi,  \eta \rangle
$
whenever at least one of these inner products is well-defined. In the periodic case, the duality between $\D$ and $\D^*$ can also be understood as follows. If $\D$ is an $n$-periodic operator, then $\D^*$ is $n$-periodic as well. Furthermore, the formula
\begin{equation}\label{pairing}
\langle \xi, \eta \rangle := \sum_{k=1}^n \xi_k\eta_k
\end{equation}
defines an inner product on the space $\v{1}$ of $n$-periodic sequences, and the restrictions of $\D$ and $\D^*$ to  $\v{1}$ are dual to each other  with respect to this inner product. More generally, for every $z \in \R^*$, the restriction of  $\D$ to $\v{z}$ is dual to the restriction of $\D^*$ to $\v{z^{-1}}$ with respect to the pairing between $\v{z}$ and $\v{z^{-1}}$ given by the same formula \eqref{pairing}. As a corollary, we have
$$
\dim \Ker \D^*\vert_{\v{z^{-1}}} = \dim \Ker \D\vert_{\v{z}}.
$$
In particular, a non-zero number $z \in \R^*$ is an eigenvalue of the monodromy of $\D$ if and only if $z^{-1}$ is an eigenvalue of the monodromy of $\D^*$. 

\par 

\subsection{Difference operators and polygons}\label{sec:polygons}
In this section we describe the space of projective equivalence classes of planar polygons as a certain quotient of third order difference operators.

\begin{definition}
A \textit{polygon} in $\RP^2$ is a bi-infinite sequence of points $v_k \in \RP^2$ satisfying the following \textit{$3$-in-a-row} condition: for every $k \in \Z$ the points $v_{k-1}, v_k, v_{k+1}$ are in general position.
\end{definition}
Polygons modulo projective transformations can be encoded by means of properly bounded third order difference operators, i.e. operators of the form
\begin{equation}\label{o3op}
\D = aT^j+ bT^{j+1} + cT^{j+2} + dT^{j+3},
\end{equation}
where $a,b,c,d \in \R^\infty$ are such that $a_k \neq 0$, $d_k \neq 0$ for any $k \in \Z$. 
\begin{proposition}[cf. Proposition 4.1 of \cite{ovsienko2010pentagram}]\label{freedom}
For any $j \in \Z$, there is a one-to-one correspondence between projective equivalence classes of planar polygons and properly bounded difference operators $\D$ supported in $[j,j+3]$, considered up to the action $\D \mapsto \lambda\circ\D\circ \mu^{-1}$, where $\lambda, \mu \in (\R^*)^\infty$ are sequences of non-zero real numbers, acting on $\R^\infty$ by term-wise multiplication.

\end{proposition}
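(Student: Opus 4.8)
The plan is to construct explicit maps in both directions and check that they are mutually inverse. I will fix $j=0$ throughout, since replacing $j$ by an arbitrary integer merely re-indexes the coefficient sequences and changes nothing essential.

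First I would pass from a polygon to an operator. Given a polygon $(v_k)$ in $\RP^2$, choose arbitrary lifts $V_k\in\R^3\setminus\{0\}$ with $v_k=[V_k]$. The $3$-in-a-row condition says $V_{k-1},V_k,V_{k+1}$ span $\R^3$ for every $k$, so $V_{k+3}$ lies in the span of $V_k,V_{k+1},V_{k+2}$; hence there is a linear relation $a_kV_k+b_kV_{k+1}+c_kV_{k+2}+d_kV_{k+3}=0$, unique up to a $k$-dependent scalar since four vectors spanning a $3$-dimensional space satisfy a $1$-dimensional space of relations. General position of the triples $\{V_k,V_{k+1},V_{k+2}\}$ and $\{V_{k+1},V_{k+2},V_{k+3}\}$ forces $a_k\neq 0$ and $d_k\neq 0$, so collecting the coefficients into sequences produces a properly bounded operator $\D=aT^0+bT^1+cT^2+dT^3$ supported in $[0,3]$ with $\D V=0$; since consecutive triples of the $V_k$ are independent, $V$ is a fundamental solution of $\D$. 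In the opposite direction, given such a $\D$, proper boundedness gives $\dim\Ker\D=\ord\D=3$; picking a basis $\xi^1,\xi^2,\xi^3$ of $\Ker\D$ and forming $V_k=(\xi^1_k,\xi^2_k,\xi^3_k)$, the Wronskian criterion together with \eqref{wrons} shows $\det(V_k\mid V_{k+1}\mid V_{k+2})\neq 0$ for all $k$, so $v_k:=[V_k]$ is a polygon.

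The heart of the argument is matching the ambiguities on the two sides. On the polygon side two things are free: the projective transformation $v_k\mapsto gv_k$ and the lift $V_k\mapsto\mu_kV_k$ with $\mu\in(\R^*)^\infty$; and once a lift is fixed, the relation — hence the operator — is still defined only up to $a_k^i\mapsto\lambda_ka_k^i$ with $\lambda\in(\R^*)^\infty$. I would check that lifting $g$ to $\GL_3(\R)$ leaves all relation coefficients unchanged, so projective transformations act trivially on $\D$, while rescaling the lift by $\mu$ and the relation by $\lambda$ carries $\D$ to exactly $\lambda\circ\D\circ\mu^{-1}$ — a one-line coefficient computation. Conversely $\Ker(\lambda\circ\D\circ\mu^{-1})=\mu\cdot\Ker\D$, so a fundamental solution of the gauged operator is $\mu V$ and yields the same polygon $[\mu_kV_k]=[V_k]$, while two fundamental solutions of one operator differ by a fixed $A\in\GL_3(\R)$ and hence give projectively equivalent polygons. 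Finally the two round trips are identities: polygon $\to$ operator $\to$ polygon recovers the original class because the lift $V$ itself may be used as the fundamental solution, and operator $\to$ polygon $\to$ operator recovers the original gauge class because the relations among the $V_k$ coming from a fundamental solution are precisely the equations $\D V=0$ written out in coordinates, which pin down $\D$ up to the $\lambda$-scaling.

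I expect the only real obstacle to be the bookkeeping in this last step: one must be sure that the group $\{(\lambda,\mu)\}\subset(\R^*)^\infty\times(\R^*)^\infty$ accounts for \emph{all} of the ambiguity while introducing \emph{no} spurious identifications. The way to keep this under control is to resist quotienting prematurely — carry the lift-rescaling parameter $\mu$ and the relation-normalization parameter $\lambda$ explicitly through every step and pass to the quotient only at the very end.
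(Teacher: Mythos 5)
Your proposal is correct and follows essentially the same route as the paper's proof: encode the polygon as a fundamental solution of the operator (equivalently, read off the unique-up-to-scalar linear relation among four consecutive lifts), and then match the lift-rescaling and relation-normalization ambiguities against the $\lambda,\mu$ gauge action. The paper is terser about the round-trip verification, but the content is the same.
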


\begin{proof}
Given a properly bounded difference operator $\D$ supported in $[j,j+3]$, consider its fundamental solution $V$, which is a sequence of non-zero vectors in $\R^3$  (see Section~\ref{sec:primer}). Each term $V_k$ of that sequence determines a point $v_k \in \RP^2$ with homogeneous coordinates given by $V_k$. Furthermore, since $V$ is a fundamental solution, the vectors $V_k$, $V_{k+1}$, $V_{k+2}$ are linearly independent, and thus the sequence  $v_k \in \RP^2$ satisfies the {$3$-in-a-row} condition. Notice also that since the fundamental solution $V$ is unique up to a linear transformation $V \mapsto AV$, it follows that the polygon $\{v_k\}$ is well-defined up to projective equivalence. Thus, with each properly bounded difference operator $\D$ supported in $[j,j+3]$ one can associate a polygon $\{v_k\}$, defined up to a projective transformation. Conversely, given a polygon $\{v_k\}$, one can revert this construction to obtain a properly bounded difference operator $\D$ supported in $[j,j+3]$. To that end, one first lifts every point $v_k \in \P^2$ to a vector $V_k \in \R^3$, and then finds an operator $\D$ whose fundamental solution is given by $V$. Since the lifts $V_k$ of the points $v_k$ are unique up to a transformation $V_k \mapsto \mu_k V_k$, while the choice of an operator $\D$ with a given fundamental solution $V$ is unique up to $\D \mapsto \lambda \circ \D$, where $\lambda\in (\R^*)^\infty$, it follows that the operator $\D$ corresponding to a given polygon is defined up to the action  $\D \mapsto \lambda \circ \D \circ \mu^{-1}$, as desired.
\end{proof}

In what follows, we will be interested in closed and, more generally, twisted polygons. A closed $n$-gon is a polygon satisfying $v_{k+n} = v_k$ for every $k \in \Z$. For such a polygon, the corresponding difference operator $\D$ can be chosen to be $n$-periodic. The converse is, however, not true: polygons corresponding to periodic operators are, in general, not closed but twisted. Indeed, if $\D$ is a periodic operator, then its fundamental solution $V \in (\R^3)^\infty$ is in general not periodic but satisfies $ V_{k+n} = AV_k$, where $A \in \mathrm{GL}_3(\R)$ is the transposed monodromy of~$\D$. Therefore, the corresponding polygon satisfies $v_{k+n} = \psi(v_k)$, where $\psi \in \P \mathrm{GL}_3(\R)$ is the projective transformation determined by the linear operator~$A$. 
\begin{definition}
A \textit{twisted $n$-gon} in $\RP^2$ is a polygon $\{v_k\}$ which satisfies $v_{k+n} = \psi(v_k)$ for some projective transformation $\psi \in \P \mathrm{GL}_3$, called the \textit{monodromy of the polygon}.
\end{definition}
The above construction (see the proof of Proposition \ref{freedom}) allows one to identify the space of projective equivalence classes of twisted $n$-gons with an appropriate quotient of the space of $n$-periodic properly bounded difference operators supported in  $[j, j+3]$. Under this identification, closed polygons correspond to those operators whose monodromy is  a scalar multiple of the identity (furthermore, one can arrange that the monodromy of an operator corresponding to a closed polygon is exactly the identity).\par

\begin{remark}
One can adapt the proof of Proposition \ref{freedom} to show that for every $j \in \Z$ there is a one-to-one correspondence between projective equivalence classes of twisted planar $n$-gons and properly bounded $n$-periodic difference operators $\D$ supported in $[j,j+3]$, considered up to the action $\D \mapsto \lambda\circ\D\circ \mu^{-1}$, where $\lambda, \mu \in (\R^*)^\infty$ are sequences of non-zero real numbers which are $n$-quasi-periodic and have the same monodromy. In other words, one has $\lambda, \mu \in (\R^*)^\infty \cap \v{z}$ for some $z \in \R^*$ where the space $\v{z}$ is defined by \eqref{qspace}. We will not use the result in this form in the paper, so we omit the proof.

We also mention that when $n$ is not divisible by $3$, the above described action of quasi-periodic sequences on difference operators admits a section given by $n$-periodic operators of the form
$
T^j + b T^{j+1} + aT^{j+2} - T^{j+3}.
$ This is essentially the content of  \cite[Proposition 4.1]{ovsienko2010pentagram}.
As a result, the entries of periodic sequences $a$, $b$ constitute a global coordinate system on the space of projective equivalence classes of twisted planar $n$-gons. Such a coordinate system is no longer available when $n$ is divisible by~$3$, however our description of the twisted $n$-gons space as a quotient remains valid.
\end{remark}

Dual difference operators correspond to projectively dual polygons. Recall that the dual of a polygon is the polygon in the dual projective plane whose vertices are the sides of the initial one. Note that while there is, in general, no canonical way to label the vertices of the dual polygon, polygons with odd number of vertices admit one particular labeling which is more symmetric than the others. This labeling is depicted in Figure \ref{Fig:dual}. For closed polygons, such labeling makes projective duality an involution. For twisted polygons, it is only an involution up to the action of the monodromy, but still an actual involution on projective equivalence classes.
\begin{definition}\label{def:dual}
Let $P$ be a closed or twisted $n$-gon with odd $n$. Then the $k$'th vertex of its \textit{dual polygon} $P^*$ is the side of $P$ which joins the vertices with indices ${k+(n-1)\,/\,2}$, ${k+(n+1)\,/\,2}$. A polygon $P$ is called \textit{self-dual} if it is projectively equivalent to its dual polygon $P^*$.
\end{definition}
\begin{remark}
Closed self-dual polygons are studied in \cite{fuchs2009self}. In that paper, polygons which are self-dual in the sense of Definition \ref{def:dual} are called $n$-self-dual (where $n$ is the number of vertices).
\end{remark}
With our definition of duality, we have the following:
\begin{proposition}\label{prop:duality}
Let $n$ be odd. Consider a $n$-periodic properly bounded difference operator supported in $[(n-3)/2, (n+3)/2]$
and its dual operator
$
\D^*.
$
Then the polygons corresponding to $\D$ and $\D^*$ are dual to each other in the sense of Definition \ref{def:dual}.
\end{proposition}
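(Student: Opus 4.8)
The plan is to unwind both sides of the claimed statement through the dictionary of Proposition \ref{freedom} and reduce everything to a linear-algebra identity about fundamental solutions. Recall that the polygon attached to $\D$ is obtained from a fundamental solution $V \in (\R^3)^\infty$ of $\D V = 0$, and the polygon attached to $\D^*$ from a fundamental solution $U \in (\R^3)^\infty$ of $\D^* U = 0$; both are well-defined up to the left $\GL_3(\R)$-action, hence projectively canonical. The dual polygon $P^*$ in the sense of Definition \ref{def:dual} has as its $k$'th vertex the \emph{line} through $v_{k+(n-1)/2}$ and $v_{k+(n+1)/2}$, which in homogeneous coordinates on the dual plane is represented by the cross product $V_{k+(n-1)/2} \times V_{k+(n+1)/2} \in (\R^3)^*$. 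So the proposition amounts to: the sequence $k \mapsto V_{k+(n-1)/2}\times V_{k+(n+1)/2}$ is, up to the left $\GL_3$-action and a shift bookkeeping, a fundamental solution of $\D^*$.

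First I would fix the indexing. Write $\D = \sum_{j=(n-3)/2}^{(n+3)/2} a^j T^j$, a third-order operator, so that $\D^* = \sum_{j} \tilde a^j T^{-j}$ with $\tilde a^j_k = a^j_{k-j}$ is again supported in $[-(n+3)/2, -(n-3)/2]$; after conjugating by a shift (which is allowed, since Proposition \ref{freedom} lets us place the support window wherever we like and the correspondence is shift-covariant) this is an honest third-order properly bounded operator, and its associated polygon is computed from its fundamental solution. Next I would define $U_k := V_{k} \times V_{k+1}$ (the ``companion'' or ``adjugate'' sequence) and check two things: (i) the vectors $U_k, U_{k+1}, U_{k+2}$ are linearly independent for every $k$ — this follows because $U_k, U_{k+1}, U_{k+2}$ are, up to a nonzero scalar (a ratio of Wronskians, controlled by \eqref{wrons}), the columns of the inverse transpose of the matrix $[V_k\, V_{k+1}\, V_{k+2}]$, which is invertible by fundamentality — so $U$ is a fundamental solution of \emph{some} third-order properly bounded operator $\tilde\D$; and (ii) that operator is exactly $\D^*$. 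For (ii) I would verify the annihilation identity directly: using $\langle V_{k+i}\times V_{k+i+1},\, V_{k+j}\rangle = 0$ whenever $j \in \{i, i+1\}$, the relation $\sum_j a^j_k V_{k+j} = 0$ dualizes, via the defining property $\langle \xi, \D\eta\rangle = \langle \D^*\xi, \eta\rangle$ of the adjoint, into a three-term (i.e. fourth-coefficient) linear recursion on the $U_k$ with coefficients read off from the $\tilde a^j$; matching this against the general form \eqref{o3op} and using the normalization freedom $\D^* \mapsto \lambda\circ\D^*\circ\mu^{-1}$ pins down $\tilde \D = \D^*$ up to the allowed action, which is all that is needed.

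The main obstacle — and the reason the paper relegates this to an appendix — is the bookkeeping in step (ii): keeping track of the index shift by $(n\pm1)/2$ built into Definition \ref{def:dual}, reconciling it with the shift by which $\D^*$ sits inside its natural support window, and confirming that the cross-product sequence $U_k = V_k\times V_{k+1}$ really represents the vertices of $P^*$ under the \emph{opposite-vertex} labeling of Figure \ref{Fig:dual} rather than some rotated labeling. One clean way to handle this is to avoid coordinates entirely: observe that the line $(v_{k+(n-1)/2}, v_{k+(n+1)/2})$ is represented by $V_{k+(n-1)/2}\times V_{k+(n+1)/2}$, note that $n$ odd forces $(n-1)/2 + (n+1)/2 = n \equiv 0$, so after applying the monodromy this is projectively the same as $V_{k'}\times V_{k'+1}$ for an appropriate $k'$, and then the whole statement becomes the shift-covariance of the correspondence in Proposition \ref{freedom} together with the coordinate-free assertion ``$U_k = V_k\times V_{k+1} \implies U$ is a fundamental solution of $\D^*$'', which is the genuinely computational lemma and is exactly what the appendix proves.
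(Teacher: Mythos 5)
Your route is genuinely different from the paper's. You reduce everything to the classical adjugate computation: set $U_k = V_k\times V_{k+1}$ and verify directly that this sequence, suitably rescaled term-wise, is a fundamental solution of $\D^*$. The paper instead proves the more general Proposition \ref{dualdual} by a coordinate-free argument: it constructs an infinite matrix $\mathcal L = \hat\D^{-1}-\check\D^{-1}$ (difference of the inverses of $\D$ in the two groups $\GL_\infty^{\pm}$) satisfying $\D\mathcal L=\mathcal L\D=0$ with prescribed vanishing diagonals, and reads the dual polygon off the rows and columns of $\mathcal L$. The introduction states explicitly that the point of the appendix is to give a proof that does \emph{not} rely on a computation (the computational proof being Proposition 4.4.3 of the cited Morier-Genoud et al. reference); your proposal is essentially that computational proof. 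That is a legitimate alternative in principle, and it works in any dimension if one replaces the cross product by the appropriate $(d-1)$-fold exterior product.

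However, as written your step (ii) has a genuine gap. You claim the pointwise relation $\sum_j a^j_k V_{k+j}=0$ "dualizes, via the defining property $\langle\xi,\D\eta\rangle=\langle\D^*\xi,\eta\rangle$," into the needed four-term recursion on the $U_k$. It does not: that adjoint identity is a statement about the $\ell^2$ pairing of \emph{scalar} sequences (a sum over $k$), and it cannot by itself produce the pointwise identity $\sum_j \tilde a^j_k\,\lambda_{k-j}U_{k-j}=0$ in $(\R^3)^*$. What one actually has to do is: observe that the four consecutive covectors $V_{k+i}\times V_{k+i+1}$, $i=0,1,2,3$, satisfy a unique (up to scale) linear relation; compute its coefficients by pairing against the basis $V_{k+1},\dots$ and using the recursion $\D V=0$ to reduce the resulting $3\times 3$ determinants to Wronskians; and then check that the ratios of these coefficients to the coefficients $\tilde a^j$ of $\D^*$ define a \emph{consistent} gauge sequence $\lambda_k$ — the consistency being exactly where relation \eqref{wrons} enters. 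That determinant bookkeeping is the entire content of the lemma, and your proposal asserts its conclusion rather than carrying it out. (A smaller quibble: the appeal to the monodromy to handle the index shift by $(n-1)/2$ is unnecessary — Proposition \ref{freedom} is stated for an arbitrary support window $[j,j+3]$, so one only needs to show that $k\mapsto V_{k+(n-1)/2}\times V_{k+(n+1)/2}$ is a fundamental solution of $\D^*$ in its natural window $[-(n+3)/2,-(n-3)/2]$.)
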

\begin{proof}
This follows from more general Proposition \ref{dualdual} in the appendix.
\end{proof}

\subsection{Corner invariants and rescaling}\label{sec:ci}

Another description of the space of polygons modulo projective transformations is by means of so-called \textit{corner invariants}~\cite{schwartz2008discrete, ovsienko2010pentagram}. To every vertex $v_k$ of a polygon, one associates two cross-ratios $x_k, y_k$, as shown in Figure \ref{CI}. The definition of the cross-ratio that we use is
$$
[t_1, t_2,t_3, t_4] := \frac{(t_1 - t_2)(t_3 - t_4)}{(t_1 - t_3)(t_2 - t_4)}.
$$
\begin{figure}[t]
\centering
\begin{tikzpicture}[scale = 1, rotate = -90]
\draw [line width=0.2mm]  (0.7,0.4) -- (1,1) -- (2,1.5) -- (3,1) -- (3.3,0.4);
\fill (0.7,0.4) circle [radius=2pt];
\coordinate [label={$v_{k+2}$}]() at (0.7, 0.4);
\fill (1,1) circle [radius=2pt];
\coordinate [label={45:$v_{k+1}$}]() at (1,1);
\fill (2,1.5) circle [radius=2pt];
\coordinate [label={left:$v_{k}$}]() at (2,1.5);
\fill (3,1) circle [radius=2pt];
\coordinate [label={-45:$v_{k-1}$}]() at (3,1);
\fill (3.3,0.4) circle [radius=2pt];
\coordinate [label={below:$v_{k-2}$}]() at (3.3,0.4);
\fill (2,3) circle [radius=2pt];
\coordinate [label={right:$\bar v_{k}$}]() at (2,3);
\fill (1.4,1.8) circle [radius=2pt];
\coordinate [label={45:$\hat v_{k}$}]() at (1.4,1.8);
\fill (2.6,1.8) circle [radius=2pt];
\coordinate [label={-45:$\tilde v_{k}$}]() at (2.6,1.8);
\draw [dashed] (1,1) -- (2, 3);
\draw [dashed] (1.4, 1.8) -- (2,1.5) ;
\draw [dashed] (2, 3) -- (3,1);
\draw [dashed] (2,1.5) -- (2.6,1.8);
\node at (1.75,6) () {$x_k = [v_{k-2}, v_{k-1}, \tilde v_k, \bar v_k]$};
\node at (2.25,6) () {$y_k = [\bar v_k, \hat v_k, v_{k+1}, v_{k+2}]$};
\end{tikzpicture}
\caption{The definition of corner invariants.}\label{CI}
\end{figure}
\begin{remark}Note that the definition of $x_k, y_k$ requires somewhat more than the $3$-in-a-row condition. However, we do not need to care about this, since in this paper we will only be dealing with weakly convex polygons (see Definition \ref{def:lcp} below) for which the numbers $x_k, y_k$ are well-defined by definition.\end{remark}

Clearly, the sequences $x_k$, $y_k$ only depend on the projective equivalence class of the polygon. Furthermore, in the twisted case these sequences are $n$-periodic, and  $\{ x_1, y_1, \dots, x_n, y_n\}$ is a coordinate chart on an open dense subset of twisted $n$-gons modulo projective transformations. Therefore, since the pentagram map preserves the space of twisted polygons and commutes with projective transformations, it can be written in terms of the $(x,y)$ coordinates. The following formulas are obtained in~\cite{ovsienko2010pentagram}: \begin{proposition} One has
\begin{align}\label{pentFormulas}
x_k' = x_k \frac{1 - x_{k-1}y_{k-1}}{1 - x_{k+1}y_{k+1}},\quad
y_k' = y_{k+1} \frac{1 - x_{k+2}y_{k+2}}{1 - x_{k}y_{k}},
\end{align}
where $x_k,y_k$ are the corner invariants of the polygon $P$, and $x_k', y_k'$ are corner invariants of its pentagram image $P'$.
\end{proposition}
 Here we label $P'$ as in \cite{ovsienko2010pentagram}. The labeling used in Figure \ref{Fig:labeling} leads to the same formulas with a certain shift of indices. Although we will never use these explicit formulas, we will use the following corollary: the pentagram map, with any labeling of vertices, commutes with the $1$-parametric group of transformations $R_s$ given by
\begin{equation}\label{rescaling}
R_s \colon x_k \mapsto sx_k, \quad y_k \mapsto s^{-1}y_k.
\end{equation}
These transformations are known as \textit{rescaling}. They were introduced in \cite{ovsienko2010pentagram} to prove that the pentagram map is a completely integrable system.
\par
We now discuss the relation between two representations of the space of polygons: in terms of difference operators, and in terms of corner invariants.
%
\begin{proposition}\label{prop:xyviad}  Assume we are given a polygon $P$ defined by a difference operator \eqref{o3op} supported in $[j,j+3]$. Then the corner invariants of $P$ are given by \begin{equation}\label{xyviad}
x_{k+ j + 2} = \frac{c_{k}a_{k+1}}{b_{k} b_{k+1}}, \quad y_{k+ j + 2} = \frac{d_kb_{k+1}}{c_{k} c_{k+1}}.
\end{equation}

\end{proposition}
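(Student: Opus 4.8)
The plan is to fix the support of $\D$, reduce to the case $j=0$ by reindexing, and then compute directly in homogeneous coordinates, locating the auxiliary points in the definition of the corner invariants by solving the third‑order recurrence satisfied by a fundamental solution of $\D$.

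First, the reduction. If $V\in(\R^3)^\infty$ is a fundamental solution of $\D=aT^j+bT^{j+1}+cT^{j+2}+dT^{j+3}$, then $\tilde V_k:=V_{k+j}$ is a fundamental solution of the (again properly bounded) operator $aT^0+bT^1+cT^2+dT^3$, and the polygon it defines is the polygon of $\D$ translated by $j$, namely $\tilde v_k=v_{k+j}$. Since corner invariants are equivariant under such translations, it suffices to prove \eqref{xyviad} when $j=0$, i.e. $x_{k+2}=c_ka_{k+1}/(b_kb_{k+1})$ and $y_{k+2}=d_kb_{k+1}/(c_kc_{k+1})$. So let $V$ be a fundamental solution of $\D=aT^0+bT^1+cT^2+dT^3$, so that $a_kV_k+b_kV_{k+1}+c_kV_{k+2}+d_kV_{k+3}=0$ for all $k$, and let $v_k=[V_k]\in\RP^2$ be the corresponding polygon (Proposition~\ref{freedom}).

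Next I would unwind the definition from Figure~\ref{CI} with the index substitution $m\mapsto k+2$: the four collinear points whose cross‑ratio is $x_{k+2}$ are $v_k$, $v_{k+1}$, $P_1:=(v_kv_{k+1})\cap(v_{k+2}v_{k+3})$ and $P_2:=(v_kv_{k+1})\cap(v_{k+3}v_{k+4})$, all lying on the line $v_kv_{k+1}$; and those for $y_{k+2}$ are $P_2$, $P_3:=(v_{k+1}v_{k+2})\cap(v_{k+3}v_{k+4})$, $v_{k+3}$, $v_{k+4}$, all lying on the line $v_{k+3}v_{k+4}$. Homogeneous coordinates for $P_1,P_2,P_3$ are then read off from the recurrence: the $k$‑th relation gives the nonzero vector $-a_kV_k-b_kV_{k+1}=c_kV_{k+2}+d_kV_{k+3}$, which lies on both $v_kv_{k+1}$ and $v_{k+2}v_{k+3}$ and so represents $P_1$; the $(k+1)$‑st relation likewise represents $P_3$; and eliminating $V_{k+2}$ between the $k$‑th and $(k+1)$‑st relations (take $b_{k+1}$ times the first minus $c_k$ times the second) produces a vector that is simultaneously a combination of $V_k,V_{k+1}$ and of $V_{k+3},V_{k+4}$, hence represents $P_2$. (All these combinations are nonzero, and the five points distinct, whenever the cross‑ratios are defined, in particular for weakly convex polygons; alternatively, both sides of \eqref{xyviad} are rational in the entries of \eqref{infMatrix} and it is enough to check the identity on a dense set.)

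Finally I would substitute into the cross‑ratio $[t_1,t_2,t_3,t_4]=\frac{(t_1-t_2)(t_3-t_4)}{(t_1-t_3)(t_2-t_4)}$. In the parametrization $[s:t]\mapsto[sV_k+tV_{k+1}]$ of the line $v_kv_{k+1}$ one gets $v_k=[1:0]$, $v_{k+1}=[0:1]$, $P_1=[a_k:b_k]$, $P_2=[a_kb_{k+1}:b_kb_{k+1}-a_{k+1}c_k]$, and the cross‑ratio of these four points collapses to $c_ka_{k+1}/(b_kb_{k+1})$; the analogous computation on $v_{k+3}v_{k+4}$ gives $d_kb_{k+1}/(c_kc_{k+1})$. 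This last step is short and purely mechanical; the only real care is needed in reading the auxiliary points correctly off Figure~\ref{CI} and keeping the index shift by $j$ straight. There is thus no conceptual obstacle — the statement simply records that the intersection points occurring in the corner‑invariant construction are exactly the linear dependences among the $V_k$'s that are encoded by the $2\times2$ minors of \eqref{infMatrix}.
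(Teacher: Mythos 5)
Your proposal is correct, and it is essentially the paper's proof: the paper simply defers to ``a computation following the lines of Lemma 4.5 of Ovsienko--Schwartz--Tabachnikov,'' and your argument is exactly that computation carried out — identifying $\tilde v_{k+2}$, $\bar v_{k+2}$, $\hat v_{k+2}$ with the linear dependences extracted from the $k$'th and $(k+1)$'st rows of the operator and evaluating the two cross-ratios in the parametrizations of the lines $(v_k,v_{k+1})$ and $(v_{k+3},v_{k+4})$. I checked the final arithmetic ($[0,\infty,\,b_k/a_k,\,(b_kb_{k+1}-a_{k+1}c_k)/(a_kb_{k+1})]=c_ka_{k+1}/(b_kb_{k+1})$ and the analogous computation for $y$), and it comes out as you state.
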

\begin{proof}
The proof is a computation following the lines of the proof of Lemma 4.5 in~\cite{ovsienko2010pentagram}.
\end{proof}
Formulas \eqref{xyviad} allow one to describe rescaling operation  \eqref{rescaling} in terms of difference operators:
\begin{corollary}\label{cor:rescalingDO}
In terms of difference operators, rescaling \eqref{rescaling} can be defined as 
$$
a T^j + bT^{j+1} + cT^{j+2} + dT^{j+3} \mapsto aT^j + bT^{j+1} + s(cT^{j+2} + dT^{j+3}).
$$
\end{corollary}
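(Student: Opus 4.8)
The plan is to read off the effect of the map $aT^j + bT^{j+1} + cT^{j+2} + dT^{j+3} \mapsto aT^j + bT^{j+1} + s(cT^{j+2} + dT^{j+3})$ on the corner invariants via the explicit formulas \eqref{xyviad} of Proposition \ref{prop:xyviad}, and to check that this matches the rescaling \eqref{rescaling}, i.e. $x_k \mapsto sx_k$, $y_k \mapsto s^{-1}y_k$. First I would note that the proposed operation sends the coefficient sequences $(a,b,c,d)$ to $(a,b,sc,sd)$, so in the notation of \eqref{xyviad} the new coefficients are $a_k' = a_k$, $b_k' = b_k$, $c_k' = sc_k$, $d_k' = sd_k$. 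Substituting into \eqref{xyviad}:
\[
x_{k+j+2}' = \frac{c_k' a_{k+1}'}{b_k' b_{k+1}'} = \frac{(sc_k)a_{k+1}}{b_k b_{k+1}} = s\,x_{k+j+2},
\]
\[
y_{k+j+2}' = \frac{d_k' b_{k+1}'}{c_k' c_{k+1}'} = \frac{(sd_k)b_{k+1}}{(sc_k)(sc_{k+1})} = s^{-1}\,y_{k+j+2}.
\]
This is exactly the map $R_s$ of \eqref{rescaling}, so the two operations coincide on the level of corner invariants.

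To turn this computation into a proof I still need two small observations. First, that the proposed operation is well-defined on the quotient of difference operators modulo the action $\D \mapsto \lambda \circ \D \circ \mu^{-1}$ from Proposition \ref{freedom}: multiplying $\D$ on the left by $\lambda \in (\R^*)^\infty$ and on the right by $\mu^{-1}$ scales each coefficient sequence by a nonzero diagonal factor, and the operation $(a,b,c,d) \mapsto (a,b,sc,sd)$ obviously commutes with such diagonal rescalings, so it descends to the quotient — i.e. to a well-defined transformation of projective equivalence classes of polygons. Second, since the corner invariants $\{x_k, y_k\}$ form a coordinate chart on an open dense subset of twisted $n$-gons modulo projective equivalence (as recalled before \eqref{pentFormulas}), a transformation of that space is determined by its action on the $(x,y)$-coordinates; hence the operation on difference operators defined above agrees with $R_s$ wherever both are defined, and therefore everywhere by continuity/density. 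Note also that the operation preserves proper boundedness, since $a_k \neq 0$ and $sd_k \neq 0$ for $s \neq 0$, so it indeed maps polygons to polygons.

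There is no serious obstacle here; the only thing requiring a modicum of care is the bookkeeping of indices between Proposition \ref{prop:xyviad} (where corner invariants are indexed with a shift $k + j + 2$) and the definition \eqref{rescaling} of $R_s$, but since the shift is the same for $x$ and $y$ and $R_s$ acts uniformly in $k$, it is immaterial. I would therefore present the proof as: invoke \eqref{xyviad}, perform the one-line substitution above, observe well-definedness on the quotient, and conclude by the fact that corner invariants determine the point in moduli space.
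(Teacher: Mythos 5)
Your proposal is correct and is essentially the paper's own proof: the paper simply observes that by the formulas \eqref{xyviad}, multiplying the $c$ and $d$ coefficients by $s$ multiplies $x_k$ by $s$ and $y_k$ by $s^{-1}$, which is exactly \eqref{rescaling}. Your explicit substitution and the remarks on well-definedness modulo the $\lambda,\mu$ action are a slightly more detailed write-up of the same one-line argument.
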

\begin{proof}
Formulas \eqref{xyviad} show that multiplying $c$ and $d$ coefficients by $s$ is equivalent to multiplying $x$ variables by $s$ and $y$ variables by $s^{-1}$, which is exactly rescaling \eqref{rescaling}.
\end{proof}
\begin{remark}
Note that since there are many operators corresponding to a given polygon, there are also many different ways to define the rescaling on operators. For example, the following formula defines the same operation on polygons as the formula provided above:
$$
a T^j + bT^{j+1} + cT^{j+2} + dT^{j+3} \mapsto a T^j + s^{-{1}/{3}}bT^{j+1} + s^{{1}/{3}}cT^{j+2} + dT^{j+3}.
$$
\end{remark}



\section{Weakly convex polygons}\label{sec:lcp}

In this section we define weakly convex polygons and describe their properties needed to prove Theorems~\ref{thm2} and \ref{thm1}. 
\begin{definition}\label{def:lcp}
A polygon is \textit{weakly convex} if its corner invariants are well-defined and  satisfy $$x_k > 0, \quad  y_k > 0, \quad x_ky_k < 1.$$
\end{definition}
\begin{proposition}\label{xiyi01}
Convex polygons are weakly convex.
\end{proposition}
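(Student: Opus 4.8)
The plan is to compute the corner invariants directly from their cross-ratio definition in Figure~\ref{CI}, exploiting that everything in sight is projectively invariant. Since convexity and the corner invariants $x_k,y_k$ are preserved by projective transformations, I may assume $P$ is strictly convex and lies in an affine chart $\R^2\subset\RP^2$ with vertices in cyclic order; here $n\ge 5$, so that $v_{k-2},\dots,v_{k+2}$ are five distinct vertices. Since a strictly convex polygon has no three collinear vertices, all the intersection points appearing in Figure~\ref{CI} exist and are distinct from $v_{k-2},\dots,v_{k+2}$, so the cross-ratios $x_k,y_k$ are well defined (and finite, being cross-ratios of four distinct points).

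Next I would rewrite $x_k$ and $y_k$ as cross-ratios on a conic. Reading Figure~\ref{CI}, $\tilde v_k=(v_{k-2}v_{k-1})\cap(v_kv_{k+1})$, $\bar v_k=(v_{k-2}v_{k-1})\cap(v_{k+1}v_{k+2})$ and $\hat v_k=(v_{k+1}v_{k+2})\cap(v_{k-1}v_k)$. Projecting the line $(v_{k-2}v_{k-1})$ from $v_{k+1}$ sends $v_{k-2},v_{k-1},\tilde v_k,\bar v_k$ to the four lines $(v_{k+1}v_{k-2}),(v_{k+1}v_{k-1}),(v_{k+1}v_k),(v_{k+1}v_{k+2})$, so $x_k$ is the cross-ratio of $v_{k-2},v_{k-1},v_k,v_{k+2}$ as seen from $v_{k+1}$; likewise, projecting $(v_{k+1}v_{k+2})$ from $v_{k-1}$ shows $y_k$ is the cross-ratio of $v_{k-2},v_k,v_{k+1},v_{k+2}$ as seen from $v_{k-1}$. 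Let $\mathcal C$ be the conic through $v_{k-2},\dots,v_{k+2}$, which is unique and smooth because the five points are in general position. By Chasles' theorem the cross-ratio of four points of $\mathcal C$ viewed from a fifth point of $\mathcal C$ does not depend on that point and equals the cross-ratio of the parameters of the four points under any rational parametrization $\RP^1\xrightarrow{\sim}\mathcal C$. Writing $t_j$ for the parameter of $v_j$, this yields
\[
x_k=[t_{k-2},t_{k-1},t_k,t_{k+2}],\qquad y_k=[t_{k-2},t_k,t_{k+1},t_{k+2}].
\]

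The decisive geometric input is that, since $v_{k-2},\dots,v_{k+2}$ are in strictly convex position, they occur on $\mathcal C$ in this cyclic order, so the parametrization can be chosen with $t_{k-2}<t_{k-1}<t_k<t_{k+1}<t_{k+2}$. Granting this, each of $x_k,y_k$ is a cross-ratio $[a,b,c,d]$ of four reals in increasing order, hence $[a,b,c,d]=\frac{(a-b)(c-d)}{(a-c)(b-d)}$ is a quotient of two positive numbers and is positive. Moreover the product telescopes,
\[
x_ky_k=[t_{k-2},t_{k-1},t_k,t_{k+2}]\cdot[t_{k-2},t_k,t_{k+1},t_{k+2}]=[t_{k-2},t_{k-1},t_{k+1},t_{k+2}],
\]
and for $a<b<d<e$ a short computation gives $[a,b,d,e]-1=\frac{(d-b)(a-e)}{(a-d)(b-e)}$, which is negative because the numerator is negative and the denominator positive. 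Hence $x_ky_k<1$, so $P$ is weakly convex.

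The step I expect to require the most care is the cyclic-order claim used above: five points in strictly convex position appear on their circumscribed conic in the same cyclic order. If $\mathcal C$ is an ellipse or a parabola this is immediate, since its real locus is then a convex curve passing through the five points. If $\mathcal C$ is a hyperbola, I would first show that a quintuple in convex position cannot meet both branches: if three of the points lay on one branch, with $P_0$ the middle one of the three along that branch, and a fourth point $Q$ lay on the other branch, then --- using that the other branch lies entirely on the concave side of any secant of the first --- $P_0$ would fall inside the triangle spanned by the two outer points of the three together with $Q$, contradicting convex position; hence all five points lie on one branch, which is again a convex arc, and the order along it is the convex-hull order.
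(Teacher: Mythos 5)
Your proof is correct, but it takes a genuinely different route from the paper's. The paper's argument is a two-liner: for a convex polygon the four \emph{collinear} points $v_{k-2},v_{k-1},\tilde v_k,\bar v_k$ on the line $(v_{k-2}v_{k-1})$ occur in exactly the cyclic order drawn in Figure~\ref{CI}, and four points of $\RP^1$ in that configuration have cross-ratio in $(0,1)$; hence $x_k\in(0,1)$, likewise $y_k\in(0,1)$, and $x_ky_k<1$ is immediate. You instead project each cross-ratio into the pencil of lines through a fifth vertex, invoke Chasles' theorem for the conic through $v_{k-2},\dots,v_{k+2}$, and reduce everything to cross-ratios of five increasing real parameters; the payoff is the pretty telescoping identity $x_ky_k=[t_{k-2},t_{k-1},t_{k+1},t_{k+2}]$, which exhibits $x_ky_k<1$ as itself a cross-ratio inequality, and the same parameter formulas in fact give the sharper conclusion $x_k,y_k\in(0,1)$ that the paper uses in the remark following the proposition (you only extract $x_k,y_k>0$). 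The cost is that you must justify the one nontrivial geometric input — that five convex-position points lie on a single convex arc of their conic in convex-hull order — which the paper's line-based argument avoids entirely. Your treatment of that point is essentially right but slightly under-argued in the hyperbola case: knowing that the second branch lies on the concave side of every secant of the first only places $P_0$ and $Q$ on the same side of the chord $P_{-1}P_1$; to conclude $P_0\in\mathrm{conv}(P_{-1},P_1,Q)$ you also need $P_0$ on the correct side of the lines $P_{-1}Q$ and $P_1Q$, which follows from an equally easy computation after normalizing the hyperbola to $xy=1$ (for the line through $(a,1/a)$ and $(-q,-1/q)$, namely $x-aq\,y=a-q$, the function $t\mapsto t-aq/t$ is increasing on $t>0$, so all branch points with parameter $>a$ lie on one side). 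With that small supplement your argument is complete.
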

\begin{proof}
For convex polygons, the collinear points $v_{k-2}$, $v_{k-1}$, $\tilde v_k$, $\bar v_k$ in Figure \ref{CI} are distinct and their cyclic order is exactly  as shown. So, $x_k$ is well-defined and  $0 < x_k< 1$. 
Likewise, we have $y_k \in (0,1)$. The result follows. 
\end{proof}

\begin{remark}
More generally, all corner invariants of a polygon satisfy $x_k, y_k \in (0,1)$ if and only if any five consecutive vertices of that polygon form a convex pentagon (where by a convex pentagon in $\RP^2$ we mean a pentagon which is convex in a suitable affine chart). So, all polygons satisfying this ``5-in-a-row'' condition are weakly convex. 
The geometric meaning of the general weak convexity condition is not that clear. 
However, it turns out to be really convenient for the purposes of the present paper.



\end{remark}




%


The following  is an exhaustive list of properties of weakly convex polygons needed for our purposes:

\begin{proposition}\label{alt}
Assume that $P$ is a closed or twisted weakly convex $n$-gon, where $n \geq 3$ is odd. Then:

\begin{enumerate} \item The corresponding third order $n$-periodic difference operator  \eqref{o3op} can be chosen in such a way that for all $k \in \Z$ we have
\begin{equation}\label{altCond}
a_k, c_k > 0, \quad b_k, d_k < 0.
\end{equation}
\item For any difference operator \eqref{o3op} corresponding to $P$ and satisfying \eqref{altCond}, consider the operators 
$\Dl := a T^j + bT^{j+1}, \Dr := cT^{j+2} + dT^{j+3}.$ Then the monodromies $z_l, z_r$ of these operators (which are real numbers because the operators are of first order) satisfy $0 < z_l < z_r$. In particular, we have $\Ker \Dl \cap \Ker \Dr = 0$.
\item Any polygon obtained from $P$ by means of rescaling \eqref{rescaling} with $s > 0$ is weakly convex. 
\item Any polygon obtained from $P$ by means of rescaling \eqref{rescaling} with $s < 0$ has monodromy with distinct eigenvalues. 
\end{enumerate}
\end{proposition}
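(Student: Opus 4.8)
The plan is to establish the four statements in order, using the dictionary between corner invariants and the coefficients of the difference operator provided by Proposition~\ref{prop:xyviad}, together with the normalization freedom $\D \mapsto \lambda \circ \D \circ \mu^{-1}$ from Proposition~\ref{freedom}.

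For part~(1), I would start from an arbitrary $n$-periodic properly bounded operator $\D = aT^j + bT^{j+1} + cT^{j+2} + dT^{j+3}$ representing $P$ and use the scaling freedom to fix the signs of the coefficients. The right action $\D \mapsto \D \circ \mu^{-1}$ by a sequence $\mu$ lets me rescale the four diagonals $a^j, a^{j+1}, a^{j+2}, a^{j+3}$ by $\mu$-dependent factors, while the left action $\lambda \circ \D$ multiplies each row uniformly. The first step is to observe that weak convexity says $x_k > 0$, $y_k > 0$, $x_ky_k < 1$; feeding the formulas \eqref{xyviad}, i.e. $x_{k+j+2} = c_k a_{k+1}/(b_k b_{k+1})$ and $y_{k+j+2} = d_k b_{k+1}/(c_k c_{k+1})$, one reads off sign constraints on consecutive ratios of the coefficient sequences. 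Concretely, $x_k > 0$ forces $c_k a_{k+1}$ and $b_k b_{k+1}$ to have the same sign, and similarly for $y_k$; because $a_k, d_k$ never vanish (proper boundedness) and one can check $b_k, c_k$ never vanish either when $x_k y_k$ is finite, these sign conditions propagate around the cycle. Since $n$ is odd, any consistent sign assignment that alternates in the right pattern is realizable by a suitable diagonal rescaling — this is where oddness enters, exactly as in the remark after Proposition~\ref{prop:duality}. The conclusion is that one can normalize so that $a_k, c_k > 0$ and $b_k, d_k < 0$.

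For parts~(2)–(4) I would work with the normalized operator from~(1). For part~(2): $\Dl = aT^j + bT^{j+1}$ and $\Dr = cT^{j+2} + dT^{j+3}$ are first-order $n$-periodic operators, so by \eqref{monodet} their monodromies are $z_l = (-1)^n \prod_{k=1}^n (a_k/b_k)$ and $z_r = (-1)^n \prod_{k=1}^n (c_k/d_k)$ (up to an overall shift depending on $j$, which does not affect the product over a full period). With $n$ odd, $a_k > 0$, $b_k < 0$, $c_k > 0$, $d_k < 0$, both products are positive: $z_l, z_r > 0$. To get the strict inequality $z_l < z_r$ I expect to need the full strength of weak convexity $x_k y_k < 1$: writing $x_k y_k = (a_{k+1} d_k)/(b_{k+1} c_{k+1})$ after simplification of \eqref{xyviad}, the condition $x_k y_k < 1$ becomes an inequality comparing products of $a,d$ against products of $b,c$, and multiplying these around the full cycle should yield $z_r/z_l > 1$. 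The consequence $\Ker \Dl \cap \Ker \Dr = 0$ is then immediate: a common kernel element would be a simultaneous eigenvector of $T^n$ with eigenvalue both $z_l$ and $z_r$, impossible since $z_l \neq z_r$. Parts~(3) and~(4) are short consequences of~(1) and the description of rescaling in Corollary~\ref{cor:rescalingDO}: rescaling by $s$ multiplies $c, d$ by $s$, hence $x_k \mapsto s x_k$, $y_k \mapsto s^{-1} y_k$; for $s > 0$ the signs and the product $x_k y_k$ are unchanged, so weak convexity persists, giving~(3). For $s < 0$, rescaling sends the operator to one with $a_k > 0$, $b_k < 0$, $s c_k < 0$, $s d_k > 0$; now $z_l > 0$ while the monodromy of the rescaled $\Dr$-part has sign $(-1)^n \prod (sc_k)/(sd_k) = \prod (c_k/d_k)$ times nothing new — I need to look instead directly at the $3\times 3$ monodromy $M$ of the full rescaled operator and show its characteristic polynomial has distinct roots. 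The cleanest route: the eigenvalues of $M$ are the $z \in \R^*$ for which $\Ker \D\vert_{\v{z}} \neq 0$; a repeated eigenvalue would force a $2$-dimensional quasi-periodic kernel, and I would rule this out by the same $\Dl$–$\Dr$ splitting argument, now using that for $s < 0$ the relevant monodromy inequality becomes strict in a way that separates all three eigenvalues.

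The main obstacle I anticipate is part~(2), specifically extracting the \emph{strict} inequality $z_l < z_r$ from $x_k y_k < 1$ in a clean cycle-product form — one must be careful that the telescoping of the $x_k, y_k$ formulas around an odd-length cycle really does recombine into $z_r/z_l$ and not into some shifted or inverted quantity, and that no boundary terms survive. Part~(4) is the second delicate point, because "distinct eigenvalues of the $3 \times 3$ monodromy" is a genuinely finer statement than the $z_l \neq z_r$ of part~(2); I would likely need to supplement the first-order splitting with an argument that the third eigenvalue (coming from the "middle" of the operator) is also separated, perhaps by factoring $\D$ as a product involving $\Dl$ and $\Dr$ and tracking how the spectra combine.
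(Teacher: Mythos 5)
Your treatment of parts (2) and (3) matches the paper: the monodromies of the two first-order pieces are computed from \eqref{monodet}, positivity follows from the sign pattern and oddness of $n$, and the telescoping you were worried about does work out cleanly --- one gets exactly $z_l/z_r = \prod_{k=1}^n \frac{a_k d_k}{b_k c_k} = \prod_{k=1}^n x_k y_k < 1$, with no boundary terms. For part (1) you take a different and more laborious route: the paper does not gauge-fix an arbitrary operator but simply writes down an explicit representative, $\D = T^j - T^{j+1} + x_{k+j+2}T^{j+2} - x_{k+j+2}y_{k+j+2}x_{k+j+3}T^{j+3}$, and checks via \eqref{xyviad} that it reproduces the given corner invariants; weak convexity then gives the signs immediately. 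Your gauge-fixing argument can probably be made to work, but the claim that a consistent sign assignment is ``realizable by a suitable diagonal rescaling'' for odd $n$ is exactly the step you would have to prove (there is a potential $\Z_2$-cocycle obstruction around the cycle), and the explicit construction sidesteps it entirely.

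The genuine gap is part (4). You correctly identify it as the delicate point, but you do not supply an argument: the $\Dl$--$\Dr$ splitting controls only the two one-dimensional quasi-periodic kernels of the first-order pieces, and elements of $\Ker\D$ are in general in neither of them, so this splitting says nothing about the three eigenvalues of the full $3\times 3$ monodromy --- in particular it cannot rule out a complex-conjugate pair or a Jordan block. The paper's argument is of a different nature: it invokes the factorization (from the appendix to \cite{izosimov2016pentagrams}) of the monodromy as a product $L_1\cdots L_n$ of explicit $3\times3$ matrices \eqref{li} built from the corner invariants. For $s<0$ the rescaled invariants satisfy $\hat x_k,\hat y_k<0$ and $\hat x_k\hat y_k<1$, which makes each $L_k$ entrywise non-negative and the product of any three of them positive; the Perron--Frobenius theorem then yields a simple dominant positive eigenvalue $z_1$, and the computation $\det M=\hat x_1\cdots\hat x_n(\hat y_1\cdots\hat y_n)^2<0$ (using $\hat x_k<0$ and $n$ odd) forces the remaining two eigenvalues to have negative product, hence to be real and distinct from each other and from $z_1$. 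Without some substitute for this positivity/Perron--Frobenius input, your proposal does not prove part (4).
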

\begin{proof}
To prove the first statement, consider the corner invariants $x_k, y_k > 0$ of $P$, and let
$$
\D: = T^j - T^{j+1} + x_{k+j+2}T^{j+2} -  x_{k+j+2}y_{k+j+2}x_{k+j+3}T^{j+3}.
$$
Then, by Proposition \ref{prop:xyviad}, the polygon associated with $\D$ is $P$.
Furthermore, the signs of coefficients of $\D$ satisfy \eqref{altCond}, as needed.\par
To prove the second statement, consider an arbitrary operator  \eqref{o3op} representing $P$ and satisfying~\eqref{altCond}, along with the associated operators $\Dr, \Dl$. Then, by formula~\eqref{monodet}, the monodromies of those operators are given by
$$
z_l = -\prod_{k=1}^n \frac{ a_k}{ b_k}, \quad z_r = -\prod_{k=1}^n\frac{ c_k}{ d_k},
$$
so $z_l, z_r > 0$ due to  \eqref{altCond} and $n$ being odd. Further, using formulas~\eqref{xyviad}, we get
$$  \frac{z_l}{z_r} =  \prod_{k=1}^n \frac{ a _k  d_{k}}{b_{k} c_{k}} = \prod_{k=1}^n x_{k} y_k,
$$
where $x_k, y_k$ are the corner invariants of $P$. But since $P$ is weakly convex, we have $x_ky_k < 1$ and thus $z_l < z_r$. 
 This in turn implies  $\Ker \Dl  \cap \Ker \Dr  = 0$, because non-zero elements of the kernel of $\Dl$ are sequences with monodromy $z_l$, while non-zero elements of the kernel of $\Dr$  are sequences with monodromy $z_r \neq z_l$. Thus, the second statement is proved. \par
The third statement is obvious from the definitions of weak convexity and rescaling, so we proceed to the fourth statement. Let $P_{sd}$ be a polygon obtained from $P$ by means of rescaling with $s < 0$. Then the corner invariants $\hat x_k, \hat y_k$ of $P_{sd}$ satisfy $\hat x_k, \hat y_k < 0$, $\hat x_k\hat y_k < 1$. To show that the monodromy of such a polygon has distinct eigenvalues,
we use a result from the appendix to \cite{izosimov2016pentagrams} which says the monodromy of a twisted $n$-gon with corner invariants $\hat x_k, \hat y_k$ is conjugate to the product $L_1\cdot \ldots \cdot L_n$, where
\begin{equation}\label{li}
L_k :=  \left(\begin{array}{ccc}1 & 0 & 1 \\1-\hat x_k\hat y_k & 0 & 1 \\0 & -\hat y_k & 0\end{array}\right).
\end{equation}
Notice that since $\hat y_k < 0$ and $\hat x_k \hat y_k < 1$, the matrices $L_k$  are non-negative. Furthermore, the product of at least three such matrices is positive, so the matrix $M := L_1\cdot \ldots \cdot L_n$ is positive. Therefore, by the Perron-Frobenius theorem, $M$ has a real positive eigenvalue $z_1$ such that its any other eigenvalue $z$ satisfies $|z| < z_1$. Furthermore, since $\hat x_k < 0$ and $n$ is odd, we have $\det M= \hat x_1 \cdot  \ldots  \cdot \hat x_n(\hat y_1 \cdot \ldots \cdot \hat y_n)^2 < 0$, so the product of two other eigenvalues $z_2, z_3$ of $M$ is negative, which means that they are real and distinct. The result follows.\end{proof}

\section{Polygons fixed by the pentagram map and commuting difference operators}\label{sec:fpcdo}\label{sec:cdo}
In this section we show that a closed or twisted polygon $P$ projectively equivalent to its pentagram image $P'$ gives rise to commuting difference operators. This is the first step in the proof of both Theorem \ref{thm1} and Theorem \ref{thm2}. In addition, in the self-dual case (i.e. in the setting of Theorem \ref{thm2}) we show that those commuting operators are negative duals of each other.


Let $P = \{v_k\}$ be a closed or twisted $n$-gon with odd $n$, and let $P' = \{v_k'\}$ be the image of $P$ under the pentagram map, labeled as in Figure \ref{Fig:labeling}. Then, as explained in Section \ref{sec:polygons}, one can encode $P'$ by means of a difference operator $\D$ of the form 
\begin{equation}\label{diffOp2}
\D = a T^{{(n-3)}/{2}}+ bT^{{(n-1)}/{2}} + cT^{{(n+1)}/{2}} + dT^{{(n+3)}/{2}}.
\end{equation}
The coefficients of this operator are related to the polygon $P'$ by means of the equation
$$
a_k V'_{ i+ {(n-3)}/{2}}+ b_kV'_{k+(n-1)/2} + c_kV'_{k + (n+1)/2} + d_kV'_{k+(n+3)/2} = 0,
$$
where $V'_k$'s are lifts of the vertices $v_k'$ of $P'$.
\begin{proposition}\label{vvp}
The vector $$V_k := a_k V'_{k + {(n-3)}/{2}}+ b_kV'_{k+(n-1)/2} = - c_kV'_{k + (n+1)/2} - d_kV'_{k+(n+3)/2}$$ is the lift of the vertex $v_k$ of $P$.
\end{proposition}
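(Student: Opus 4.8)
The plan is to read off the point $[V_k]\in\RP^2$ with homogeneous coordinates $V_k$ directly from the two given expressions for $V_k$. First one checks $V_k\neq 0$: since $a_k\neq 0$ and, $V'$ being a fundamental solution of \eqref{diffOp2}, the consecutive vectors $V'_{k+(n-3)/2}$ and $V'_{k+(n-1)/2}$ are linearly independent, the combination $a_kV'_{k+(n-3)/2}+b_kV'_{k+(n-1)/2}$ cannot vanish. The first expression for $V_k$ then shows that $[V_k]$ lies on the line $\ell_1$ joining the vertices $v'_{k+(n-3)/2}$ and $v'_{k+(n-1)/2}$ of $P'$ (these two vertices being distinct, again by linear independence), while the second expression shows that $[V_k]$ lies on the line $\ell_2$ joining $v'_{k+(n+1)/2}$ and $v'_{k+(n+3)/2}$. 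So it remains to identify $\ell_1$ and $\ell_2$ and to see that they are two distinct lines of $P$ meeting at $v_k$.

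\textbf{Identifying the lines.} To do this I would first record the explicit form of the pentagram map in the labeling of Figure~\ref{Fig:labeling}: a vertex of $P'$ is the intersection of two consecutive shortest diagonals of $P$, namely $v'_m=(v_{m+(n-3)/2},\,v_{m+(n+1)/2})\cap(v_{m+(n-1)/2},\,v_{m+(n+3)/2})$, which is immediate from the heptagon picture together with a uniform index shift for general odd $n$. Substituting $m=k+(n-3)/2$ and $m=k+(n-1)/2$ and reducing the resulting indices modulo $n$, one finds that both $v'_{k+(n-3)/2}$ and $v'_{k+(n-1)/2}$ lie on the diagonal $(v_{k-2},v_k)$ of $P$, so $\ell_1$ is precisely that diagonal. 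The analogous substitution $m=k+(n+1)/2$, $m=k+(n+3)/2$ shows that $v'_{k+(n+1)/2}$ and $v'_{k+(n+3)/2}$ both lie on $(v_k,v_{k+2})$, so $\ell_2=(v_k,v_{k+2})$.

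\textbf{Conclusion and main difficulty.} Finally, the diagonals $(v_{k-2},v_k)$ and $(v_k,v_{k+2})$ both pass through $v_k$ and are distinct (automatic whenever the pentagram image is defined, in particular for the weakly convex polygons treated later in the paper), hence their unique common point is $v_k$. Since $[V_k]\in\ell_1\cap\ell_2$ and $V_k\neq 0$, we conclude $[V_k]=v_k$, i.e.\ $V_k$ is a lift of $v_k$. There is no genuine analytic difficulty here; the only thing requiring care is the index bookkeeping for the symmetric labeling — writing $v'_m$ correctly and then reducing the sums $k+(n\pm1)/2+(n\pm3)/2$ modulo $n$ — together with the mild nondegeneracy remark that $\ell_1\neq\ell_2$.
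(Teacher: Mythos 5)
Your proof is correct and follows essentially the same route as the paper: observe that $V_k$ lies in both spans, hence projects to the intersection of the two lines through consecutive vertices of $P'$, and identify that intersection with $v_k$ via the labeling convention. The paper simply asserts this last identification ``by definition of the pentagram map with our labeling convention,'' whereas you carry out the index bookkeeping explicitly (and add the minor nondegeneracy checks), which is a harmless elaboration of the same argument.
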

\begin{proof}
Indeed, we have 
$$V_k \in \mathrm{span}(V'_{k + (n-3)/2}, V'_{k+(n-1)/2}) \cap \mathrm{span}(V'_{k + (n+1)/2}, V'_{k+(n+3)/2}),$$
which means that the projection of $V_k$ to $\P^2$ is the intersection point of the lines $(v'_{k + (n-3)/2},v'_{k+(n-1)/2})$ and $(v'_{k + (n+1)/2}, v'_{k+(n+3)/2})$. By definition of the pentagram map with our labeling convention, this is exactly the vertex $v_k$ of $P$, as desired.
\end{proof}

Now, as in Proposition \ref{alt}, consider the operators
\begin{equation}\label{dldr}
\Dl  := a T^{{(n-3)}/{2}}+ bT^{{(n-1)}/{2}}, \quad \Dr  := \D - \Dl  =cT^{{(n+1)}/{2}} + dT^{{(n+3)}/{2}}.
\end{equation}
By Proposition \ref{vvp}, these operators take the lifts $V_k'$ of the vertices of $P'$ to the lifts $\pm V_k$ of the vertices of $P$.
\begin{proposition}\label{prop:cdo}
Assume that a closed or twisted $n$-gon (where $n \geq 5$ is arbitrary) $P$ is projectively equivalent to its pentagram image $P'$. Then:
\begin{enumerate}
\item One can choose the $n$-periodic operator $\D$ of the form \eqref{diffOp2} associated with $P$ in such a way that the corresponding operators $\Dl , \Dr $ given by \eqref{dldr} commute:
\begin{equation}\label{dldrcomm}
\Dl  \Dr  = \Dr  \Dl .
\end{equation}
\item Furthermore, if $P$ is weakly convex and $n$ is odd, then $\D$ can be chosen to satisfy the alternating signs condition \eqref{altCond}.
\item If, on top of that, $P$ is self-dual, then $\Dl$, $\Dr$ may be chosen to be negative duals of each other (up to multiplication by $T^{-n}$): $$\Dl^* = -T^{-n}\Dr.$$ Equivalently, the operator $\D = \Dl + \Dr$ can be chosen to be anti-self-dual (again, up to multiplication by $T^{-n}$): $$\D^* = -T^{-n}\D.$$
\end{enumerate}
\end{proposition}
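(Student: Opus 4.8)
The plan is to take the three parts in order, using the gauge freedom $\D \mapsto \lambda \circ \D \circ \mu^{-1}$ of Proposition~\ref{freedom} as the main tool and non-vanishing of the difference Wronskian as the main mechanism.

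\emph{Part~1.} Let $V'$ be a fundamental solution of $\D$, so $[V'_k] = v'_k$, and let $V$ be the lift of $P$ supplied by Proposition~\ref{vvp}; by construction $\Dl V' = V$ and $\Dr V' = -V$. Since $P$ is projectively equivalent to $P'$ we have $V_k = \rho_k A V'_k$ for a fixed $A \in \GL_3(\R)$ and nonzero scalars $\rho_k$, and I would first use the gauge freedom — which rescales $V' \mapsto \mu V'$ and $V \mapsto \lambda V$ termwise — to arrange $\rho_k \equiv 1$, i.e.\ $V = AV'$. Since $\Dl$ and $\Dr$ have scalar coefficients, they commute with the termwise left action of $A$, so
\[
\Dl\Dr V' = \Dl(-V) = -\Dl(AV') = -A\,\Dl V' = -A^2 V' = A\,\Dr V' = \Dr(AV') = \Dr\Dl V',
\]
hence $(\Dl\Dr - \Dr\Dl)V' = 0$. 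Now $\Dl\Dr - \Dr\Dl$ is supported in $[n-1, n+1]$, hence of order at most $2$, and it annihilates the three linearly independent scalar sequences that are the coordinates of $V'$; at each index $k$ the row vector of its coefficients therefore lies in the left kernel of the Wronskian matrix of $\D$, which is invertible because $\D$ is properly bounded and \eqref{wrons} keeps the Wronskian nonzero. So $\Dl\Dr - \Dr\Dl = 0$, which is \eqref{dldrcomm}.

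\emph{Part~2.} Here I would combine the construction of Part~1 with the normalization in Proposition~\ref{alt}(1). Weak convexity is a projective invariant, so $P'$ is weakly convex as well, and Proposition~\ref{alt}(1) gives \emph{some} operator for $P'$ with the sign pattern \eqref{altCond}; the point is to realize \eqref{altCond} simultaneously with the commuting property. A commuting operator of the form \eqref{diffOp2} is precisely one for which (after a scaling) $V = AV'$, and then its coefficients $a_k,b_k,c_k,d_k$ are determined by the two expressions for $V_k = AV'_k$ in Proposition~\ref{vvp}. Choosing the lift $V'$ of $P'$ so that $\det(V'_k,V'_{k+1},V'_{k+2}) > 0$ for all $k$ (possible for a weakly convex polygon, by \eqref{wrons}), Cramer's rule turns the signs of $a_k,b_k,c_k,d_k$ into signs of determinants such as $\det(AV'_k, V'_{k+(n+1)/2}, V'_{k+(n+3)/2})$, and one checks, using weak convexity and the oddness of $n$, that after a suitable overall rescaling of $A$ these signs are $(+,-,+,-)$. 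This sign analysis is the step I expect to be the main obstacle, and it is one of the places where weak convexity is used in an essential way.

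\emph{Part~3.} Assume now in addition that $P$ is self-dual. By Proposition~\ref{prop:duality} the operator $\D^*$ represents the dual polygon $(P')^*$, and since $P' \cong P \cong P^* \cong (P')^*$ the operator $\D^*$ — and hence $-T^n\D^*$, which has the same kernel — represents $P'$ again. A short computation with supports (using $n - (n-3)/2 = (n+3)/2$) shows that $-T^n\D^*$ is again of the form \eqref{diffOp2}, that its coefficients are, up to sign, those of $\D$ in reverse order, so it satisfies \eqref{altCond} whenever $\D$ does, and that its $\Dl$- and $\Dr$-parts are $-T^n\Dr^*$ and $-T^n\Dl^*$, which commute because $\Dr^*\Dl^* = (\Dl\Dr)^* = (\Dr\Dl)^* = \Dl^*\Dr^*$. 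Hence, writing $\mathcal S$ for the set of operators of the form \eqref{diffOp2} that represent $P'$, satisfy \eqref{altCond} and have commuting $\Dl,\Dr$ (nonempty, by Parts~1 and~2), the map $\iota \colon \D \mapsto -T^n\D^*$ is an involution of $\mathcal S$. Any two elements of $\mathcal S$ differ by a gauge $\D \mapsto \lambda \circ \D \circ \mu^{-1}$; compatibility with \eqref{altCond} forces $\lambda$ and $\mu$ to have constant sign, which we may take to be positive, while compatibility with commutativity forces $\lambda/\mu$ to be a positive constant $\gamma$ (here $n$ odd is used once more). Writing $\iota(\D) = \gamma\,\mu \circ \D \circ \mu^{-1}$ and applying $\iota$ once more gives $\iota^2(\D) = \gamma^2 \D$; since $\iota^2 = \Id$ this forces $\gamma = 1$. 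Finally, replacing $\D$ by $\mu^{1/2} \circ \D \circ \mu^{-1/2}$ — legitimate since $\mu > 0$, and it keeps $\D$ in $\mathcal S$ — turns $\iota(\D) = \mu \circ \D \circ \mu^{-1}$ into $\iota(\D) = \D$, i.e.\ $\D^* = -T^{-n}\D$. Splitting this identity over the two halves of the support $[(n-3)/2, (n+3)/2]$ gives the equivalent block form $\Dl^* = -T^{-n}\Dr$, which completes the proof.
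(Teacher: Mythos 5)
Your Part~1 is correct and essentially the paper's argument: normalize by a left gauge factor so that $\Dl V' = AV'$, observe that the commutator $[\Dr,\Dl]$ has order at most $2$ and annihilates $V'$, and invoke the $3$-in-a-row condition. The genuine gap is in Part~2. You propose to obtain the sign pattern \eqref{altCond} by expressing $a_k,b_k,c_k,d_k$ through Cramer's rule as ratios of determinants such as $\det(AV'_k, V'_{k+(n+1)/2}, V'_{k+(n+3)/2})$ and controlling their signs via weak convexity; you yourself flag this sign analysis as ``the main obstacle'', and indeed it is not carried out and does not look tractable: weak convexity constrains local cross-ratios, whereas these determinants involve vertices half a period apart. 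The paper's route is different and complete: start the Part~1 construction from an operator $\tilde\D$ that already satisfies \eqref{altCond} (first statement of Proposition~\ref{alt}); the commutativity-achieving gauge is a \emph{left} multiplication $\D=\mu\tilde\D$, which preserves the alternating pattern within each row and can at most flip entire rows; and then the lowest-order coefficient of \eqref{dldrcomm}, namely $a_kc_{k+(n-3)/2}=c_ka_{k+(n+1)/2}$ (equation \eqref{explicitComm}), shows that $\sgn(a_k)$ is $2$-periodic, hence constant because $n$ is odd. You need this (or an equivalent) argument to close Part~2.

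In Part~3 your involution $\iota(\D)=-T^n\D^*$ on the set $\mathcal S$ is a clean repackaging of the paper's argument, and the reduction of the residual gauge to $\lambda=\gamma\mu$ with $\gamma$ a positive constant is correct (it is the same $2$-periodicity trick as in Part~2). However, the identity $\iota^2(\D)=\gamma^2\D$ is false in general: a sequence $\mu$ conjugating one $n$-periodic operator into another is only \emph{quasi}-periodic, $\mu_{k+n}=z\mu_k$ with $z>0$ not necessarily $1$, and moving $\mu^{-1}$ past $T^n$ produces a factor $z^{-1}$, so in fact $\iota^2(\D)=\gamma^2z^{-1}\D$ and $\iota^2=\Id$ yields $\gamma=\sqrt z$ rather than $\gamma=1$. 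As written your argument is internally inconsistent: if $\gamma=1$ but $z\neq 1$, the final conjugation by $\mu^{1/2}$ gives $\iota(\D')=z^{-1/2}\D'$, not $\D'$. The gap is reparable by correct bookkeeping: with $\gamma=\sqrt z$ the two factors of $z^{\pm 1/2}$ cancel and $\D'=\mu^{1/2}\circ\D\circ\mu^{-1/2}$ is exactly anti-self-dual --- which is, in substance, the same square-root conjugation the paper performs with $\gamma_k=\sqrt{\alpha_k}$.
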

\begin{proof}[Proof of Proposition \ref{prop:cdo}]
We begin with the first statement. Take an arbitrary $n$-periodic operator~$\tilde \D$ of the form \eqref{diffOp2} representing the polygon $P$. Then, since $P'$ is projectively equivalent to $P$, there is a fundamental solution $V'$ of $\tilde \D$ such that the projection of $V'_k$ to $\P^2$ is the $k$'th vertex of $P'$. Consider the projective transformation taking $P'$ to $P$. Any lift $A \in \GL_3(\R)$ of this projective transformation will then take the sequence $V'$ to a sequence of lifts of vertices of $P$. On the other hand, by Proposition \ref{vvp}, lifts of vertices of $P$ are given by the sequence $\tilde \Dl  V'$. So, there is an $n$-periodic sequence $\mu$ of non-zero real numbers such that
$
AV' =  \mu \tilde \Dl V',
$
where $A$ acts on sequences of vectors by term-wise multiplication. Let $\D:=\mu \tilde \D$. Then the operator $\D$ still satisfies $\D V'=0$ and hence represents the same polygons $P$ and $P'$. Furthermore, the corresponding operator $\Dl $ satisfies
$
AV' =   \Dl V'.
$
Applying the operator $\D$ to both sides, we get 
$
 \D   \Dl  V' = 0.
$
Also taking into account that $ \D V' = 0$, this can be rewritten as
\begin{equation}\label{commutator0}
( \D   \Dl  -  \Dl    \D  )V' = 0.
\end{equation}
At the same time, we have
$$
 \D   \Dl  -  \Dl    \D = ( \Dl  +   \Dr )   \Dl  -  \Dl   ( \Dl  +   \Dr ) =  \Dr    \Dl   -   \Dl    \Dr  = [\Dr , \Dl ],
$$
so \eqref{commutator0} gives
\begin{equation}\label{commutator}
[\Dr , \Dl ]\,V' = 0.
\end{equation}
Now it remains to notice that the commutator $[\Dr , \Dl ]$ is of the form
$
\alpha T^{n-1} + \beta T^n + \gamma T^{n+1},
$
so equation~\eqref{commutator} is equivalent to
$
\alpha_k V'_{k + n -1} + \beta_k V'_{k+n} + \gamma_k V'_{k+n+1} = 0,
$
which, in view of the $3$-in-a-row condition for $P'$ (which holds because it holds for $P$),  gives $\alpha_k = \beta_k = \gamma_k = 0$ and thus
$
[\Dr , \Dl ] = 0,
$
as desired. \par  To prove the second statement, one repeats the same argument, with the only modification that the initial operator $\tilde \D$ should be chosen to satisfy \eqref{altCond}, which can be done by the first statement of Proposition~\ref{alt}. Then the coefficients $a,b,c,d$ of the operator $\D = \mu \tilde \D$  satisfy $\sgn(a_k)  = -\sgn(b_k) = \sgn( c_k) = -\sgn(d_k)$. Furthermore, we claim that $a_k$'s are all of the same sign. Indeed, using explicit formulas \eqref{dldr} for $\Dl $ and $\Dr $ and equating the coefficients of $T^{n-1}$ in \eqref{dldrcomm}, we get
\begin{equation}\label{explicitComm}
a_k c_{k+(n-3)/2} = c_k a_{k+(n+1)/2}.
\end{equation}
Furthermore, we have $\sign c_j = \sign a_j$ for any $j$, so taking the signs of both sides of \eqref{explicitComm} we get
$$
\sign a_{k+(n+1)/2} = \sign c_{k+(n-3)/2} = \sign a_{k+(n-3)/2},
$$
which means that the sequence $\sign a_k$ is $2$-periodic. But since the period $n$ of the sequence $a_k$ is an odd number, it follows that $\sign a_k = \const$. Now, multiplying $\D$ by $-1$ is necessary,  we can arrange that $a_k > 0$ for all $k$, so that $\D$ has satisfies \eqref{altCond}, as needed.

\par

To prove the third statement, we consider the operator $\D$ constructed above and show that if $P$ is self-dual, then $\D$ can be replaced with another operator, which has all the properties of $\D$ and is, in addition, anti-self-dual. To that end, observe that if $P$ is self-dual, then the operators $ \D$ and $ \D^*$ represent the same polygon, so
\begin{equation}\label{sdpop}
 \D^* = \alpha T^{-n } \D \beta^{-1}
\end{equation}
for certain sequences $\alpha, \beta$ of non-zero real numbers. Taking the duals, we get
$$
 \D =  \beta^{-1} T^{n } \D^* \alpha =  \alpha \beta^{-1} \D \alpha\beta^{-1},
$$
which implies $\beta = \pm \alpha$. Further, since $ \D$ satisfies \eqref{altCond}, the corresponding coefficients of $ \D^*$ and $T^{-n } \D$ are of opposite sign, so \eqref{sdpop} implies $\sgn(\alpha_k) = -\sgn(\beta_k) = \const$, and we must have $\beta = -\alpha$. Therefore,
$$
 \D^* = -\alpha T^{-n } \D \alpha^{-1},
$$
where $\sgn(\alpha_k) = \const$ and without loss of generality we can assume $\alpha_k > 0$. Furthermore, since both operators $ \D^*$, $T^{-n } \D$ are $n$-periodic, the sequence $\alpha$ is quasi-periodic, i.e. $\alpha_{k+n} = z\alpha_k$ for some $z \in \R^*$ (actually, $z \in \R_+$). Now, let $\gamma_k := \sqrt{\alpha_k}$. Then the sequence $\gamma$ is also quasi-periodic, so the operator $ \D': = \gamma  \D \gamma^{-1}$ is $n$-periodic. Moreover, it has all the properties of $\D$ and is anti-self-dual. Thus, the proposition is proved.
\end{proof}

\begin{remark}\label{rem:crit}
It is easy to see from the proof that the converse of the first statement is also true: if $\Dr $ and $\Dl $ commute, than $P$ is projectively equivalent to $P'$. For instance, consider the polygon $P$ from Example~\ref{ex:complex}. The vertices of that polygon can be lifted to vectors $V_k := (\lambda^{2k}, \lambda^{3k}, 1)$, where $\lambda := \exp({{2\pi\mathrm{i}}/{7}})$. The sequence $V_k$ is annihilated by a difference operator with constant coefficients, namely by $\D := a T^{{(n-3)}/{2}}+ bT^{{(n-1)}/{2}} + cT^{{(n+1)}/{2}} + dT^{{(n+3)}/{2}}$, where $a,b,c,d \in \C$ are such that the roots of the corresponding characteristic equation $a + bx + cx^2 + d x^3 = 0$ are $\lambda^2$, $\lambda^3$, and $1$. Therefore, the associated operators $\Dr $ and $\Dl $ also have constant coefficients and hence commute. So, the polygon $P$ is indeed projectively equivalent to its pentagram image $P'$.
\end{remark}

\par

\section{The spectral curve}\label{sec:sc}

The results of this section are central to the proof of Theorem \ref{thm2} (and will also be used to derive Theorem \ref{thm1} from Theorem \ref{thm2}). Namely, in Section \ref{ss:genus} we consider the joint spectrum of the commuting difference operators $\Dl$, $\Dr$ constructed above (see Section \ref{sec:cdo}), the so-called \textit{spectral curve}, and show that the genus $g$ of that curve is at most $1$. We note that this estimate on the genus is not predicted by the general theory of commuting difference operators. It seems that the best bound one can get from the general theory is $g \leq 2$. Proving the $g \leq 1$ estimate requires somewhat more careful analysis of the field of meromorphic functions. Also note that even the $g \leq 1$ result is still insufficient to prove Theorem~\ref{thm2}. Another important ingredient of the proof is the so-called \textit{eigenvector function}, which encodes  the joint eigenvectors of the commuting operators  $\Dl$, $\Dr$. We study that function in Section~\ref{ss:evf} and in particular prove that it has very few poles. 

\par

\subsection{The spectral curve and a bound on its genus}\label{ss:genus}
In this section, we construct the spectral curve associated with commuting difference operators $\Dl$, $\Dr$ given by Proposition \ref{prop:sd} and discuss its properties, in particular prove that its genus is at most $1$. 
\begin{remark}

Note that instead of defining the spectral curve using commuting difference operators, we could have done this using the Lax representation, as in \cite{soloviev2013integrability}. However, at the end of the day these two definitions turn out to be equivalent to each other (see Remark \ref{rem:scdef} below). Furthermore, even if we defined the spectral curve using the Lax representation, we would still need commuting difference operators to establish the properties of the curve that we need. So, all in all, these two approaches are equivalent, and our choice is just a matter of convenience.
\end{remark}
\begin{remark}
A different notion of a spectral curve corresponding to a difference operator (and hence a polygon) is defined in  \cite{krichever2015commuting}. In our terminology, it is the spectral curve corresponding to  commuting difference operators $ \hat \D$, $T^n$ where $\hat \D$ is a difference operator supported in $[0,3]$ associated with a given polygon. Since the operator $ \hat \D$ does not, generally speaking, commute with  $\Dl$ and $\Dr$, the corresponding spectral curve seems to have no relation to ours.

\end{remark}

\par

Assume that $P$ is a weakly convex twisted $n$-gon, self-dual and projectively equivalent to its pentagram image. Then, by Proposition {\ref{prop:cdo}}, we have an $n$-periodic operator $\D_l = a T^{{(n-3)}/{2}}+ bT^{{(n-1)}/{2}}$ which commutes with its dual. For notational convenience, we define $$\D_+:=\D_l, \quad \D_- := \D_l^* = -T^{-n}\D_r.$$ Periodicity of these operators means that they also commute with $T^n$. Therefore, we have a whole algebra $\mathcal A$ of commuting operators, generated by $\D_+ $, $\D_-$, and $T^n$ (to preserve the left-right symmetry, it is natural to include $T^{-n}$ in $\mathcal A$ too, so that $\mathcal A= \C[ \D_+, \D_-, T^{\pm n}]$). To such an algebra $\mathcal A$, one can always associate an algebraic curve. This curve may be constructed using any two generic elements of $\mathcal A$. As such elements, we pick the operators $T^n$ and the product $\D_+\D_- = \D_-\D_+$. This choice is motivated by a particularly simple form of the operator $\D_+\D_-$. Namely, that operator is self-dual and supported in $[-1,1]$:
\begin{equation}\label{prodOP}
\D_+\D_- = T^{-1}\alpha + \beta + \alpha T,
\end{equation}
where $\alpha, \beta$ are $n$-periodic sequences, and $\alpha_k \neq 0$ for any $k$. 
\begin{definition}
The \textit{affine spectral curve} $\Gamma_a$ is the joint spectrum of $T^n$ and  $\D_+\D_-$:
$$
\Gamma_a := \{ (z,w) \in \C^* \times \C \mid \exists\, \xi \in \R^\infty: \xi \neq 0, \,T^n \xi = z\xi, \D_+\D_-\xi = w\xi \}.
$$
In other words, a point $(z,w) \in \C^* \times \C$ is in $\Gamma_a$ if and only if $w$ is an eigenvalue of the restriction of $\D_+\D_-$ to the space $\v{z}$ defined by \eqref{qspace}. 
\end{definition}
 To obtain an explicit equation of the affine spectral curve $\Gamma_a$, take a basis $e^{1}, \dots, e^{n}$ in $\v{z}$ determined by the condition $e^{j}_k = \delta_k^j$ for $k = 1,\dots, n$. In this basis, the matrix of the operator $\D_+\D_-$ is almost tridiagonal, with two additional elements in the upper-right and bottom-left corners:
\begin{equation}\label{todaMatrix}
 \left(\begin{array}{cccccc}\beta_1 & \alpha_1 &   & &  \alpha_n z^{-1}  \\ \alpha_1 & \beta_2 & \alpha_2 & \\ & \ddots & \ddots & \ddots \\ &  & \alpha_{n-2} & \beta_{n-1} & \alpha_{n-1}
 \\ \alpha_n z &  & & \alpha_{n-1} & \beta_n\end{array}\right).
\end{equation}

The affine spectral curve $\Gamma_a$ is the zero locus of the characteristic polynomial of~\eqref{todaMatrix}, which, up to the factor $ \alpha_1\ldots\alpha_n$, reads
\begin{equation}\label{charp}
p(z,w) = z^{-1} + q(w) + z
\end{equation}
for a certain polynomial $q(w)$ of degree $n$. In particular, the spectral curve is algebraic, as predicted by the general theory of commuting difference operators.
\begin{proposition}\label{prop:irred}
The affine spectral curve $\Gamma_a$ is irreducible.
\end{proposition}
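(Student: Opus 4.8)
The plan is to show that $p(z,w) = z^{-1} + q(w) + z$ generates a prime ideal in the coordinate ring $\C[z,z^{-1},w]$ of $\C^*\times\C$, so that $\Gamma_a$ is irreducible (and, incidentally, reduced). Since $z$ is a unit in $\C[z,z^{-1},w]$, multiplying $p$ by $z$ does not change the ideal it generates, so it is equivalent to prove that the genuine polynomial
$$
f(z,w):= z\,p(z,w) = z^2 + q(w)\,z + 1
$$
is irreducible in $\C[z,w]$ — the passage back to the Laurent ring being harmless because $z\nmid f$, as $f(0,w)=1$.

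To establish irreducibility of $f$ in $\C[z,w]=\C[w][z]$, I would regard it as a degree-two polynomial in $z$ which is monic, hence primitive, over $\C[w]$. By Gauss's lemma its irreducibility over $\C[w]$ is then equivalent to its irreducibility over the fraction field $\C(w)$. A monic quadratic over a field is reducible if and only if it has a root there, i.e.\ if and only if its discriminant $q(w)^2-4$ is a square in $\C(w)$. Since $\C[w]$ is a UFD and $\C$ is algebraically closed, a polynomial is a square in $\C(w)$ precisely when it is a square in $\C[w]$; so it remains to rule out that $q(w)^2-4$ is a perfect square in $\C[w]$.

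This is where one uses that $q$ is non-constant (its degree equals $n$). If $q(w)^2-4=h(w)^2$ for some $h\in\C[w]$, then $(q-h)(q+h)=4$, which forces both $q-h$ and $q+h$ to be nonzero constants; but then $2q=(q-h)+(q+h)$ would be constant, contradicting $\deg q=n\geq 1$. Hence $q^2-4$ is not a square, $f$ is irreducible over $\C(w)$ and therefore in $\C[z,w]$, and consequently $\Gamma_a$ is irreducible.

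As for obstacles, there is no deep one here: the only points needing care are the bookkeeping when moving between $\C[z,z^{-1},w]$ and $\C[z,w]$ (dealt with by noting $z$ is a unit with $z\nmid f$) and the standard fact that a polynomial which is a square in $\C(w)$ is already a square in $\C[w]$. If one prefers to bypass Gauss's lemma, the argument can be recast geometrically: the projection $\Gamma_a\to\C$, $(z,w)\mapsto w$, is a two-sheeted branched cover, and a reducible such cover would split into two sections, yielding a rational function $z=\rho(w)$ with $\rho(w)^2+q(w)\rho(w)+1=0$ — the same contradiction with $\deg q\geq 1$.
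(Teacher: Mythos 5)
Your proof is correct and follows essentially the same route as the paper, which simply observes that $\Gamma_a$ is the zero locus of $z^{-1}+q(w)+z$ and asserts that this is irreducible whenever $q$ is non-constant. You supply the justification the paper leaves implicit — clearing the unit $z$, applying Gauss's lemma, and ruling out that $q^2-4$ is a square via $(q-h)(q+h)=4$ — and all of these steps check out.
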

\begin{proof}
This curve is the zero locus of the polynomial \eqref{charp}, which is irreducible whenever $q(w)$ is non-constant.
\end{proof}

We now define the \textit{spectral curve} $\Gamma$ as the Riemann surface corresponding to the affine curve $\Gamma_a$. In other words, $\Gamma$ is the unique Riemann surface biholomorphic to $\Gamma_a$ away from a finite number of points. The existence of such a Riemann surface is guaranteed by Riemann's theorem. It can be obtained from $\Gamma_a$ by means of normalization (which we actually explicitly construct in Remark \ref{resolution} below), followed by compactification. Since $\Gamma_a$ is irreducible (Proposition~\ref{prop:irred}), it follows that $\Gamma$ is connected. Furthermore, the Riemann surface $\Gamma$ comes equipped with: \begin{itemize} \item two meromorphic functions $z$ and $w$, obtained from coordinate functions on $\Gamma_a$, and satisfying the equation $p(z,w) = 0$, with $p$ given by \eqref{charp}; \item a holomorphic involution $\sigma \colon \Gamma \to \Gamma$, coming from the involution $(z,w) \mapsto (z^{-1},w)$ on $\Gamma_a$, and satisfying  $\sigma^*w = w$ and $\sigma^*z = z^{-1}$. \end{itemize}

\begin{proposition}\label{prop:degrees}
The degrees of the functions $w$ and $z$ on $\Gamma$ are equal to $2$ and $n$ respectively.
\end{proposition}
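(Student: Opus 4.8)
The plan is to read off the degrees from the explicit equation $p(z,w) = z^{-1} + q(w) + z = 0$, where $q$ is a polynomial of degree $n$, together with the known description of the places of $\Gamma$ at infinity. First I would handle the degree of $z$. For a generic fixed value $w_0 \in \C$, the equation $p(z,w_0) = 0$ becomes, after clearing the $z^{-1}$, the quadratic $z^2 + q(w_0) z + 1 = 0$; this has exactly two roots in $\C^*$ (neither root is $0$ since the product of the roots is $1$), and for generic $w_0$ they are distinct. So the function $w$ has generic fiber of size $2$ over the $w$-line, which shows $\deg w = 2$ — but one must be slightly careful: $\Gamma$ is the Riemann surface of $\Gamma_a$, so I should also check that normalization and compactification do not add points to a generic $w$-fiber. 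Since $\Gamma_a \to \C_w$ is already a $2{:}1$ affine covering that is unramified over a generic $w_0$, and the points added in passing to $\Gamma$ lie over a finite set of $w$-values (the branch points and the points at infinity), the generic fiber is untouched; hence $\deg w = 2$.

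Next, the degree of $z$. For a generic fixed $z_0 \in \C^*$, the equation $p(z_0,w) = 0$ reads $q(w) = -z_0 - z_0^{-1}$, which is a polynomial equation of degree $n$ in $w$, with $n$ distinct roots for generic $z_0$. This gives $n$ points of $\Gamma_a$ over a generic $z_0$. Again I need to confirm that $\Gamma$ adds nothing over a generic $z_0$: the extra points of $\Gamma$ lie over $z = 0$ and $z = \infty$ (the two fixed values where the rational function $z$ fails to be finite and nonzero) and over the finitely many branch values, so a generic $z_0 \in \C^*$ has exactly $n$ preimages. Therefore $\deg z = n$.

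The one genuinely nontrivial point — and the step I expect to need the most care — is the behavior at infinity, i.e. making precise the claim that passing from $\Gamma_a$ to $\Gamma$ changes nothing over generic values. I would address this by analyzing the points of $\Gamma$ where $z \to 0$ or $z \to \infty$: near such a point, since $z^{-1} + z = -q(w)$ with $q$ of degree $n$, one of the two terms $z^{\pm 1}$ blows up, forcing $w \to \infty$ as well, with $w$ having a pole of order $1$ and $z^{\pm 1}$ a pole of order $n$ (this can be read off from the Newton polygon of $p$, or by the substitution $w \sim$ a constant times $z^{1/n}$ near $z = 0$). In particular all the ``new'' points lie over $z \in \{0,\infty\}$ and over $w = \infty$, confirming that generic fibers of both $z$ and $w$ already live inside $\Gamma_a$ and have the counts computed above. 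This also uses irreducibility of $\Gamma_a$ (Proposition \ref{prop:irred}), so that $\Gamma$ is a single connected Riemann surface and the word ``degree'' is unambiguous. Combining the two fiber counts gives $\deg w = 2$ and $\deg z = n$, as claimed.
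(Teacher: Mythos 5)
Your proposal is correct and is essentially the paper's own argument: the paper likewise reads the degrees off from the generic fiber counts of the equation $p(z,w)=z^{-1}+q(w)+z=0$ (two roots in $z$ for generic $w$, $n$ roots in $w$ for generic $z$), stating this in two sentences. The extra care you take in checking that normalization and compactification do not alter generic fibers is a sound filling-in of detail the paper leaves implicit, not a different route.
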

\begin{proof}
Since the polynomial $q(w)$ in \eqref{charp} has degree $n$, the equation $p(z,w) = 0$ has $n$ solutions in terms of $w$ for generic $z$. So, the degree of $z$ on $\Gamma$ is $n$. Likewise, the number of solutions of $p(z,w) = 0$ in terms of $z$ is $2$ for generic $w$, so the degree of $w$ is $2$.
\end{proof}
Since $w$ is a function of degree $2$, and $\sigma$ is a non-trivial involution preserving $w$, it follows that $\sigma$ interchanges the two points in any level set of $w$. In particular, the fixed points set of $\sigma$ coincides with the set of branch points of $w$, i.e. points where $dw = 0$. At the end of this section we will show that the number of such branch points is at most $4$, which implies, by the Riemann-Hurwitz formula, that the genus of $\Gamma$ is at most $1$. But first we need to discuss in detail the analytic properties of the functions $z$ and $w$, as well as of some other functions on $\Gamma$ which we introduce below.
\begin{proposition}
The Riemann surface $\Gamma$ is obtained from the normalization of $\Gamma_a$ by adding two points $Z_\pm$, interchanged by the involution $\sigma$. The point $Z_+$ is a zero of order $n$ for the function $z$, while $Z_-$ is its pole of order $n$. Both points are simple poles of the function $w$.
\end{proposition}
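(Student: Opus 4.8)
The plan is to analyze the behavior of the affine curve $\Gamma_a$ "at infinity" with respect to the coordinate $z$, using the defining equation $p(z,w) = z^{-1} + q(w) + z = 0$ where $\deg q = n$. First I would observe that on $\Gamma_a$ the coordinate $z$ is never zero (it lies in $\C^*$), so the function $z$ on the compact Riemann surface $\Gamma$ has all its zeros and poles among the points added during compactification. Rewriting the equation as $z^2 + z\,q(w) + 1 = 0$ and viewing it as a quadratic in $z$, one sees that $z \to 0$ forces $q(w) \to \infty$, hence $w \to \infty$, and similarly $z \to \infty$ forces $w \to \infty$. Thus the added points are precisely the points lying over $w = \infty$, and there are exactly two of them, which I will call $Z_+$ (where $z = 0$) and $Z_-$ (where $z = \infty$). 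Since the involution $\sigma$ satisfies $\sigma^* z = z^{-1}$ and $\sigma^* w = w$, it fixes the fiber over $w = \infty$ setwise while swapping $z = 0$ with $z = \infty$, so $\sigma$ interchanges $Z_+$ and $Z_-$.

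Next I would pin down the orders. Near $w = \infty$, set $t = 1/w$, so $q(w) = c\, t^{-n}(1 + O(t))$ with $c \neq 0$ the leading coefficient of $q$. The equation $z^2 + z\,q(w) + 1 = 0$ has two roots in $z$; by examining the Newton polygon (or simply solving asymptotically), one branch has $z \sim -1/q(w) \sim -c^{-1} t^n \to 0$ and the other has $z \sim -q(w) \sim -c\, t^{-n} \to \infty$. On the branch near $Z_+$ we get $z \sim -c^{-1} t^n$, and since $t = 1/w$ is (up to the factor $2$ from Proposition~\ref{prop:degrees}) a local parameter issue, I need to check that $w$ has a \emph{simple} pole at $Z_\pm$. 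This follows because $w$ has total degree $2$ on $\Gamma$ (Proposition~\ref{prop:degrees}) and $w = \infty$ is attained at the two distinct points $Z_+, Z_-$; two distinct points each contributing at least $1$ to a degree-$2$ divisor forces each to be a simple pole. Consequently $t = 1/w$ is a local coordinate at each of $Z_\pm$, and from $z \sim -c^{-1} t^n$ we read off that $z$ has a zero of order exactly $n$ at $Z_+$; applying $\sigma$ (or the asymptotics on the other branch), $z$ has a pole of order exactly $n$ at $Z_-$. As a consistency check, the degree of $z$ on $\Gamma$ is then $n$, matching Proposition~\ref{prop:degrees}.

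The main obstacle I anticipate is making the normalization-and-compactification picture rigorous enough to justify that exactly \emph{two} points are added and that the normalization is already smooth there — i.e. that each of the two asymptotic branches over $w = \infty$ is a single smooth point rather than something requiring further resolution. The clean way around this is to invoke Proposition~\ref{prop:degrees} as the hard input: it already tells us $\deg w = 2$, so the preimage of $w = \infty$ on the compact curve $\Gamma$ consists of either one point (a double pole) or two points (simple poles each). The asymptotic analysis above exhibits two genuinely distinct behaviors ($z \to 0$ versus $z \to \infty$), so the preimage cannot be a single point; hence it is two points, each a simple pole of $w$, and these are exactly the points not already present in the affine curve $\Gamma_a$ (since on $\Gamma_a$ the function $w$ is finite — it is the eigenvalue of a finite matrix). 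This reduces everything to the elementary asymptotic root-counting for the quadratic $z^2 + z\,q(w) + 1 = 0$, which I would carry out via the Newton polygon, avoiding any delicate global argument about normalization.
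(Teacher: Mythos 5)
Your proposal is correct and follows essentially the same route as the paper: both arguments identify the added points with the two points $(0,\infty)$ and $(\infty,\infty)$ in the closure of $\Gamma_a$ in $\P^1\times\P^1$, and both use $\deg w = 2$ from Proposition~\ref{prop:degrees} as the key input forcing exactly two added points, each a simple pole of $w$. The only (harmless) divergence is at the last step: you compute $\ord_{Z_\pm}z = \pm n$ locally, via the asymptotics $z\sim -c^{-1}w^{-n}$ in the coordinate $t=1/w$, whereas the paper gets it globally by noting that $z$ has no zeros or poles on the normalization of $\Gamma_a$ and then invoking $\deg z = n$; both are valid.
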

\begin{proof}

Let $\Gamma_n \subset \Gamma$ be the normalization of $\Gamma_a$. This set can be described as the preimage of $\Gamma_a$ under the map $(z,w) \colon \Gamma \to \P^1 \times \P^1$. Also note that the image of $\Gamma$ under the latter map is precisely the closure of $\Gamma_a$ in $ \P^1 \times \P^1$, which consists of $\Gamma_a$ and the points $(0, \infty)$ and $(\infty, \infty)$. So, the image of $\Gamma \setminus \Gamma_n$ under the map $(z,w)$ is two points $(0, \infty)$ and $(\infty, \infty)$. This means, first, that any point in $\Gamma \setminus \Gamma_n$ is a pole of $w$, and second, that there are at least two such points. But since $w$ has degree $2$ (Proposition~\ref{prop:degrees}), it follows that $\Gamma \setminus \Gamma_n$ consists of exactly two points, and that these points are simple poles of $w$. Furthermore, at one of these points, which we denote by $Z_+$, we have $z = 0$, while at the other one, which we call $Z_-$, we have $z = \infty$. Finally notice that since all points of $\Gamma$ except $Z_\pm$ belong to $\Gamma_n$, it follows that $z$ does not have zeros or poles except for $Z_\pm$. So, $Z_+$ is a zero of $z$ of order $n$, while $Z_-$ is a pole of order $n$, as desired. \end{proof}
Denote also by $S_\pm$ the two zeros of the function $w$ on $\Gamma$. A priori, these two points may coincide, but later on we will show that they are distinct (see the proof of Proposition \ref{cor:table}). Table \ref{table} summarizes information about the orders of the functions $z$ and $w$ at the points $Z_\pm$, $S_\pm$ (recall that the \textit{order} of a meromorphic function $f$ at a point $X$ is equal to $m$ if $f$ has a zero of order $m$ at $X$, $-m$ if $f$ has a pole of order $m$ at $X$, and $0$ otherwise). Also note that the order of $z$ and $w$ at any other point of $\Gamma$ is equal to $0$. The table also contains information about functions $s$ and $\mu_\pm$, which we introduce below.
\begin{table}[t]
\centering
\begin{tabular}{|c|c|c|c|c|c|}\hline Function & Degree & Order at $Z_+$ & Order at $Z_-$  & Order at $S_+$ & Order at $S_-$   \\\hline $z$ & $n$ & $n$ & $-n$ & $0$ & $0$ \\\hline $w$ & $2$ & $-1$ & $-1$ & $1$ & $1$ \\\hline $s$ & $3$ & $-2$ & $2$ & $1$ & $-1$ \\\hline  $\mu_+$ & ${(n-1)/}{2}$ & ${(n-3)/}{2}$ & -${(n-1)/}{2}$ & 1 & 0\\\hline $\mu_-$ & ${(n-1)/}{2}$ & -${(n-1)/}{2}$ & ${(n-3)/}{2}$ & 0 & 1 \\\hline \end{tabular}
\caption{The orders of the functions $z,w,s, \mu_\pm$ at the points $Z_\pm, S_\pm \in \Gamma$. The order of these functions at any other point of $\Gamma$ is zero.}\label{table}
\end{table}
\begin{proposition}\label{prop:rankOne}
The pair $\D_+\D_-$, $T^n$ of commuting difference operators is of \textit{rank $1$}, which means that the generic common eigenspace of these operators is $1$-dimensional.
\end{proposition}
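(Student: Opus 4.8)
The plan is to translate the statement about the commuting \emph{difference operators} $\D_+\D_-$ and $T^n$ into a statement about the two-dimensional solution space of an ordinary second order difference equation, and then read the answer off the form \eqref{charp} of the spectral polynomial. First I would record the elementary fact that makes everything work: by \eqref{prodOP} the operator $\D_+\D_-$ is supported in $[-1,1]$ and its outermost coefficients are the $\alpha_k$, which are non-vanishing, so for every $w\in\C$ the operator $\D_+\D_- - w\,\Id$ is a properly bounded difference operator of order $2$; hence (over complex bi-infinite sequences) its kernel $E_w:=\Ker{(\D_+\D_- - w\,\Id)}$ is two-dimensional. Since $\D_+\D_-$ commutes with $T^n$, the space $E_w$ is $T^n$-invariant; write $M_w$ for the resulting $2\times 2$ operator $T^n|_{E_w}$, a kind of ``monodromy at energy $w$''. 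By the very definition of the affine spectral curve $\Gamma_a$ as the joint spectrum of $\D_+\D_-$ and $T^n$, a common eigenspace of $\D_+\D_-$ and $T^n$ at a point $(z,w)\in\Gamma_a$ is precisely the $z$-eigenspace of $M_w$. Thus the rank of the pair equals the generic dimension of an eigenspace of the $2\times 2$ matrix $M_w$.

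Next I would identify the eigenvalues of $M_w$. A number $z$ is an eigenvalue of $M_w$ exactly when there is a non-zero sequence $\xi$ with $T^n\xi=z\xi$ and $\D_+\D_-\xi=w\xi$, i.e. exactly when $(z,w)\in\Gamma_a$, i.e. exactly when $p(z,w)=0$ with $p$ as in \eqref{charp}. Multiplying by $z$, the equation $p(z,w)=0$ becomes the quadratic $z^2+q(w)z+1=0$ in $z$, so the set of eigenvalues of $M_w$ coincides with the set of roots of this quadratic. Since \emph{every} root of the quadratic is an eigenvalue of the $2\times 2$ matrix $M_w$, whenever the two roots are distinct the matrix $M_w$ has two distinct eigenvalues, hence is diagonalizable with each eigenspace one-dimensional.

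Finally I would check the genericity: the two roots of $z^2+q(w)z+1=0$ coincide only when $q(w)^2-4=0$, and since $q$ is non-constant (it has degree $n$ by Proposition~\ref{prop:degrees}) this is a non-constant polynomial in $w$, so it vanishes for only finitely many $w$. Consequently, for all but finitely many $w$ — and therefore for a generic point $(z,w)$ of $\Gamma_a$, equivalently of $\Gamma$ — the common eigenspace of $\D_+\D_-$ and $T^n$ is one-dimensional, which is the rank $1$ assertion.

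I do not expect a genuine obstacle here. The only point that needs to be written carefully is the middle step: the identification of the spectrum of the finite-dimensional monodromy $M_w$ with the roots of $p(\cdot\,,w)$, together with the multiplicity bookkeeping (an eigenvalue of the $2\times 2$ matrix $M_w$ of algebraic multiplicity two cannot occur among two \emph{distinct} roots of the quadratic, since then the remaining root would fail to be an eigenvalue at all). Everything else follows immediately from the properly-bounded form \eqref{prodOP} of $\D_+\D_-$ and from Proposition~\ref{prop:degrees}.
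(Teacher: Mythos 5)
Your proof is correct, and it takes a route dual to the one in the paper. The paper fixes a generic value of $z$ and looks at the restriction of $\D_+\D_-$ to the $n$-dimensional space $\v{z}$, i.e.\ the $n\times n$ matrix \eqref{todaMatrix}: by \eqref{charp} its characteristic polynomial is $q(w)+z+z^{-1}$ up to a constant, which has $n$ distinct roots in $w$ for generic $z$, so all eigenspaces are one-dimensional. You instead fix a generic value of $w$ and look at the restriction of $T^n$ to the two-dimensional solution space $E_w=\Ker{(\D_+\D_- - w)}$ (two-dimensional precisely because $\alpha_k\neq 0$ makes $\D_+\D_- - w$ a properly bounded operator of order $2$), identify the spectrum of this $2\times 2$ monodromy with the roots of $z^2+q(w)z+1=0$, and note that the discriminant $q(w)^2-4$ is a non-constant polynomial, so generically the two roots are distinct. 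Both arguments ultimately read the answer off the explicit form \eqref{charp}, but yours has the minor advantages of working with a $2\times 2$ matrix rather than an $n\times n$ one and of making the genericity condition completely explicit (the finitely many $w$ with $q(w)^2=4$, which are exactly the branch points of $w$ appearing later in the paper); the paper's version is shorter because the simplicity of the spectrum of \eqref{todaMatrix} for generic $z$ is immediate from the degree of $q$. Your multiplicity bookkeeping for $M_w$ is also correct: since every root of the quadratic is an eigenvalue of the $2\times 2$ matrix $M_w$, two distinct roots force diagonalizability with one-dimensional eigenspaces. No gaps.
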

\begin{proof}
As follows from the explicit form  \eqref{charp} for the characteristic polynomial of the matrix \eqref{todaMatrix}, for generic $z$ that matrix has distinct eigenvalues and hence one-dimensional eigenspaces. 
\end{proof}
For a generic point $(z,w) \in \Gamma_a$, let $\xi \in \R^\infty$ be the corresponding common eigenvector of $\D_+\D_-$ and $T^n$, normalized by the condition $\xi_0 = 1$.
\begin{proposition}\label{prop:ev}
The components $\xi_k$ of $\xi$ extend to meromorphic functions on $\Gamma$. The corresponding vector-function $\xi$ on $\Gamma$ satisfies the equations
\begin{equation}\label{xidefeqns}
T^n\xi = z \xi, \quad \D_+\D_- \xi = w \xi.
\end{equation}
\end{proposition}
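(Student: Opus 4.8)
The plan is to construct $\xi$ explicitly, for a generic point of the spectral curve, by Cramer's rule applied to the finite-dimensional eigenvalue problem on $\v{z}$, and then to observe that the resulting expressions are rational in the coordinate functions $z$ and $w$, hence extend meromorphically to $\Gamma$. Concretely, I would fix, for each $z \in \C^*$, the basis $e^{1}, \dots, e^{n}$ of $\v{z}$ described just before \eqref{todaMatrix} (so $e^j_k = \delta^j_k$ for $k = 1, \dots, n$), in which the restriction of $\D_+\D_-$ to $\v{z}$ is the almost-tridiagonal matrix $M(z)$ of \eqref{todaMatrix}; its entries are Laurent polynomials in $z$ of degree between $-1$ and $1$. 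A sequence $\xi \in \v{z}$ with coordinate vector $(\hat\xi_1, \dots, \hat\xi_n)^T$ is a $w$-eigenvector of $\D_+\D_-$ exactly when $(M(z) - wI)(\hat\xi_1, \dots, \hat\xi_n)^T = 0$.

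On $\Gamma_a$ one has $\det(M(z) - wI) = 0$, and by Proposition \ref{prop:rankOne} the matrix $M(z) - wI$ has rank exactly $n-1$ at a generic point of $\Gamma_a$; hence, up to scale, its kernel is spanned by the vector of signed $(n-1)\times(n-1)$ minors obtained by deleting the last row. Each such minor is a polynomial in the entries of $M(z) - wI$, hence a Laurent polynomial in $z$ and a polynomial in $w$; in particular it is a rational function on $\Gamma_a$ and therefore extends to a meromorphic function on $\Gamma$, which I call $\hat\xi_k$, $k = 1, \dots, n$. The last of these, $\hat\xi_n$, is the determinant of the top-left $(n-1)\times(n-1)$ tridiagonal block of $M(z) - wI$, which is a nonzero polynomial in $w$ alone (of degree $n-1$), hence not identically zero on $\Gamma$. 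Since $\xi \in \v{z}$ forces $\xi_0 = z^{-1}\xi_n$, the normalization $\xi_0 = 1$ is achieved by setting $\xi_k := z\,\hat\xi_k/\hat\xi_n$ for $k = 1, \dots, n$ and extending to all $k \in \Z$ by $\xi_{k+n} := z\xi_k$; each $\xi_k$ is then a quotient of meromorphic functions on $\Gamma$ times an integer power of $z$, hence meromorphic. Both identities in \eqref{xidefeqns} now hold at every generic point of $\Gamma$: the first by quasi-periodicity, the second because rows $1, \dots, n$ of $(M(z)-wI)\hat\xi = 0$ are precisely the three-term recurrence of \eqref{prodOP}, which then propagates to all $k$ by $n$-periodicity of $\alpha, \beta$ and quasi-periodicity of $\xi$. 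As both sides of each identity are meromorphic on the connected surface $\Gamma$ and agree on a nonempty open set, they agree everywhere.

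There is no serious obstacle here; the only thing requiring care is the bookkeeping in passing between statements valid at a generic point of $\Gamma_a$ and statements valid on all of $\Gamma$. One must note that the cofactor construction yields a genuinely single-valued object on $\Gamma$ (it is a rational expression in the coordinate functions $z, w$), that the cofactor vector is not identically zero (which is exactly the content of the rank being $n-1$, i.e. Proposition \ref{prop:rankOne}), and that the normalizing factor corresponding to $\xi_0$ does not vanish identically — this is the reason for singling out the $z$-free minor $\hat\xi_n$. Everything beyond these observations is the identity theorem for meromorphic functions.
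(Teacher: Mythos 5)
Your proof is correct and takes essentially the same route as the paper's: both construct the eigenvector as a row of cofactors of $L - w\Id$ (the paper uses the first row of the comatrix and normalizes by $\eta_0 = z^{-1}\eta_n$, you use the cofactors along the last row and normalize by $\hat\xi_n$), observe that the entries are rational in $(z,w)$, and verify that the normalizing component is not divisible by the defining polynomial of $\Gamma_a$, hence not identically zero. The differences are purely cosmetic.
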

\begin{proof}
For $(z,w) \in \Gamma_a$, let $\eta = (\eta_1, \dots, \eta_n)$ be the first row of the comatrix of $L - w\Id$, where $L$ is given by~\eqref{todaMatrix}. Extend $\eta$ to a bi-infinite sequence by the rule $\eta_{k+n} = z \eta_k$. Then $\eta$ is a common eigenvector of $T^n$ and $\D_+\D_-$:
$
T^n\eta = z \eta$, $\D_+\D_- \eta = w \eta.
$
Furthermore, the components of $\eta$ are, by construction, rational functions of $z$ and $w$. So, to obtain the desired function $\xi$, it remains to normalize $\eta$:
$
\xi_k = {\eta_k}/{\eta_0}.
$
Note that $\eta_0 = z^{-1}\eta_n$ does not vanish identically on $\Gamma_a$, because $\eta_n$ is a polynomial in $z,w$ which is linear in $z$ and hence cannot be divisible by the defining polynomial of $\Gamma_a$. So, $\xi$ is a well-defined rational vector-function of $z,w$, and hence a meromorphic function on $\Gamma$.
\end{proof}
We call the vector-function $\xi$ the \textit{eigenvector function}. Its analytic properties are studied in detail in the next Section \ref{ss:evf}. 
\begin{remark}\label{rem:holo}
Note that at every point $X \in \Gamma \setminus \{Z_\pm\}$, the vector-function $\xi$ is meromorphic in the following strong sense: there exists a local holomorphic function $f$ such that $f\xi$ is holomorphic at $X$. Moreover, the function $f$ can chosen in such a way that  $(f\xi)(X)$ does not vanish. Therefore, the value of the function $\xi$ at any point $X \in \Gamma \setminus \{Z_\pm\}$ determines a direction in the infinite-dimensional projective space $\mathbb P^\infty$, regardless of whether the components of $\xi$ are finite or infinite (note also that this direction does not change if we replace our particular normalization $\xi_0 = 1$ by any other normalization). This, is however, not true at the points $Z_\pm$. At those points, the components $\xi_k$ of $\xi$ are still meromorphic, but the order of the pole of $\xi_k$ is an unbounded function of $k$ (see Proposition \ref{behinf} below), so there exists no $f$ such that $f\xi$ is holomorphic. In particular, the value of $\xi$ at $Z_\pm$ does not determine any direction.

\end{remark}
We now show that every operator $\mathcal L$ from the commutative algebra $\mathcal A=\langle \D_+ , \D_-,T^{\pm n} \rangle$ gives rise to a meromorphic function $f_{\mathcal L}$ on $\Gamma$, which is holomorphic everywhere except possibly the points $Z_\pm$ and satisfies
$
\mathcal L \xi = f_{\mathcal L} \xi.
$
In particular, the assignment $\mathcal L \mapsto f_{\mathcal L}$ is a homomorphism from $\mathcal A$ to the algebra of meromorphic functions on $\Gamma$ which are holomorphic in $\Gamma \setminus \{Z_\pm\}$. We already have $f_{T^{\pm n}}= z^{\pm 1}$ and $f_{\D_+\D_-} = w$, so it remains to construct the functions $f_{\D_+}$ and $f_{\D_-}$ (of course, one of them determines the other, since their product must be equal to $w$). We denote these functions by $\mu_+$ and $ \mu_-$:
\begin{proposition}\label{musholo}
There exist meromorphic functions $\mu_+, \mu_-$ on $\Gamma$ which are holomorphic in $\Gamma \setminus \{Z_\pm\}$ and satisfy
$
\D_\pm \xi = \mu_\pm \xi.
$
Furthermore, we have $\mu_+\mu_- = w$.
\end{proposition}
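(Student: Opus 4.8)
The plan is to use that $\D_+$ and $\D_-$ belong to the commutative algebra $\mathcal A = \C[\D_+,\D_-,T^{\pm n}]$, so that on the generically one‑dimensional joint eigenspaces of $T^n$ and $\D_+\D_-$ they act by scalars; those scalars will be exactly $\mu_+$ and $\mu_-$. First I would observe that $\D_+$ commutes with $T^n$ (periodicity) and with $\D_-$, hence with $\D_+\D_-$; therefore $\D_+$ preserves the subspace $\v{z}$ and, within it, the $w$‑eigenspace of $\D_+\D_-$. By Proposition~\ref{prop:rankOne} this joint eigenspace is one‑dimensional for $(z,w)$ in a dense open subset of $\Gamma_a$ and is spanned by the eigenvector function $\xi$, so $\D_+\xi$ is proportional to $\xi$ there. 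Reading off the component of index $0$ and using the normalization $\xi_0 = 1$ gives the identity $\D_+\xi = \mu_+\,\xi$ with $\mu_+ := a_0\,\xi_{(n-3)/2} + b_0\,\xi_{(n-1)/2}$, where $a,b$ are the coefficients of $\D_+ = aT^{(n-3)/2} + bT^{(n-1)/2}$. Since each $\xi_k$ is meromorphic on $\Gamma$ by Proposition~\ref{prop:ev}, this formula exhibits $\mu_+$ as a meromorphic function on the whole of $\Gamma$, and the equation $\D_+\xi = \mu_+\xi$, being an identity between meromorphic vector functions that holds on a dense open set, holds on all of $\Gamma$. Defining $\mu_-$ analogously from $\D_-$, the relation $\mu_+\mu_- = w$ drops out: $\mu_+\mu_-\,\xi = \D_+(\mu_-\xi) = \D_+\D_-\xi = w\,\xi$ and $\xi \not\equiv 0$.

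Next I would establish holomorphy on $\Gamma \setminus \{Z_\pm\}$ pointwise. Let $X \in \Gamma \setminus \{Z_\pm\}$. By Remark~\ref{rem:holo} there is a local holomorphic function $f$ near $X$ with $f\xi$ holomorphic at $X$ and $(f\xi)(X) \neq 0$; pick an index $m$ with $(f\xi_m)(X) \neq 0$. Because $\D_+$ acts only on the sequence index, with scalar coefficients $a_k, b_k$, it commutes with multiplication by $f$, so $\D_+(f\xi) = \mu_+\,(f\xi)$; the left‑hand side is a finite $\R$‑linear combination of functions holomorphic at $X$, hence holomorphic at $X$. Thus $\mu_+\,f\xi_m$ is holomorphic at $X$, and dividing it by the holomorphic, non‑vanishing function $f\xi_m$ shows that $\mu_+$ is holomorphic at $X$. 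The same argument applies verbatim to $\mu_-$, completing the proof.

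I expect the only genuinely delicate point to be the behavior of $\xi$ near the two points $Z_\pm$: as recorded in Remark~\ref{rem:holo}, the orders of the poles of the $\xi_k$ there grow without bound in $k$, so no local factor $f$ can make $f\xi$ holomorphic, and this is exactly why holomorphy of $\mu_\pm$ is claimed only on $\Gamma \setminus \{Z_\pm\}$; in the pointwise argument one must be disciplined about invoking Remark~\ref{rem:holo} only at points of $\Gamma \setminus \{Z_\pm\}$. Everything else is the standard dictionary by which commuting operators act diagonally on rank‑one joint eigenspaces; the one bookkeeping check worth carrying out with care is that the closed expression for $\mu_+$ in terms of the $\xi_k$ defines a single‑valued meromorphic function on the Riemann surface $\Gamma$ and not merely on the affine curve $\Gamma_a$ — which is exactly the content of Proposition~\ref{prop:ev}.
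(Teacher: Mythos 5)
Your proposal is correct and follows essentially the same route as the paper: use rank one (Proposition \ref{prop:rankOne}) to get $\D_\pm\xi=\mu_\pm\xi$ at generic points, extract $\mu_\pm$ as a ratio of meromorphic components of $\xi$ to see it is meromorphic on all of $\Gamma$, and then use the renormalization of Remark \ref{rem:holo} at each $X\in\Gamma\setminus\{Z_\pm\}$ to get holomorphy there, with $\mu_+\mu_-=w$ following from $\D_+\D_-\xi=w\xi$. The only difference is cosmetic: you write out the explicit formula $\mu_+=a_0\xi_{(n-3)/2}+b_0\xi_{(n-1)/2}$ where the paper simply observes that both sides of the eigenvalue equation are meromorphic.
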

\begin{proof}
By Proposition~\ref{prop:rankOne}, a generic common eigenspace of $\D_+\D_-$ and $T^n$ is one-dimensional, and is therefore generated by the vector $\xi$, evaluated at the corresponding point of the Riemann surface $\Gamma$. For this reason, since the operator $\D_+$ commutes with $\D_+\D_-$ and $T^n$, at generic points of $\Gamma$ we must have
\begin{equation}\label{mupluschar}
\D_\pm \xi = \mu_\pm \xi
\end{equation}
for certain numbers $\mu_\pm \in \C$ depending on the point of $\Gamma$.
Furthermore, since the left-hand side of~\eqref{mupluschar} is a meromorphic vector-function on $\Gamma$, and so is $\xi$, it follows that the functions $\mu_\pm$ also extend to meromorphic functions on the whole of $\Gamma$. Moreover, given a point $X \in \Gamma \setminus \{Z_\pm\}$, renormalizing $\xi$ if necessary we can assume that $\xi(X)$ is finite and non-zero (see Remark \ref{rem:holo}). But then \eqref{mupluschar} implies that the functions $\mu_\pm$ are holomorphic at $X$. Finally, the equation $\mu_+\mu_- = w$ follows directly from \eqref{mupluschar} and the second of equations \eqref{xidefeqns}.
\end{proof}

\begin{proposition}\label{sigmamumu}
We have $\sigma^* \mu_+ = \mu_-$.
\end{proposition}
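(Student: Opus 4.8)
The plan is to exploit the duality $\D_-=\D_+^*$ (Proposition~\ref{prop:cdo}) together with the pairing \eqref{pairing} between the quasi-periodic spaces $\v{z}$ and $\v{z^{-1}}$, the key point being that the involution $\sigma$, which on the affine curve sends $(z,w)$ to $(z^{-1},w)$, interchanges precisely these two spaces. Concretely, I would let $\xi$ be the eigenvector function of Proposition~\ref{prop:ev} and set $\xi^\sigma:=\sigma^*\xi$. At a point $X=(z,w)$ of the affine curve one has $\xi(X)\in\v{z}$ with $\D_\pm\xi(X)=\mu_\pm(X)\xi(X)$, while, since $z\circ\sigma=z^{-1}$ and $w\circ\sigma=w$, the vector $\xi^\sigma(X)=\xi(\sigma X)$ lies in $\v{z^{-1}}$ and satisfies $\D_\pm\xi^\sigma(X)=(\sigma^*\mu_\pm)(X)\,\xi^\sigma(X)$ by Propositions~\ref{prop:ev} and \ref{musholo}. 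Since $\D_+$ is $n$-periodic it preserves $\v{z}$, and, as recalled in Section~\ref{sec:primer}, the restriction of $\D_+$ to $\v{z}$ is adjoint to the restriction of $\D_+^*=\D_-$ to $\v{z^{-1}}$ with respect to the pairing \eqref{pairing}. Applying this to $\xi$ and $\xi^\sigma$ gives
$$
\mu_+\,\langle\xi,\xi^\sigma\rangle=\langle\D_+\xi,\xi^\sigma\rangle=\langle\xi,\D_-\xi^\sigma\rangle=(\sigma^*\mu_-)\,\langle\xi,\xi^\sigma\rangle ,
$$
so that, once we know that the meromorphic function $X\mapsto\langle\xi(X),\xi^\sigma(X)\rangle$ on $\Gamma$ is not identically zero, we obtain $\mu_+=\sigma^*\mu_-$ on a dense open subset and hence, by the identity principle, everywhere on $\Gamma$. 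Applying $\sigma^*$ and using $\sigma^2=\Id$ then yields $\sigma^*\mu_+=\mu_-$, which is the assertion.

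It thus remains to prove that $\langle\xi,\xi^\sigma\rangle\not\equiv 0$, and this is the only step requiring genuine work. I would argue at a generic point $X=(z,w)$, where by Proposition~\ref{prop:rankOne} the operator $A:=\D_+\D_-|_{\v z}$, whose matrix in the basis $e^1,\dots,e^n$ is \eqref{todaMatrix}, has simple spectrum, $w$ being one of its eigenvalues. Its adjoint with respect to the pairing \eqref{pairing} is $(\D_+\D_-)^*|_{\v{z^{-1}}}=\D_+\D_-|_{\v{z^{-1}}}$, because $\D_+\D_-$ is self-dual by \eqref{prodOP}; moreover the substitution $z\mapsto z^{-1}$ leaves the polynomial $p$ in \eqref{charp} unchanged, so $\D_+\D_-$ has the same characteristic polynomial on $\v z$ and on $\v{z^{-1}}$ and hence the same eigenvalues. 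Choosing eigenbases $\xi^{(1)},\dots,\xi^{(n)}$ of $\v z$ and $\eta^{(1)},\dots,\eta^{(n)}$ of $\v{z^{-1}}$ with matching eigenvalues, the standard orthogonality of eigenvectors of mutually adjoint operators belonging to distinct eigenvalues shows that the matrix $\bigl(\langle\xi^{(i)},\eta^{(j)}\rangle\bigr)$ is diagonal; since the pairing \eqref{pairing} between $\v z$ and $\v{z^{-1}}$ is non-degenerate and the two families are bases, this diagonal matrix is invertible, so all of its diagonal entries are nonzero. Because $w$ is a simple eigenvalue, $\xi(X)$ is a nonzero multiple of the corresponding $\xi^{(i)}$ and $\xi^\sigma(X)$, being a nonzero element of the one-dimensional $w$-eigenspace of $\D_+\D_-|_{\v{z^{-1}}}$, is a nonzero multiple of $\eta^{(i)}$, so $\langle\xi(X),\xi^\sigma(X)\rangle\neq 0$. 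As such points $X$ form a dense open subset of $\Gamma$, this establishes the non-vanishing.

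The main obstacle is exactly this non-vanishing claim: without it the pairing identity of the first paragraph is vacuous, and proving $\langle\xi,\xi^\sigma\rangle\not\equiv 0$ is what forces one to bring in the simplicity of the spectrum (Proposition~\ref{prop:rankOne}), the self-duality of $\D_+\D_-$, and the non-degeneracy of \eqref{pairing}. By contrast, the rest of the argument — keeping track of which quasi-periodic space each eigenvector lives in and that $\D_+$, $\D_-$ are mutually adjoint on the pair $\v z$, $\v{z^{-1}}$ — is routine given the setup of Sections~\ref{sec:primer} and \ref{ss:genus}.
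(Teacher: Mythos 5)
Your proposal is correct and follows essentially the same route as the paper: the same adjointness identity $\mu_+\langle\xi,\sigma^*\xi\rangle=\langle\xi,\D_-\sigma^*\xi\rangle=(\sigma^*\mu_-)\langle\xi,\sigma^*\xi\rangle$ at a generic point, reduced to showing $\langle\xi(X),\xi(\sigma X)\rangle\neq 0$. Your non-vanishing argument via the invertible diagonal Gram matrix of matched eigenbases is just a repackaging of the paper's argument that the orthogonal complement of $\Ker(\D_+\D_--w_0)\vert_{\v{z}}$ is $\Im(\D_+\D_--w_0)\vert_{\v{z^{-1}}}$ together with diagonalizability; both rest on the self-duality of $\D_+\D_-$ and the simplicity of its spectrum for generic $z$.
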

\begin{remark}
The existence of the involution $\sigma$ on $\Gamma$ is due to the invariance of the algebra $\mathcal A=\langle \D_+ , \D_-,T^{\pm n} \rangle$ under operator duality: $\mathcal A = \mathcal A^* := \{ \mathcal L^* \mid \mathcal L \in \mathcal A\}$. So, since $\D_+^* = \D_-$, it is only natural that $\sigma^*\mu_+ = \mu_-$.
\end{remark}
\begin{proof}[Proof of Proposition \ref{sigmamumu}]
It suffices to show that $\mu_+(X_+) = \mu_-(X_-)$,  where $X_\pm$ is a generic pair of points interchanged by $\sigma$. Since the points $X_\pm$ are generic, one can assume that the vectors $\xi(X_\pm)$ are finite. Under this assumption, we have
$
\D_\pm \xi(X_\pm) = \mu_\pm(X_\pm) \xi(X_\pm).
$
Furthermore, we have $ \xi(X_\pm) \in \v{z_+^{\pm 1}}$, where $z_+ := z(X_+) = z(X_-)^{-1}$. So, using the pairing~\eqref{pairing} between $\v{z_+}$ and $\v{z_+^{-1}}$, we get
\begin{equation}\label{mumustar}
\mu_+(X_+)\left\langle\xi(X_+), \xi(X_-) \right\rangle =  \left\langle \D_+ \xi(X_+), \xi(X_-) \right\rangle =  \left\langle  \xi(X_+), \D_- \xi(X_-) \right\rangle = \mu_-(X_-)\left\langle\xi(X_+), \xi(X_-)\right \rangle.
\end{equation}
So, to complete the proof, it suffices to show that $\left\langle\xi(X_+), \xi(X_-) \right\rangle \neq 0$.
 To that end, observe that $\xi(X_+)$ belongs to the kernel of the operator $(\D_+\D_- - w_0)\vert_{\v{z_+}}$, where $w_0 := w(X_\pm)$, and, in the generic case, spans that kernel. So, the orthogonal complement to $\xi(X_+)$ with respect to the pairing~\eqref{pairing} is the image of the dual operator 
 $$((\D_+\D_- - w_0)\vert_{\v{z_+\vphantom{z_+^-1}}})^* = \left(\D_+\D_- - w_0\right)\vert_{\v{z_+^{-1}}}.$$
 But for generic $z_+$ the operator $\D_+\D_-$ has simple spectrum on $\v{z_+^{-1}}$ and is, therefore, diagonalizable, which in particular implies $$\Im (\D_+\D_- - w_0)\vert_{\v{z_+^{-1}}} \cap \Ker (\D_+\D_- - w_0)\vert_{\v{z_+^{-1}}} = 0.$$ Therefore, we have $\left\langle\xi(X_+), \xi(X_-) \right\rangle \neq 0$, as desired. \end{proof}

Now, define a meromorphic function $s$ on $\Gamma$ by the formula
\begin{equation}\label{SDEF}
s:=\frac{\mu_+}{z\mu_-}.
\end{equation}
This function does not correspond to any difference operator $\mathcal L \in \mathcal A = \C[ \D_+, \D_-, T^{\pm n}]$ but can be thought of as corresponding to a pseudo-difference operator $T^{-n}\D_+\D_-^{-1}$. Accordingly, the function $s$ satisfies the equation
\begin{equation}\label{SEQN}
(\D_+ - sT^n \D_-)\xi = 0.
\end{equation}
Recall that the operator on the left-hand side encodes the family of polygons obtained from $P$ by means of rescaling \eqref{rescaling}. 
\begin{proposition}\label{degIneq}
The function $s$ has degree $3$. There exist three distinct points on $\Gamma$ at which $s = -1$. The function $z$ takes three distinct values at those points.

\end{proposition}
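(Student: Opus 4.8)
The plan is to compute the degree of $s$ by counting its zeros and poles, using Table~\ref{table}, and then to analyze the level set $s = -1$ together with the behavior of $z$ on it. First I would establish the orders of $\mu_\pm$ at the points $Z_\pm$, $S_\pm$. From Proposition~\ref{musholo} the functions $\mu_\pm$ are holomorphic on $\Gamma \setminus \{Z_\pm\}$, so their only zeros away from $Z_\pm$ come from zeros of $w = \mu_+\mu_-$, i.e. from $S_\pm$; and by Proposition~\ref{sigmamumu} we have $\sigma^*\mu_+ = \mu_-$, so the orders of $\mu_+$ and $\mu_-$ at $Z_+$, $S_+$ agree respectively with the orders of $\mu_-$, $\mu_+$ at $Z_-$, $S_-$. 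To pin down the orders at $Z_\pm$ one uses that $\D_+ = \D_l = aT^{(n-3)/2} + bT^{(n-1)/2}$ is a first-order operator, hence $\mu_+$ is a ``partial monodromy''-type quantity whose pole/zero structure at $Z_\pm$ is governed by the shift exponents $(n-3)/2$ and $(n-1)/2$; combined with $\mu_+\mu_- = w$ (order $-1$ at each of $Z_\pm$) and $\sigma$-symmetry, this forces exactly the entries in the $\mu_+$, $\mu_-$ rows of Table~\ref{table}. Granting those, the divisor of $s = \mu_+ / (z\mu_-)$ is computed additively: at $Z_+$ the order is $(n-3)/2 - n - (-(n-1)/2) = -2$; at $Z_-$ it is $-(n-1)/2 + n - (n-3)/2 = 2$; at $S_+$ it is $1 - 0 - 0 = 1$; at $S_-$ it is $0 - 0 - 1 = -1$; and $s$ is holomorphic and nonvanishing elsewhere since $\mu_\pm$ are. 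Thus $s$ has a double pole at $Z_+$ and a simple pole at $S_-$, for a total polar degree $3$, so $\deg s = 3$.

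Next I would count the solutions of $s = -1$. A priori a degree-$3$ function has three preimages of $-1$ counted with multiplicity, so the claim is that all three are simple, i.e. $ds \neq 0$ at each. The key input is equation~\eqref{SEQN}: at a point $X$ with $s(X) = -1$, the operator $\D_+ - s(X)T^n\D_- = \D_+ + T^n\D_-$ annihilates $\xi(X)$. Recalling $\D_- = -T^{-n}\D_r$, we get $\D_+ + T^n\D_- = \D_l - \D_r$; but $\D_l + \D_r = \D$ and $\D_l - \D_r = \D_l - (\D - \D_l) = 2\D_l - \D$, so the condition is that $\xi(X)$ lies in $\Ker(2\D_l - \D)\vert_{\v{z(X)}}$. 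Equivalently, $\xi(X) \in \Ker \D_l \cap \Ker \D_r$ would force $s = -1$; but $\Ker\D_l \cap \Ker\D_r = 0$ by Proposition~\ref{alt}(2) — so instead the right reading is that $s = -1$ exactly detects the finitely many $z$ for which $\D_l$ and $\D_r$, restricted to $\v{z}$, have a common eigenvector with \emph{equal} eigenvalue up to the expected sign twist. I would make this precise by observing that on $\v z$ the operators $\D_l$, $\D_r$ are rank-one-compatible, and $s+1 = 0$ is equivalent to $\mu_+ = -z\mu_- = \mu_+(\sigma(\cdot))\cdot(\text{a nonvanishing factor})^{-1}$-type relation, which one checks has simple zeros by differentiating and using that $z$ is a local coordinate away from $Z_\pm$ and its branch points.

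For the last assertion — that $z$ takes three \emph{distinct} values at the three points where $s = -1$ — I would argue as follows. If $z$ took the same value $z_0$ at two of these points, those two points would be interchanged by $\sigma$ (since the only fibers of $z$ with fewer than $n$ points are the branch fibers, but more to the point two points with the same $z$-value that are both in the fiber must be $\sigma$-conjugate only in low-genus degenerate situations — here I would instead use that the three points where $s=-1$ are \emph{fixed} by no obvious symmetry). More robustly: apply $\sigma^*$ to $s = \mu_+/(z\mu_-)$; using $\sigma^*\mu_\pm = \mu_\mp$ and $\sigma^*z = z^{-1}$ we get $\sigma^*s = \mu_-/(z^{-1}\mu_+) = z/(s^{-1}\cdot\text{stuff})$, i.e. $\sigma^*s = 1/s$ up to bookkeeping — so $\sigma$ maps the set $\{s = -1\}$ to itself. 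Since this set has three points and $\sigma$ is an involution, $\sigma$ must fix at least one of them, say $X_0$; a fixed point of $\sigma$ is a branch point of $w$ where $z(X_0) = z(X_0)^{-1}$, so $z(X_0) = \pm 1$. The remaining two points are either both $\sigma$-fixed (again with $z = \pm 1$) or swapped by $\sigma$ (so their $z$-values are $z_1, z_1^{-1}$). Then I would rule out coincidences by a direct local computation at the relevant points using the explicit tridiagonal matrix~\eqref{todaMatrix} and the rescaled-operator picture: a coincidence of $z$-values among the $s=-1$ points would force an extra relation among the coefficients $a,b,c,d$ incompatible with the alternating-sign condition~\eqref{altCond}.

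\textbf{Main obstacle.} The hard part will be pinning down the orders of $\mu_\pm$ at $Z_\pm$ rigorously — the eigenvector function $\xi$ has poles of unbounded order at $Z_\pm$ (Remark~\ref{rem:holo}), so ``$\D_\pm\xi = \mu_\pm\xi$'' cannot be evaluated naively there; one must work with a suitably renormalized local section of $\xi$ near $Z_\pm$ (essentially a WKB/Bloch-asymptotics expansion of $\xi_k$ as $k \to \pm\infty$) to read off the leading behavior of $\mu_\pm$, and then cross-check against $\mu_+\mu_- = w$ and the $\sigma$-symmetry. The second delicate point is the \emph{simplicity} of the three solutions of $s = -1$ and the distinctness of their $z$-values: this is where I expect to need the weak-convexity hypothesis again, via Proposition~\ref{alt}, to exclude the degenerate coincidences, rather than a purely formal argument.
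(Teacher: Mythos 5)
There is a genuine gap, in two places. First, your degree computation is circular. You read the orders of $\mu_\pm$ at $S_\pm$ off Table~\ref{table}, but in the paper those entries (in particular the fact that $S_+ \neq S_-$ and that the two simple zeros of $w$ are split one to $\mu_+$ and one to $\mu_-$) are established in Proposition~\ref{cor:table}, whose proof \emph{uses} $\deg s = 3$. Without knowing $S_+ \neq S_-$ your divisor count does not close: if $S_+ = S_-$ were a single double zero of $w$, then $\mu_+$ and $\mu_-$ would each vanish simply there, $s = \mu_+/(z\mu_-)$ would have order $0$ there, and your count would give $\deg s = 2$ (only the double pole at $Z_+$). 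The orders at $Z_\pm$ alone (which you could indeed extract from $\ord_{Z_\pm}\xi_k = \pm k$, i.e.\ Proposition~\ref{behinf}) only yield $\deg s \geq 2$. The paper avoids all of this with a soft argument: distinct points in a generic fiber $s = s_0$ give linearly independent vectors in $\Ker(\D_+ - s_0 T^n\D_-)$, which has dimension $\ord(\D_+ - s_0T^n\D_-) = 3$, hence $\deg s \leq 3$; equality then follows from the second claim of the proposition rather than feeding into it.

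Second, and more seriously, the heart of the proposition --- three \emph{distinct} points with $s = -1$ at which $z$ takes three \emph{distinct} values --- is never actually proved; you defer it to ``differentiating,'' to ``a direct local computation,'' and to an unspecified incompatibility with the alternating-sign condition. The mechanism the paper uses is the fourth statement of Proposition~\ref{alt}: the operator $\D_+ + T^n\D_- = \D_l - \D_r$ encodes the rescaling of $P$ with $s = -1$, its monodromy is conjugate to a product of the non-negative matrices $L_k$ of \eqref{li}, and Perron--Frobenius together with $\det M < 0$ forces three distinct real eigenvalues $z_1, z_2, z_3$. Each $z_k$ yields a kernel vector of $(\D_l - \D_r)\vert_{\v{z_k}}$, which is simultaneously an eigenvector of $\D_+\D_-$, hence a point $X_k \in \Gamma$; one then checks $(\mu_+ + z\mu_-)(X_k) = 0$ and rules out $\mu_+(X_k) = \mu_-(X_k) = 0$ via Proposition~\ref{alt}(2) to conclude $s(X_k) = -1$. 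This produces the three points with distinct $z$-values in one stroke; the distinctness is not a nondegeneracy condition to be verified afterwards but the very input of the construction. Your $\sigma$-equivariance observation ($\sigma^*s = s^{-1}$, so $\sigma$ preserves $\{s=-1\}$) is correct and is used in the paper, but only later, in the genus bound --- it cannot by itself exclude the degenerate configurations you would need to rule out.
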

\begin{proof}
We first show that $\deg s \leq 3$. Let $X_1, \dots, X_m \in \Gamma$ belong to the level set $s = s_0$. Then, for generic $s_0 \in \C$, these points correspond to distinct points on the affine spectral curve $\Gamma_a$. This, in particular, means that the vectors $\xi(X_1), \dots, \xi(X_m) \in \Ker(\D_+ - s_0T^n \D_-)$ are linearly independent (as joint eigenvectors of $T^n$ and $\D_+\D_-$ corresponding to distinct eigenvalues). But $$\dim \Ker(\D_+ - s_0T^n \D_-) = \ord(\D_+ - s_0T^n \D_-)= 3,$$ so $m \leq 3$, and the degree of $s$ is at most $3$, as desired.\par
We now show that  there exist three distinct points on $\Gamma$ at which $s = -1$, which, in turn, implies that the degree of $s$ is exactly $3$. Since the polygon $P$, given by the operator $\D_+ -  T^n \D_-$, is weakly convex, it follows from the fourth statement of Proposition \ref{alt} that the monodromy of $\D_+ + T^n \D_-$ has simple spectrum. This means that there exist three distinct numbers $z_1, z_2, z_3$ such that the operator $\D_+ +T^n \D_-$ has non-trivial (and hence one-dimensional) kernel on $\v{z_k}$. Let $\xi^{k}$ be the generator of that kernel. Then, since the operator $\D_+  \D_-$ commutes with $\D_+ +T^n \D_-$ and $T^n$, it follows that $\xi^{k} $ is also an eigenvector of $\D_+  \D_-$, corresponding to some eigenvalue $w_k$. Then the three points $(z_k,w_k)$ belong to the affine spectral curve $\Gamma_a$, and thus give rise to at least three points $X_1, X_2, X_3 \in \Gamma \setminus \{Z_\pm\}$ with $z(X_k) = z_k$, $w(X_k) = w_k$. We now claim that $s(X_1) = s(X_2) = s(X_3) = -1$. Indeed, the vector $\xi_k$ spans the $(z_k,w_k)$ joint eigenspace of $T^n$ and  $\D_+  \D_-$. Therefore, at each of the points $X_k$, we have $\xi(X_k) = c_k\xi^{k}$, where $c_k \in \C$ (here we assume here that the vectors $\xi(X_k)$ are finite, which can be always arranged by multiplying $\xi$ by an appropriate meromorphic function, see Remark \ref{rem:holo}). So, by construction of the vectors $\xi^{k}$ we have

$$
(\D_+ +T^n \D_-) \xi(X_k) =  c_k(\D_+ +T^n \D_-) \xi^{i}= 0.
$$
On the other hand,
$$
(\D_+ +T^n \D_-) \xi(X_k) = (\mu_+ + z\mu_-)\vert_{X_k} \xi(X_k),
$$
so \begin{equation}\label{smone}(\mu_+ + z\mu_-)\vert_{X_k} = 0.\end{equation} 
Notice also that $X_k$ cannot be a common zero of $\mu_+$ and $\mu_-$, because that would imply $\xi(X_k) \in \Ker \D_+ \cap \Ker \D_-$, which is not possible by the second statement of Proposition \ref{alt} (the latter applies to $\D_\pm$ since $\D_+ = \Dl$, $\D_- = -T^{-n} \D_r$). Furthermore, $\mu_\pm$ cannot have a pole at $X_k$ by Proposition~\ref{musholo}. But then~\eqref{smone} implies $s(X_k) = -1$, as desired.  
\end{proof}

\begin{remark}\label{rem:scdef}
It follows from Proposition \ref{degIneq} that the function $s$ has the following meaning. Fix some generic $s_0 \in \C$. Then there are three points $X_1, X_2, X_3$ in $\Gamma$ with $s = s_0$. Furthermore, the vectors $\xi(X_k) \in \v{z(X_k)}$ belong to  $ \Ker(\D_+ - s_0 T^n \D_-) $. So, $z(X_1), z(X_2), z(X_3)$ is the spectrum of the monodromy of $\D_+ - s_0T^n \D_-$. In other words, if we consider a meromorphic mapping $\Gamma \to \C^2$ given by the functions $(z,s)$, then its image belongs to the algebraic curve
$$
\Gamma_a' := \{ (z,s) \in \C^* \times \C \mid z \mbox{ is an eigenvalue of the monodromy of }  \D_+ - sT^n \D_- \}.
$$
Using also that $\deg z = n$ and $\deg s = 3$, it is easy to show that the mapping $\Gamma \to \Gamma_a'$ is generically biholomorphic. So, $\Gamma_a'$ is just another affine model of the spectral curve $\Gamma$. This model can be thought of as the joint spectrum of the operators $T^n$ and $T^{-n} \D_+\D_-^{-1}$ (the latter is well-defined on a generic eigenspace of $T^n$). Furthermore, since the operator $ \D_+ - sT^n \D_-$ corresponds to the polygon $R_s( P_{})$, where $R_s$ is the rescaling action \eqref{rescaling}, it follows that $\Gamma_a'$ can be regarded as the graph of the spectrum for the monodromy of $R_s( P_{})$. As explained in \cite{ izosimov2016pentagrams}, this definition of the spectral curve coincides with the one used in \cite{soloviev2013integrability} to prove algebraic integrability of the pentagram map. So, as a Riemann surface, our spectral curve is isomorphic to the one of  \cite{soloviev2013integrability}.

\end{remark}
We are now in a position to prove the main result of the section:
\begin{proposition}\label{genus2}
The genus $g$ of $\Gamma$ satisfies $g \leq 1$.
\end{proposition}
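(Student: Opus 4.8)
The plan is to bound the number of branch points of the degree-$2$ map $w \colon \Gamma \to \P^1$ by $4$ and then apply the Riemann--Hurwitz formula.

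Recall (from the discussion preceding this proposition) that $w$ exhibits $\Gamma$ as a double cover of $\P^1$ with deck transformation $\sigma$, and that the branch points of $w$ are exactly the fixed points of $\sigma$. So it suffices to show that $\sigma$ has at most $4$ fixed points. First I would note that $\sigma^*s = s^{-1}$: indeed $s = \mu_+/(z\mu_-)$ by \eqref{SDEF}, while $\sigma^*\mu_+ = \mu_-$ (Proposition~\ref{sigmamumu}) and $\sigma^*z = z^{-1}$, so $\sigma^*s = \mu_-/(z^{-1}\mu_+) = z\mu_-/\mu_+ = s^{-1}$. Consequently, if $X$ is a fixed point of $\sigma$ then $s(X) = (\sigma^*s)(X) = s(X)^{-1}$, hence $s(X) = \pm 1$; that is, every fixed point of $\sigma$ lies in one of the two disjoint level sets $\{s = 1\}$ and $\{s = -1\}$.

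The level set $\{s = 1\}$ contains at most $3$ points, because $\deg s = 3$ (Proposition~\ref{degIneq}), and therefore accounts for at most $3$ fixed points of $\sigma$. The crucial observation is that $\{s = -1\}$ accounts for exactly \emph{one}. By Proposition~\ref{degIneq}, the level set $\{s = -1\}$ consists of three \emph{distinct} points $X_1, X_2, X_3$ on which $z$ takes three distinct values. Since $\sigma^*s = s^{-1}$, the involution $\sigma$ maps $\{X_1, X_2, X_3\}$ to itself, and an involution of a three-element set fixes either exactly one or exactly three of its points. It cannot fix all three: a fixed point of $\sigma$ satisfies $z = \sigma^*z = z^{-1}$, so $z = \pm 1$ there, and three distinct numbers cannot all lie in $\{1, -1\}$. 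Hence $\sigma$ fixes exactly one of $X_1, X_2, X_3$.

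Adding up the two cases, $\sigma$ has at most $3 + 1 = 4$ fixed points, i.e. $w$ has at most $4$ branch points. The Riemann--Hurwitz formula for the degree-$2$ map $w \colon \Gamma \to \P^1$ then reads $2 - 2g = 2\cdot 2 - B$, where $B \le 4$ is the number of branch points, so $g = (B-2)/2 \le 1$, as claimed. The one substantive ingredient is Proposition~\ref{degIneq} (already established), which both pins $\deg s$ at $3$ and, more importantly, produces three distinct points of $\{s = -1\}$ with three distinct $z$-values; everything else is elementary counting with the involution $\sigma$ and Riemann--Hurwitz, so I expect no real obstacle beyond keeping the bookkeeping straight — in particular, making sure that the branch points of $w$ are exactly $\mathrm{Fix}(\sigma)$ and that $\{s = 1\} \cap \{s = -1\} = \emptyset$.
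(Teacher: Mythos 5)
Your proposal is correct and follows essentially the same route as the paper: derive $\sigma^*s=s^{-1}$ from Proposition~\ref{sigmamumu} and \eqref{SDEF}, conclude that fixed points of $\sigma$ lie in $\{s=\pm1\}$, use Proposition~\ref{degIneq} to show the level set $\{s=-1\}$ contributes exactly one fixed point (the paper phrases this as the three $z$-values being $\pm1, z_0, z_0^{-1}$, which is the same counting argument), and finish with Riemann--Hurwitz.
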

\begin{proof}
The function $w$ on $\Gamma$ is a $2$-fold ramified covering of the Riemann sphere whose branch points coincide with fixed points of the involution $\sigma$. To estimate the number of such fixed points, notice that from Proposition~\ref{sigmamumu} and formula~\eqref{SDEF} it follows that $\sigma^*s = s^{-1}$. Thus, at each fixed point of $\sigma$ we must have $s = \pm 1$. Furthermore, by Proposition~\ref{degIneq}, the function $z$ takes three distinct values at points of $\Gamma$ where $s = -1$, and since the set $s = -1$ is invariant under the involution $\sigma$ which takes $z$ to $z^{-1}$, it follows that those values must be of the form $\pm 1, z_0, {z_0}^{-1}$, where $z_0 \neq \pm 1$. So, $\sigma$ must have exactly one fixed point at the level set $s = -1$. In addition to that, it may have up to three fixed points at the level set $s = 1$, which is up to four fixed points in total. Now, the desired inequality for the genus follows from the Riemann-Hurwitz formula.
\end{proof}
\begin{remark}\label{z1}
In fact, since the values of $z$ at points where $s = -1$ are the eigenvalues of the monodromy of $\D_+ +T^n \D_-$, it follows from formula \eqref{monodet} that they are of the form $-1, z_0,z_0^{-1}$. Another way to see this is to notice that by Proposition \ref{sigmamumu} at fixed points of $\sigma$ we must have $\mu_+ = \mu_-$ and thus $z = s^{-1}$ (here we use that the functions $\mu_\pm$ do not have common zeros and also do not have poles in $\Gamma \setminus \{Z_\pm\}$). This also implies that if $\Gamma$ has genus $1$, then $z = 1$ at points where $s = 1$. In other words, all eigenvalues of the monodromy of the polygon $P$ are equal to $1$. Later on, we will see that this monodromy is in fact the identity. In other words, if the spectral curve is elliptic, then in the setting of Theorem \ref{thm2} the polygon $P$ must be closed (see Remark \ref{psdclosed}).
\end{remark}
\begin{remark}
Note that without weak convexity (used to prove Proposition \ref{degIneq}) we would not be able to say that there is just one fixed point of the involution $\sigma$ at the level $s = -1$. In that case, nothing seems to prevent $\sigma$ from having six fixed points, which  means that $\Gamma$ may be  a genus $2$ curve. Thus, it should be possible to construct a counterexample to Theorems \ref{thm1} and \ref{thm2} in the non-weakly-convex case using genus $2$ curves and their associated genus $2$ theta functions.  \par Another way to obtain the estimate $g \leq 2$ is to use the existence on $\Gamma$ of meromorphic functions of degree $2$ and $3$ (namely, $w$ and $s$). However, this does not guarantee the $g \leq 1$ estimate obtained above.
\end{remark}
We finish this section with two additional results on the spectral curve which will be useful later on.
\begin{proposition}\label{cor:table}
The degrees of the functions $\mu_\pm, s$ and their orders at points $Z_\pm, S_\pm$ are as shown in Table \ref{table}.
\end{proposition}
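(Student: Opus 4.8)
The plan is to collect the orders of $\mu_\pm$ and $s$ at the four distinguished points $Z_\pm, S_\pm$ by combining three pieces of information already available: the identity $\mu_+\mu_- = w$ from Proposition \ref{musholo}, the symmetry $\sigma^*\mu_+ = \mu_-$ from Proposition \ref{sigmamumu}, and the explicit form of the operators $\D_\pm$ (recall $\D_+ = \D_l = aT^{(n-3)/2} + bT^{(n-1)/2}$ and $\D_- = \D_l^* = -T^{-n}\D_r$, so that $\D_- = \tilde cT^{-(n+3)/2} + \tilde dT^{-(n+1)/2}$ in the notation of the excerpt). I would proceed in the following order.

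\emph{Step 1: behavior at $Z_\pm$.} First I would analyze the eigenvector function $\xi$ near $Z_+$, where $z$ has a zero of order $n$. Since $\D_+$ is a first-order operator with shifts $T^{(n-3)/2}, T^{(n-1)/2}$, the equation $\D_+\xi = \mu_+\xi$ relates $\xi_{k+(n-3)/2}$ and $\xi_{k+(n-1)/2}$; iterating together with the quasi-periodicity $\xi_{k+n} = z\xi_k$ lets me compute the $z$-order of $\mu_+$ at $Z_+$. Concretely, applying $\D_+$ to $\xi$ and tracking the leading order in a local parameter at $Z_+$ (using that the pole orders of $\xi_k$ grow linearly in $k$, cf. Proposition \ref{behinf}), one reads off $\ord_{Z_+}\mu_+ = (n-3)/2$ and, by $\mu_+\mu_- = w$ together with $\ord_{Z_+} w = -1$ (Table \ref{table}), $\ord_{Z_+}\mu_- = -(n-1)/2$. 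The orders at $Z_-$ then follow immediately from $\sigma^*\mu_+ = \mu_-$ and $\sigma(Z_+) = Z_-$: $\ord_{Z_-}\mu_+ = \ord_{Z_+}\mu_- = -(n-1)/2$ and $\ord_{Z_-}\mu_- = (n-3)/2$. Feeding these into $s = \mu_+/(z\mu_-)$ and using $\ord_{Z_\pm} z = \pm n$ gives $\ord_{Z_+} s = (n-3)/2 - n - (-(n-1)/2) = -2$ and $\ord_{Z_-} s = 2$, consistent with $\sigma^*s = s^{-1}$.

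\emph{Step 2: behavior at $S_\pm$.} At $S_\pm$ the function $w = \mu_+\mu_-$ has a simple zero, so the pair $(\ord_{S_+}\mu_+, \ord_{S_+}\mu_-)$ is either $(1,0)$ or $(0,1)$, and I must rule out the possibility that both $\mu_+$ and $\mu_-$ vanish at $S_+$ (which would force a zero of higher order, impossible), and also that one of them has a pole there — but $\mu_\pm$ are holomorphic away from $Z_\pm$ by Proposition \ref{musholo}, so this is automatic. To pin down which of $\mu_+,\mu_-$ vanishes at $S_+$ versus $S_-$, I would use that $S_\pm$ are the two zeros of $w$ and are (to be shown, as flagged after Table \ref{table}) distinct and swapped by $\sigma$; since a common zero of $\mu_+$ and $\mu_-$ would put $\xi(S_+)$ in $\Ker\D_+ \cap \Ker\D_-$, contradicting the second statement of Proposition \ref{alt} exactly as in the proof of Proposition \ref{degIneq}, exactly one of $\mu_\pm$ vanishes at each $S_\pm$. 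Labeling $S_+$ as the zero of $\mu_+$ fixes the convention: $\ord_{S_+}\mu_+ = 1$, $\ord_{S_+}\mu_- = 0$, and applying $\sigma$ gives $\ord_{S_-}\mu_+ = 0$, $\ord_{S_-}\mu_- = 1$. Then $s = \mu_+/(z\mu_-)$ with $z$ having no zero or pole at $S_\pm$ yields $\ord_{S_+} s = 1$ and $\ord_{S_-} s = -1$. Finally $\deg\mu_\pm = (n-1)/2$ follows from $\deg w = 2 = \deg\mu_+ + \deg\mu_-$ combined with the fact that the only pole of $\mu_+$ is at $Z_-$ of order $(n-1)/2$ (Step 1), and $\deg s = 3$ was already established in Proposition \ref{degIneq}.

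\emph{Main obstacle.} The delicate point is Step 1: extracting the precise orders $(n-3)/2$ and $-(n-1)/2$ of $\mu_\pm$ at $Z_\pm$ rather than just a pair summing to $-1$. This requires understanding the asymptotics of the components $\xi_k$ near $Z_\pm$ — namely that near $Z_+$ the function $\xi_k$ has a pole whose order increases by a definite amount as $k$ increases by one full period (a fact that should be isolated as Proposition \ref{behinf}), and then carefully matching leading terms in the relation $\D_+\xi = \mu_+\xi$. Everything else is bookkeeping forced by $\mu_+\mu_- = w$, $\sigma^*\mu_+ = \mu_-$, $s = \mu_+/(z\mu_-)$, and the already-known data for $z$ and $w$; the only genuine input beyond those identities is the local analysis at the points at infinity of the spectral curve.
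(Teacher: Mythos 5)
Your proposal reaches the correct table of orders, but the route through the key step is genuinely different from the paper's. For the orders of $\mu_\pm$ at $Z_\pm$ you read them off directly from the asymptotics of the eigenvector function: since $\ord_{Z_+}\xi_k = k$ (Proposition \ref{behinf}), the two terms of $(\D_+\xi)_k = a_k\xi_{k+(n-3)/2}+b_k\xi_{k+(n-1)/2}$ have distinct orders $k+(n-3)/2$ and $k+(n-1)/2$, so no cancellation occurs and $\ord_{Z_+}\mu_+=(n-3)/2$; everything else then follows from $\mu_+\mu_-=w$, $\sigma^*\mu_+=\mu_-$ and $s=\mu_+/(z\mu_-)$. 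This is sound and arguably more transparent, but note that Proposition \ref{behinf} appears \emph{after} this proposition in the paper (its proof uses only the three-term relation \eqref{xirel} and the already-known orders of $z$ and $w$, so there is no circularity — but you are importing a later result). The paper instead avoids the eigenvector asymptotics entirely: it bounds $\deg\mu_\pm\leq(n-1)/2$ by the kernel-dimension argument used for $\deg s\leq 3$, writes down the two linear relations $d_++d_-=-1$ and $d_+-d_-=n+\ord_{Z_+}s$, and pins down $\ord_{Z_+}s=-2$ by combining the inequality $d_-\geq-(n-1)/2$ with $\deg s=3$ and the parity of $\ord_{Z_+}s$. Your approach buys directness; the paper's buys independence from the analysis of $\xi$ at infinity.

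Two loose ends. First, the identity ``$\deg w = 2 = \deg\mu_++\deg\mu_-$'' is false: the degree of a product of meromorphic functions is not the sum of the degrees (indeed here $\deg\mu_++\deg\mu_-=n-1$). This is harmless because your parenthetical gives the correct argument — $\mu_+$ is holomorphic off $Z_\pm$ with a zero at $Z_+$, so its only pole is $Z_-$ of order $(n-1)/2$, whence $\deg\mu_+=(n-1)/2$ — but the false equation should be deleted. Second, you invoke $S_+\neq S_-$ as ``to be shown,'' but the paper explicitly defers that fact \emph{to this very proof}, so it cannot be assumed. Your tools do close it: if $S_+=S_-=S$ then $\ord_S w=2$, the common-zero argument ($\xi(S)\in\Ker\D_+\cap\Ker\D_-$ contradicts Proposition \ref{alt}) forces exactly one of $\mu_\pm$ to vanish at $S$, necessarily to order $2$, and then $s$ would have either four zeros or four poles counted with multiplicity (two at $Z_\mp$ and two at $S$), contradicting $\deg s=3$. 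That sentence needs to be written out rather than waved at; the paper's version of this step instead bounds $\ord_S\mu_\pm\leq 1$ via the degree count and derives the same contradiction from $\deg s=3$.
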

\begin{proof}
Since for any $\lambda \in \C^*$ the degree of operators $\D_\pm - \lambda $ is $(n-1)/2$, an argument analogous to the one we used to show that $\deg s \leq 3$ (see the proof of Proposition \ref{degIneq}) gives $\deg \mu_\pm \leq (n-1)/2$. Further, let $d_\pm := \ord_{Z_+}\mu_\pm$. Then, since $\ord_{Z_+}w=-1$ (see Table \ref{table}), the equation $\mu_+\mu_- = w$ implies
\begin{equation}\label{sumofks}
d_+ + d_- = -1.
\end{equation}
Furthermore, using that $\ord_{Z_+}z=n$ and equation \eqref{SDEF}, we get
\begin{equation}\label{diffofks}
d_+ - d_- = n + \ord_{Z_+}s,
\end{equation}
so
\begin{equation}\label{kminfla}
d_- = -\frac{1}{2}(n+1) - \frac{1}{2}\ord_{Z_+}s,
\end{equation}
and since $\deg \mu_- \leq (n-1)/2$, we must have $d_- \geq -  (n-1)/2 $, which implies $\ord_{Z_+}s \leq - 2$. On the other hand, we know that $\deg s = 3$, and from \eqref{kminfla} it follows that $\ord_{Z_+}s$ is even. So, we must have $\ord_{Z_+}s =- 2$, which, along with \eqref{kminfla} implies $\ord_{Z_+}\mu_- = d_- = -  (n-1)/2$ and thus $ \deg \mu_- =  (n-1)/2$. Similarly, adding up \eqref{sumofks} and \eqref{diffofks}, we get $ \ord_{Z_+}\mu_+ = d_+ =  (n-3)/2$. Analogously, replacing the point $Z_+$ with $Z_-$, we find the orders of $\mu_\pm$ and $s$ at $Z_-$, as well as the degree of~$\mu_+$ (one can also use that $\sigma^*\mu_+ = \mu_-$ and $\sigma(Z_+) = Z_-$).\par

It now remains to find the orders of the functions $\mu_\pm$ and $s$ at the points $S_\pm$. To that end, we first show that $S_+ \neq S_-$. Assume, for the sake of contradiction, that $S_+ = S_- = S$. Then $S$ is a double zero of the function $w$. Furthermore, we have $\mu_+\mu_- = w$, and both $\mu_+$ and $\mu_-$ are holomorphic and have at worst a simple zero at $S$ (indeed, these functions have degree $ (n-1)/2 $ and zeros of order $(n-3)/2$ at the points $Z_+$ and $Z_-$ respectively). So, both $\mu_+$ and $\mu_-$ must have a simple zero at $S$. But then, from the definition \eqref{SDEF} of the function $s$ it follows that it does not have a zero at $S$. Furthermore, $s$ cannot have zeros at other points of $\Gamma \setminus \{Z_\pm\}$, because the only zero of $\mu_+$ in that domain is the point $S$. But this means that $s$ has just two zeros counting with multiplicities, which is impossible since $\deg s = 3$. Therefore, we must have that $S_+ \neq S_-$.\par
Now, the relation $\mu_+\mu_- = w$ implies that at both points $S_\pm$ one of the functions $\mu_\pm$ has a simple zero, while the second one does not have a zero or a pole. Without loss of generality, assume that $\mu_+(S_+) = 0$ and thus $\ord_{S_+}\mu_+ = 1$. Then $\ord_{S_+}\mu_- = 0$, and by formula~\eqref{SDEF} we get $\ord_{S_+}s = 1$. Furthermore, from $\sigma^*\mu_+ = \mu_-$ and $\sigma(S_+) = S_-$ it follows that $\ord_{S_-}\mu_+ = 0$, $\ord_{S_-}\mu_- = 1$, and $\ord_{S_+}s = -1$. Finally, notice that the functions $\mu_\pm$ and $s$  do not have zeros or poles other than the points $Z_\pm, S_\pm$, because for each of them the total number of zeros and poles at those points (counting with multiplicities) coincides with the degree. Thus, the proposition is proved.
\end{proof}

\begin{proposition}\label{prop:nodal}
The affine spectral curve $\Gamma_a$ is a nodal curve (i.e. all its singularities are double points).
\end{proposition}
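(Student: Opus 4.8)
The plan is to run the singularity analysis directly from the defining equation of $\Gamma_a$, reducing nodality to an eigenvalue-multiplicity statement for the matrix \eqref{todaMatrix}. First I would clear denominators in \eqref{charp}: $\Gamma_a$ is the zero set in $\C^*\times\C$ of $F(z,w)=z^2+q(w)z+1$. A point $(z_0,w_0)\in\Gamma_a$ is singular iff $F_z=F_w=0$ there; since $F_w=q'(w)z$ and $z_0\ne 0$, this forces $q'(w_0)=0$, and then $F_z=2z_0+q(w_0)=0$ together with $F=0$ gives $q(w_0)^2=4$ and $z_0=-q(w_0)/2=\pm1$. At such a point $F_{zw}=q'(w_0)=0$, so the quadratic part of $F$ in local coordinates $\zeta=z-z_0$, $\omega=w-w_0$ is $\zeta^2+\tfrac{1}{2}z_0\,q''(w_0)\,\omega^2$; hence the singularity is an ordinary double point precisely when $q''(w_0)\ne 0$. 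Setting $\Delta(w):=q(w)^2-4$, the two conditions $q'(w_0)=0$ and $q''(w_0)\ne 0$ say exactly that $w_0$ is a root of $\Delta$ of multiplicity exactly $2$ (one also needs the routine remark that a genuine double root of $\Delta$ yields a transverse pair of smooth branches, which follows from a Weierstrass factorization of $F$ at $(z_0,w_0)$). So the proposition reduces to showing that $\Delta$ has no root of multiplicity $\ge 3$.

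Next I would translate a multiple root of $\Delta$ into a multiple eigenvalue of the spectral matrix. Write $\Delta=(q-2)(q+2)$; a root $w_0$ with $q(w_0)=-2z_0$ lies on exactly one of the two factors, namely $q+2z_0$, so its multiplicity in $\Delta$ equals its multiplicity as a root of $q+2z_0$. On the other hand, by the computation behind \eqref{charp}, the characteristic polynomial of the matrix $L_z$ appearing in \eqref{todaMatrix} equals $q(w)+z+z^{-1}$ up to a non-zero constant factor; specializing to $z=z_0=\pm1$ this is $q(w)+2z_0$ up to a constant. Therefore the multiplicity of $w_0$ as a root of $\Delta$ equals the \emph{algebraic} multiplicity of $w_0$ as an eigenvalue of $L_{z_0}$.

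The final step is to bound that algebraic multiplicity by $2$, and the key point is that for $z_0=\pm1$ the two corner entries of \eqref{todaMatrix}, namely $\alpha_n z_0^{-1}$ and $\alpha_n z_0$, are equal, so $L_{z_0}$ is a \emph{real symmetric} matrix; hence it is diagonalizable and the algebraic multiplicity of each of its eigenvalues equals the geometric one. It then suffices to observe that the geometric multiplicity of any eigenvalue of $L_z$ is at most $2$ for every $z$: deleting the first row and first column of $L_z-w\,\Id$ leaves an $(n-1)\times(n-1)$ tridiagonal matrix whose off-diagonal entries $\alpha_2,\dots,\alpha_{n-1}$ are all non-zero, and deleting in turn its first column and last row yields a triangular matrix with those $\alpha_i$ on the diagonal; hence $\mathrm{rank}(L_z-w\,\Id)\ge n-2$ and $\dim\Ker(L_z-w\,\Id)\le 2$. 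Combining the three steps, every multiple root of $\Delta$ has multiplicity exactly $2$, so every singularity of $\Gamma_a$ is a node.

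I expect the main obstacle to be conceptual rather than computational: the root multiplicity of $\Delta$ a priori only controls the \emph{algebraic} multiplicity of the corresponding eigenvalue of $L_{z_0}$, whereas the cheap tridiagonal corank bound controls the \emph{geometric} multiplicity, and bridging the two relies on recognizing that the spectral matrix is symmetric exactly at the values $z_0=\pm1$ at which singularities can occur. The remaining ingredients — the Hessian computation, the reduction to multiplicities of $\Delta$, and the Weierstrass factorization at a double root — are routine.
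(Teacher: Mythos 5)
Your proposal is correct and follows essentially the same route as the paper's proof: identify the singular points via the differential (forcing $z_0=\pm1$ and $q'(w_0)=0$), use the Hessian to reduce nodality to the eigenvalue $w_0$ having multiplicity exactly $2$, and then exploit the symmetry of the matrix \eqref{todaMatrix} at $z=\pm1$ to equate algebraic and geometric multiplicity, the latter being bounded by $2$. The only cosmetic difference is that you bound the geometric multiplicity by an explicit rank computation on the tridiagonal minor, whereas the paper invokes $\dim\Ker(\D_+\D_--w_0)\vert_{\v{\pm1}}\leq\ord(\D_+\D_--w_0)=2$; these are the same fact.
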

\begin{proof}
The affine spectral curve $\Gamma_a$ is the zero locus of the characteristic polynomial $p(z,w) = z + z^{-1} + q(w)$ of the matrix~\eqref{todaMatrix}. Computing the differential of that polynomial, we get that $(z_0,w_0) \in \Gamma_a$ is singular if and only if $z_0 = \pm 1$ and $w_0$ is a multiple root of $p(z_0,w)$. Furthermore, computing the Hessian, we get that a singular point $(z_0,w_0) \in \Gamma_a$ is a double point if and only if $w_0$ is a double root of $p(z_0,w)$. But for $z_0 = \pm 1$ the matrix ~\eqref{todaMatrix} is symmetric (equivalently, the restriction of the operator $\D_+\D_-$ to $\v{\pm 1}$ is self-adjoint), so the multiplicity of the root $w_0$ of its characteristic polynomial is equal to the dimension of the corresponding eigenspace, which is
$$\dim \Ker(\D_+\D_- - w_0)\vert_{\v{\pm 1}} \leq \ord(\D_+\D_- - w_0) = 2. $$
So indeed all singular points of $\Gamma_a$ are double points. 
\end{proof}

\begin{remark}\label{nodal}

It is easy to see that the genus of the normalization $\Gamma$ of $\Gamma_a$ is equal to $n - d - 1$ where $d$ is the number of double points of $\Gamma_a$. Furthermore, as can be seen from the proof of Proposition~\ref{prop:nodal}, double points of $\Gamma_a$ correspond to double roots of the polynomials $q(w) \pm 2$. The polynomial $q$ is of degree $n$, so each of those polynomials may have at most $(n-1) / 2$ double roots. Therefore, $\Gamma$ is rational when each of the polynomials $q(w) \pm 2$ has precisely  $(n-1) / 2$ double roots (and, in addition, one simple root). Likewise, $\Gamma$ is elliptic when one of the polynomial $q(w)  \pm 2$ has $(n-1)/2$ double roots, while the second one has $(n-3)/2$ double roots. Using Remark \ref{z1} one can show that it is the polynomial $q(w) +2$ that has $(n-3)/2$ double roots.

\end{remark}
 
 \par
\subsection{The eigenvector function}\label{ss:evf}
In this section, we study in detail the analytic properties of the meromorphic vector-function $\xi$ constructed in Proposition \ref{prop:ev}. This will allow us to obtain analytic formulas for coordinates of vertices of the polygon $P$ (see Section \ref{sec:rat} for the rational case and Section \ref{ss:sd} for the elliptic case). We keep all the notation of Section \ref{ss:genus}.
\begin{proposition}\label{behinf}
We have  $\mathrm{ord}_{Z_\pm}\xi_k = \pm k$. 
\end{proposition}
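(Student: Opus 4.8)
The claim is a local statement about the order of vanishing/pole of the components $\xi_k$ at the two distinguished points $Z_\pm$. The plan is to exploit the two defining equations $T^n\xi = z\xi$ and $\D_+\D_-\xi = w\xi$ of Proposition~\ref{prop:ev}, together with the orders of $z$ and $w$ at $Z_\pm$ recorded in Table~\ref{table} ($\ord_{Z_+}z = n$, $\ord_{Z_+}w = -1$, and symmetrically at $Z_-$). By the symmetry $\sigma^*z = z^{-1}$, $\sigma$ interchanges $Z_+$ and $Z_-$, so it suffices to do the computation at $Z_+$ and transport the result; alternatively one runs the same argument at $Z_-$ verbatim. Throughout I would work with the local coordinate at $Z_+$ and write $\ord$ for $\ord_{Z_+}$.

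First I would set up the recursion. Recall $\D_+ = aT^{(n-3)/2} + bT^{(n-1)/2}$ and $\D_- = -T^{-n}\D_r = -T^{-n}(cT^{(n+1)/2} + dT^{(n+3)/2})$, so that $\D_+\D_- = T^{-1}\alpha + \beta + \alpha T$ as in~\eqref{prodOP}, where $\alpha_k \neq 0$ for all $k$. The eigenvalue equation $\D_+\D_-\xi = w\xi$ reads, componentwise, $\alpha_{k-1}\xi_{k-1} + \beta_k\xi_k + \alpha_k\xi_{k+1} = w\,\xi_k$, i.e.
\begin{equation}\label{threeterm}
\alpha_k\,\xi_{k+1} = (w - \beta_k)\,\xi_k - \alpha_{k-1}\,\xi_{k-1}.
\end{equation}
Let $n_k := \ord\xi_k$ (with the normalization $\xi_0 = 1$, so $n_0 = 0$). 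From~\eqref{threeterm}, since $\ord w = -1$ and the $\alpha_k,\beta_k$ are regular and nonzero at $Z_+$ (being $n$-periodic sequences of nonzero, resp.\ arbitrary, complex numbers — they are constant along the fibre of $z$ over a generic point, but at $Z_+$ one checks directly they contribute order $0$), the right-hand side has order $\min(n_k - 1,\ n_{k-1})$ provided there is no cancellation. One then argues that the pole of $\xi_k$ dominates: I would show by induction that $\xi_k$ has a genuine pole at $Z_+$ for $k \geq 1$ of strictly increasing order, so that $n_{k-1} - 1 < n_k - 1$, forcing $n_{k+1} = n_k - 1$, i.e.\ $\ord\xi_{k+1} = \ord\xi_k - 1$; combined with $n_0 = 0$ this gives $n_k = -k$ for $k\geq 0$ in the ``wrong'' sign unless I am careful about which direction increases the pole order. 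The cleanest fix: use the quasiperiodicity $\xi_{k+n} = z\,\xi_k$ together with $\ord z = n$ to anchor the recursion. Indeed $n_{k+n} = n + n_k$, so the sequence $n_k$ has average slope exactly $1$ over a period, which immediately rules out the recursion settling into $n_{k+1} = n_k - 1$ on average and pins it to $n_{k+1} = n_k + 1$; since $n_0 = 0$ this gives $n_k = k$, i.e.\ $\ord_{Z_+}\xi_k = k$. At $Z_-$ one has $\ord z = -n$, giving $n_{k+n} = -n + n_k$ and hence $\ord_{Z_-}\xi_k = -k$; together these are exactly $\ord_{Z_\pm}\xi_k = \pm k$.

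The main obstacle is handling potential cancellation in~\eqref{threeterm} and making the induction genuinely rigorous rather than ``generic.'' The slope constraint from quasiperiodicity fixes the average behaviour, but to get $n_{k+1} = n_k + 1$ for \emph{every} $k$ (not just on average) I need to know that~\eqref{threeterm} never produces a drop in pole order, i.e.\ that the leading Laurent coefficients do not conspire to cancel. The way I would close this: observe that $\D_+\xi = \mu_+\xi$ and $\D_-\xi = \mu_-\xi$ by Proposition~\ref{musholo}, with $\ord_{Z_+}\mu_+ = (n-3)/2$ and $\ord_{Z_+}\mu_- = -(n-1)/2$ from Table~\ref{table} (Proposition~\ref{cor:table}). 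Writing out $\D_+\xi = \mu_+\xi$ componentwise gives $a_k\,\xi_{k+(n-3)/2} + b_k\,\xi_{k+(n-1)/2} = \mu_+\,\xi_k$, i.e.\ a \emph{two-term} relation $b_k\,\xi_{k+(n-1)/2} = (\mu_+ - a_k\,T^{(n-3)/2})\xi_k$ — here there is no room for accidental cancellation between two terms of generically different order once one knows the orders of $\mu_+$, so taking orders directly yields $n_{k+(n-1)/2} = \min\big(\ord_{Z_+}\mu_+ + n_k,\ n_{k+(n-3)/2}\big)$, and comparing with the companion relation coming from $\D_-\xi = \mu_-\xi$ pins down $n_{k+1} - n_k$. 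This is the step I expect to require the most care: one must check that the two shifts $(n-1)/2$ and $(n-3)/2$ differ by $1$ (they do), so that the two operator equations $\D_\pm\xi = \mu_\pm\xi$ together relate $\xi_{k+1}$ to $\xi_k$ with controlled orders, and then verify the arithmetic $\ord_{Z_+}\mu_+ - \ord_{Z_+}\mu_- = (n-3)/2 + (n-1)/2 = n - 2$ is consistent with $\ord_{Z_+}z = n$ via $s = \mu_+/(z\mu_-)$ and $\ord_{Z_+}s = -2$ — which is exactly the content of~\eqref{diffofks}–\eqref{kminfla} — giving a closed, non-circular determination of the slope. With the slope equal to $1$ at $Z_+$ (resp.\ $-1$ at $Z_-$) and the anchor $n_0 = 0$, the result $\ord_{Z_\pm}\xi_k = \pm k$ follows.
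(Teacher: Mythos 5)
Your setup is the right one: the three-term relation coming from $\D_+\D_-\xi = w\xi$ together with the quasi-periodicity anchor $\mathrm{ord}_{Z_+}\xi_{k+n} = n + \mathrm{ord}_{Z_+}\xi_k$ is exactly what the paper uses (it phrases the anchor as: $d_k := \mathrm{ord}_{Z_+}\xi_k - k$ is an $n$-periodic sequence). You also correctly identify the real difficulty, namely ruling out cancellation so that the slope is $+1$ at \emph{every} step and not just on average. But the mechanism you propose for closing that gap fails. Take your two-term relation $b_k\,\xi_{k+(n-1)/2} = \mu_+\xi_k - a_k\,\xi_{k+(n-3)/2}$ and substitute the orders you are trying to prove, $\mathrm{ord}_{Z_+}\xi_j = j$ and $\mathrm{ord}_{Z_+}\mu_+ = (n-3)/2$: both terms on the right have order $k + (n-3)/2$, while the left-hand side has order $k+(n-1)/2$, one higher. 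So the leading Laurent coefficients of $\mu_+\xi_k$ and $a_k\xi_{k+(n-3)/2}$ \emph{must} cancel; the cancellation you assert ``there is no room for'' is in fact forced, and the formula $n_{k+(n-1)/2} = \min\left(\mathrm{ord}_{Z_+}\mu_+ + n_k,\ n_{k+(n-3)/2}\right)$ is false. The same obstruction appears whenever you solve a linear relation for the term of highest vanishing order, so no rearrangement of this kind can pin down exact orders.

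The way out is to keep the three-term relation in the form $\alpha_{k-1}\xi_{k-1} + \alpha_k\xi_{k+1} = (w - \beta_k)\xi_k$ and exploit the asymmetry of the two sides. The right-hand side is a single product in which $w - \beta_k$ has a pole of exact order $1$ at $Z_+$ (the $\beta_k$ are constants), so its order is \emph{exactly} $\mathrm{ord}_{Z_+}\xi_k - 1$, with no possibility of cancellation. The left-hand side is a sum, so its order is \emph{at least} $\min(\mathrm{ord}_{Z_+}\xi_{k-1},\, \mathrm{ord}_{Z_+}\xi_{k+1})$ --- here cancellation can only improve the lower bound. Now if the $n$-periodic sequence $d_k$ were not constant, there would be an index with $d_{k-1} > d_k \leq d_{k+1}$; at that index the left-hand side has order at least $d_k + k$ while the right-hand side has order exactly $d_k + k - 1$, a contradiction. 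Hence $d_k \equiv d_0 = 0$, and the argument at $Z_-$ is symmetric. This is the paper's proof; your periodicity observation plus this local-minimum argument is all that is needed, and the functions $\mu_\pm$ play no role.
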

\begin{proof}

Let $d_k := \mathrm{ord}_{Z_+}\xi_k - k$. We need to show that $d_k = 0$ for every $k \in \Z$. Note that $d_0 = 0$ since $\xi_0 = 1$. So it suffices to show that $d_k$ is a constant sequence. Also note that since $\xi_{k+n} = z\xi_k$ and $z$ has a zero of order $n$ at $Z_+$ (see Table \ref{table}), the sequence $d_k$ is $n$-periodic. So, if it is not constant, then there must exist $k \in \Z$ such that $d_{k-1} > d_k \leq d_{k+1}$.  
But since $\xi$ is the eigenvector of the operator \eqref{prodOP} with eigenvalue $w$, we have
\begin{equation}\label{xirel}
 \alpha_{k-1}\xi_{k-1}  + \alpha_{k}\xi_{k+1} = ( w - \beta_k)\xi_k. 
\end{equation}
Since $\alpha$ is a non-vanishing sequence, the order  of the left-hand side at $Z_+$ can be bounded as
\begin{equation}\begin{gathered}
\mathrm{ord}_{Z_+}( \alpha_{k-1}\xi_{k-1}  + \alpha_{k}\xi_{k+1}) \geq  \min(\mathrm{ord}_{Z_+}\xi_{k-1},\mathrm{ord}_{Z_+}\xi_{k+1})  \\ = \min(d_{k-1} + k -1, d_{k+1} + k + 1) \geq d_k + k.
\end{gathered}\end{equation}
On the other hand, since $\mathrm{ord}_{Z_+}w = -1$, the order of the right-hand side of \eqref{xirel} is $d_k + k - 1$. So, $d_k$ must be a constant sequence, as desired.
\end{proof}
We now proceed to describe the behavior of $\xi$ away from the points $Z_\pm$. We begin with the following preliminary lemma.
\begin{lemma}\label{indep}
Assume that $X_\pm \in \Gamma \setminus \{Z_\pm\}$ are distinct points such that $w(X_+) = w(X_-)$ (equivalently, $\sigma(X_+) = X_-$). Then the directions (i.e. points in $\P^\infty$, see Remark \ref{rem:holo}) determined by the values of $\xi$ at $X_\pm$ are distinct from each other.
\end{lemma}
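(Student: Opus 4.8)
The plan is to argue by contradiction using the pairing $\langle\,\cdot\,,\,\cdot\,\rangle$ between $\v{z_+}$ and $\v{z_+^{-1}}$, exactly as in the proof of Proposition \ref{sigmamumu}. Suppose the values $\xi(X_+)$ and $\xi(X_-)$ determine the same direction in $\P^\infty$; after renormalizing $\xi$ near $X_\pm$ (Remark \ref{rem:holo}) we may assume both are finite and nonzero, so $\xi(X_-) = c\,\xi(X_+)$ for some $c \in \C^*$. Set $z_+ := z(X_+)$, so $z(X_-) = z_+^{-1}$ by $\sigma(X_+)=X_-$ and $\sigma^*z=z^{-1}$; then $\xi(X_+) \in \v{z_+}$ while $\xi(X_-) \in \v{z_+^{-1}}$. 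The key point is that for a generic such pair, the operator $\D_+\D_-$ has simple spectrum on $\v{z_+}$ (this holds for generic $z$, by the explicit form \eqref{charp} of the characteristic polynomial), so $\xi(X_+)$ spans $\Ker(\D_+\D_- - w_0)\vert_{\v{z_+}}$ where $w_0 := w(X_\pm)$, and likewise $\xi(X_-)$ spans $\Ker(\D_+\D_- - w_0)\vert_{\v{z_+^{-1}}}$.

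First I would handle the generic case. By the duality of $\D_+\D_-$ on $\v{z_+}$ and $\v{z_+^{-1}}$ with respect to \eqref{pairing}, the orthogonal complement of $\Ker(\D_+\D_- - w_0)\vert_{\v{z_+}}$ under this pairing is $\Im(\D_+\D_- - w_0)\vert_{\v{z_+^{-1}}}$. Since $\D_+\D_-$ is diagonalizable on $\v{z_+^{-1}}$ (simple spectrum), we have $\Im(\D_+\D_- - w_0)\vert_{\v{z_+^{-1}}} \cap \Ker(\D_+\D_- - w_0)\vert_{\v{z_+^{-1}}} = 0$, hence $\xi(X_-)$, which lies in that kernel, is \emph{not} orthogonal to $\xi(X_+)$: $\langle \xi(X_+), \xi(X_-)\rangle \neq 0$. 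But then from $\xi(X_-) = c\,\xi(X_+) \in \v{z_+} \cap \v{z_+^{-1}}$ we would get (for $z_+ \neq \pm 1$, i.e. generically) that $\xi(X_+) = 0$, a contradiction; alternatively, $\langle \xi(X_+), \xi(X_+)\rangle$ would have to vanish in a way incompatible with $c\langle\xi(X_+),\xi(X_+)\rangle = \langle\xi(X_+),\xi(X_-)\rangle \neq 0$ forcing $\xi(X_+)$ to be isotropic for the bilinear form on $\v{z_+}$ while simultaneously spanning the $w_0$-eigenspace — I would phrase the contradiction cleanly via the first route, that $\v{z_+}\cap\v{z_+^{-1}} = 0$ for $z_+\ne\pm1$.

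The main obstacle is upgrading from ``generic pair $X_\pm$'' to ``all pairs $X_\pm$'' as in the statement. Since the directions determined by $\xi$ at the two points depend holomorphically (or at worst meromorphically) on $X \in \Gamma\setminus\{Z_\pm\}$, the condition that $\xi(X)$ and $\xi(\sigma(X))$ determine the same direction is a closed condition on the one-parameter family $\{(X,\sigma(X))\}$. Having shown it fails on a dense subset, I would conclude it fails identically — except possibly at the branch points of $w$ (fixed points of $\sigma$), which are excluded since there $X_+ = X_-$. So the hypothesis $X_+ \neq X_-$ precisely removes the only points where the genericity-to-everywhere argument could break down. The one subtlety to check carefully is that "determining the same direction" is genuinely a well-behaved (analytically closed) condition in the presence of the poles of $\xi$; here Remark \ref{rem:holo} is exactly what makes this work away from $Z_\pm$, since locally $f\xi$ is holomorphic and nonvanishing, so the direction varies holomorphically, and the vanishing of the relevant $2\times 2$ minors (equality of directions) is an analytic condition that cannot hold on a dense set without holding everywhere on the connected curve $\Gamma\setminus\{Z_\pm, \text{branch points}\}$.
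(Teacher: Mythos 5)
Your treatment of the generic case is correct but more roundabout than necessary: once $z(X_+)\neq z(X_-)$, the vectors $\xi(X_\pm)$ are eigenvectors of $T^n$ with distinct eigenvalues (equivalently, $\v{z_+}\cap\v{z_+^{-1}}=0$ for $z_+\neq\pm1$), so they are independent with no appeal to the pairing; the duality computation is a detour. The real content of the lemma therefore lives entirely in the case $z(X_+)=z(X_-)$, which forces $z(X_\pm)=\pm1$. Such pairs $X_+\neq X_-$ are exactly the preimages of the double points of the nodal affine curve $\Gamma_a$ (Proposition \ref{prop:nodal}), and they genuinely occur: by Remark \ref{nodal} there are $n-1$ or $n-2$ of them depending on the genus, and the lemma is invoked at precisely such fibers (for instance in Proposition \ref{gpoles}, which runs over every non-critical value of $w$, and in Proposition \ref{abcind}). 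Your proposal does not treat this case.

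The step you use to cover it --- ``having shown the coincidence of directions fails on a dense subset, I conclude it fails identically'' --- is logically backwards. The vanishing of the $2\times2$ minors $\xi_i(X)\xi_j(\sigma(X))-\xi_j(X)\xi_i(\sigma(X))$ is indeed an analytic, closed condition; but from its failing generically you may only conclude that its zero locus is a proper analytic subset, hence a discrete set of points --- not that it is empty. Ruling out those finitely many potential exceptional points is exactly what the lemma asserts, and it cannot be done by continuity or density. The paper's proof handles the case $z(X_\pm)=z_0=\pm1$ by a local argument at the double point: assuming $\xi(X_+)=\xi(X_-)$ after normalization, one differentiates $(T^n-z)\xi=0$ and $(\D_+\D_--w)\xi=0$ with respect to the local parameter $w$ at the two points and produces $\hat\xi\in\v{z_0}$ with $(\D_+\D_--w_0)\hat\xi=\lambda\xi_0$, where $\lambda=z'(X_+)-z'(X_-)\neq0$ because $\Gamma_a$ is nodal (the two branches are transverse). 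This exhibits a nontrivial Jordan block for $\D_+\D_-$ on $\v{\pm1}$, contradicting the self-adjointness of that restriction. Some argument of this kind, exploiting the nodal structure and self-adjointness at $z=\pm1$, is indispensable and is missing from your proposal.
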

\begin{proof}
Without loss of generality, assume that the vectors $\xi(X_\pm)$ are finite and non-zero (if not, we multiply $\xi$ by an appropriate meromorphic function, see Remark \ref{rem:holo}). One then needs to show that these vectors are linearly independent. To that end, recall that $T^n\xi(X_\pm) = z(X_\pm)\xi(X_\pm)$. So, if $z(X_+) \neq z(X_-)$, then the vectors $\xi(X_\pm)$ are independent as eigenvectors of $T^n$ corresponding to distinct eigenvalues. Therefore, it suffices to consider the case $z(X_+) = z(X_-)$. In that case, we have
$$
z(X_+) = z(X_-) = z(\sigma(X_-))^{-1} = z(X_+)^{-1},
$$
so $z(X_\pm) = \pm 1$. Suppose for the sake of contradiction that the corresponding vectors $\xi(X_\pm)$ are linearly dependent. Then, without loss of generality, we can assume that $\xi(X_+) = \xi(X_-)$ (this can be always arranged by multiplying $\xi$ by an appropriate meromorphic function). Denote $\xi_0:=\xi(X_\pm)$, $z_0 := z(X_\pm) = \pm 1$, $w_0 := w(X_\pm)$. Notice that since $w(X_+) = w(X_-)$, the differential of $w$ does not vanish at $X_\pm$, so $w$ can be taken as a local parameter near those points. Then, differentiating the relation $(T^n-  z)\xi = 0$ with respect to $w$ at $X_\pm$, we get
$$
(T^n - z_0)\xi'(X_+) = z'(X_+)\xi_0, \quad (T^n - z_0)\xi'(X_-) = z'(X_-)\xi_0.
$$
Taking a linear combination of these equations, we obtain
$
(T^n - z_0)\hat \xi = 0,
$
where $
\hat \xi := z'(X_+)\xi'(X_-) - z'(X_-)\xi'(X_+).
$
In other words, we have 
$\hat \xi \in \v{z_0}.$ Similarly, using the equation $(\D_+\D_- -  w)\xi = 0$, we get
\begin{equation}\label{Jordan}
(\D_+\D_- -  w_0)\hat \xi =\lambda \xi_0,
\end{equation}
where
$
\lambda :=z'(X_+) -z'(X_-).
$
Note also that $\lambda \neq 0$. Indeed, $\lambda = 0$ would mean that the two branches of the curve $\Gamma_a$ given by the functions $z(w)$ near $X_\pm$ are tangent to each other. This is however, not possible, since $\Gamma_a$ is a nodal curve (Proposition \ref{prop:nodal}). So, since $\lambda \neq 0$ and $\hat \xi \in \v{z_0}$, it follows from~\eqref{Jordan} that the operator $(\D_+\D_-)\vert_{\v{z_0}}$ has a non-trivial Jordan block. This is, however, not possible, since $z_0 = \pm 1$, and thus $\D_+\D_-$ is self-adjoint on $\v{z_0}$. So it must be that the vectors $\xi(X_\pm)$ are linearly independent, as desired.
\end{proof}
\begin{remark}\label{resolution}
In the elliptic case (i.e. when the genus of $\Gamma$ is $1$), one can also prove Lemma~\ref{indep} as follows. The vectors $\xi(X_\pm)$ are common eigenvectors of the operators $\D_+$, $\D_-$, $T^n$, with the corresponding eigenvalues given by the values of the function $\mu_+$, $\mu_-$, $z$ at the points $X_\pm$. So, it follows that $\xi(X_\pm)$ are independent as long as at least one of the functions $\mu_+, \mu_-, z$ separates $X_+$ from $X_-$. Assume that this is not the case, which means that $\mu_\pm(X_+) = \mu_\pm(X_-)$ and $z(X_+) = z(X_-)$. Then we have
$$\mu_-(X_+) = \mu_+(\sigma(X_+)) = \mu_+(X_-) = \mu_+(X_+).
$$
Along with $z(X_\pm) = \pm 1$, this gives $s(X_\pm) = \pm 1$. But in the elliptic case four of the six points where $s = \pm 1$ are fixed by $\sigma$ (which would force $X_+ = X_-$), while the remaining two points are separated by the function $z$ (by Proposition \ref{genus2}). So indeed the functions $\mu_\pm, z$ separate any pair of points on $\Gamma$, which proves Lemma \ref{indep}. 
As a byproduct, we also get the following result: the functions $\mu_\pm, z$ define an embedding $\Gamma \setminus \{Z_\pm\} \hookrightarrow \C^3$. In other words, if we view $\mu_\pm$ and $z$ as rational functions on $\Gamma_a$, then these functions provide a resolution of singularities.
\end{remark}

Lemma \ref{indep} also admits an infinitesimal version, corresponding to the case when $X_+ = X_- = X$ is a branch point of $w$ (equivalently, a fixed point of $\sigma$). In this case, the role of $\xi(X_\pm)$ is played by the vectors $\xi(X)$, $\xi'(X)$, where the derivative is taken with respect to a local parameter near $X$. Note that upon renormalization of $\xi$, its derivative changes as $\xi' \mapsto f\xi' + f'\xi$, so the direction of $\xi'$ is well-defined modulo the direction of $\xi$. In particular, linear independence of $\xi$ and $\xi'$ is well-defined.
\begin{lemma}\label{indep2}
Assume that $X \in \Gamma$ is a branch point of $w$ (equivalently, a fixed point of $\sigma$). Then the directions determined by the values of $\xi$ and $\xi'$ at $X$ are distinct from each other.
\end{lemma}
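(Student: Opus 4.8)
The plan is to carry out the infinitesimal version of the argument for Lemma~\ref{indep}. In that proof the decisive obstruction was that $\D_+\D_-$ has no non-trivial Jordan block on $\v{\pm1}$ (the relevant matrix being symmetric); here I would make the same obstruction reappear, with the pair of nearby points $X_\pm$ replaced by the point $X$ together with a tangent vector at $X$.

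First I would fix normalizations. Since $\sigma$ interchanges $Z_+$ and $Z_-$, the branch point $X$ lies in $\Gamma\setminus\{Z_\pm\}$, so by Remark~\ref{rem:holo} I may rescale $\xi$ by a (locally holomorphic, or meromorphic on $\Gamma$) function so that $\xi$ is holomorphic and non-vanishing at $X$. As $\sigma(X)=X$ and $\sigma^*z=z^{-1}$, we get $z_0:=z(X)=z(X)^{-1}$, hence $z_0=\pm1$. Arguing by contradiction, suppose the directions of $\xi(X)$ and $\xi'(X)$ coincide, i.e. $\xi'(X)=\mu\,\xi(X)$ for a scalar $\mu$; after a further rescaling of $\xi$ by a function $f$ with $f(X)=1$, $f'(X)=-\mu$, I may assume $\xi'(X)=0$, with $\xi$ still holomorphic and $\xi(X)\neq0$.

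Next I would do the local computation at $X$. Pick a local coordinate $t$ at $X$ in which the holomorphic involution $\sigma$ is $t\mapsto -t$. Then $w$ is even in $t$, so $w'(X)=0$, and $w''(X)\neq0$ because $X$ is a ramification point of the degree-$2$ covering $w$, hence a simple one. Also $z(t)z(-t)=1$. Differentiating $(T^n-z)\xi=0$ once and evaluating at $X$ gives $z'(X)\,\xi(X)=0$, so $z'(X)=0$; then differentiating $z(t)z(-t)=1$ twice gives $z''(X)=0$. Differentiating both defining equations twice and evaluating at $X$ (using $\xi'(X)=z'(X)=z''(X)=w'(X)=0$) yields
\[
(T^n-z_0)\,\xi''(X)=0,\qquad (\D_+\D_--w_0)\,\xi''(X)=w''(X)\,\xi(X),
\]
with $w_0:=w(X)$ and $w''(X)\neq0$. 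Together with $(T^n-z_0)\,\xi(X)=0$ and $(\D_+\D_--w_0)\,\xi(X)=0$, this shows that $\xi(X),\xi''(X)\in\v{z_0}$ and that $\xi''(X)$ is a rank-$2$ generalized eigenvector of $\D_+\D_-\vert_{\v{z_0}}$ for the eigenvalue $w_0$; in particular that operator has a non-trivial Jordan block.

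Finally I would conclude exactly as in the proof of Proposition~\ref{prop:nodal}: for $z_0=\pm1$ the matrix~\eqref{todaMatrix} is real and symmetric, so $\D_+\D_-\vert_{\v{z_0}}$ is diagonalizable and has no Jordan block --- a contradiction. Hence the directions of $\xi(X)$ and $\xi'(X)$ are distinct. I expect the only real work to lie in the local bookkeeping near the fixed point of $\sigma$, namely checking that $z'(X)=z''(X)=0$ so that $\xi''(X)$ actually stays in $\v{z_0}$ and the second-order jet of $\D_+\D_-\xi=w\xi$ genuinely exhibits the Jordan block; the input $w''(X)\neq0$ is automatic since $\deg w=2$, and the remaining steps mirror Lemma~\ref{indep}.
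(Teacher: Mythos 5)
Your proof is correct, but it takes a genuinely different route from the paper's. The paper's argument is first-order: differentiating $(T^n-z)\xi=0$ once gives $(T^n-z(X))\xi'(X)=z'(X)\xi(X)$, and then it invokes Proposition~\ref{prop:nodal} to say that $(z,w)\colon\Gamma\setminus\{Z_\pm\}\to\C^2$ is an immersion, so $z'(X)\neq 0$ at a branch point of $w$; since $(T^n-z(X))$ annihilates $\xi(X)$ but not $z'(X)\xi(X)$, the vectors $\xi(X)$ and $\xi'(X)$ cannot be proportional. You instead assume proportionality, normalize to $\xi'(X)=0$, deduce $z'(X)=z''(X)=0$ from the relation $z(t)z(-t)=1$, and push to the second-order jet to exhibit $\xi''(X)$ as a rank-$2$ generalized eigenvector of $\D_+\D_-\vert_{\v{\pm 1}}$ for the eigenvalue $w(X)$ (using $w''(X)\neq 0$, which indeed follows from $\deg w=2$), contradicting the fact that this restriction is real symmetric and hence diagonalizable. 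All the steps check out: the local linearization of $\sigma$ as $t\mapsto -t$, the computation $2z_0z''(0)=2z'(0)^2$, and the second-order jets of both defining equations are correct. What each approach buys: the paper's proof is shorter and offloads the work onto Proposition~\ref{prop:nodal}; yours bypasses the immersion/nodal input and appeals directly to self-adjointness of $\D_+\D_-$ on $\v{\pm 1}$ --- which is in fact the same underlying fact used to prove Proposition~\ref{prop:nodal} --- making your argument the exact infinitesimal analogue of the paper's proof of Lemma~\ref{indep}, at the cost of somewhat more jet bookkeeping.
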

\begin{proof}
Renormalizing $\xi$ if necessary, we can assume that its value at $X$ is finite and non-zero. Then, differentiating the equation $(T^n - z)\xi = 0$ with respect to a local parameter near $X$, we get \begin{equation}\label{JBlock}(T^n - z(X))\xi'(X) = z'(X) \xi(X).\end{equation}
Also note that since $\Gamma_a$ is a nodal curve, it follows that the mapping $(z,w) \colon \Gamma \setminus \{Z_\pm\} \to \C^2$ is an immersion, so at a branch point of $w$ me must have $z' \neq 0$. But then \eqref{JBlock} implies that the vectors $\xi(X)$ and $\xi'(X)$ are linearly independent, as desired.
\end{proof}
\begin{remark}\label{dersol}
Differentiating $(\D_+\D_- -  w)\xi = 0$ at $X$ and using that $w'(X) = 0$, we get $(\D_+\D_- -  w(X))\xi'(X) = 0$. So, Lemma \ref{indep2} means that $\xi(X)$ and $\xi'(X)$ form a basis of solutions for the equation $(\D_+\D_- -  w(X))\xi = 0$.
\end{remark}
\begin{proposition}\label{gpoles}
The function $\xi_1$ has $g$ poles in $\Gamma \setminus \{Z_\pm\} $, where $g \in \{0,1\}$ is the genus of $\Gamma$.
\end{proposition}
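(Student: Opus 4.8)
The plan is to reduce the proposition to the equality $\deg\xi_1=g+1$ and then to realize $\Gamma$ as the normalization of an explicit plane curve built from $\xi_1$ and $w$.

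By Proposition \ref{behinf}, $\xi_1$ has a simple zero at $Z_+$ and a simple pole at $Z_-$. Writing $D$ for the pole divisor of $\xi_1$ on $\Gamma\setminus\{Z_\pm\}$, we get $(\xi_1)=Z_+-Z_--D+E$ with $E\ge 0$ and $\deg E=\deg D=:m$, so $\deg\xi_1=m+1$ and it suffices to prove $m=g$. The inequality $m\ge g$ comes for free: iterating the recursion \eqref{xirel} forward and backward from $\xi_0=1$ shows that every $\xi_k$ is holomorphic on $\Gamma\setminus(\{Z_\pm\}\cup\mathrm{supp}\,D)$ with a pole of order at most that of $\xi_1$ along $D$, so $\xi_0,\dots,\xi_N\in L(D+NZ_-)$ for every $N$; since these functions have pairwise distinct orders at $Z_-$, Riemann--Roch forces $m+N+1-g\ge N+1$ for $N$ large. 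The real point is the reverse inequality.

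For $m\le g$ I would argue as follows. The operator $\D_+\D_-$ is supported in $[-1,1]$ by \eqref{prodOP}, so a solution of $\D_+\D_-\psi=w_0\psi$ is determined by the pair $(\psi_0,\psi_1)$ and every pair arises: evaluation at indices $0,1$ is an isomorphism from the two-dimensional solution space onto $\C^2$. Since $\xi_0\equiv1$, this evaluation sends $\xi$ at a point $X$ to $(1,\xi_1(X))$. Hence Lemma \ref{indep} gives $\xi_1(X)\ne\xi_1(\sigma X)$ for generic $X$, and Lemma \ref{indep2} (with $\xi_0'\equiv0$) gives $\xi_1'(X)\ne0$ at a fixed point $X$ of $\sigma$. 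It follows that the morphism $(w,\xi_1)\colon\Gamma\to\P^1\times\P^1$ is an injective immersion at every point where $\xi_1$ is finite, and in particular generically injective, so it is birational onto its image $C$, a curve of bidegree $(m+1,2)$. Its arithmetic genus is $m$, and since its normalization is $\Gamma$ we obtain $g=m-\delta(C)$ with $\delta(C)$ the $\delta$-invariant of $C$; thus $m=g$ is equivalent to $C$ being smooth.

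Proving that $C$ is smooth is the crux. Since $(w,\xi_1)$ is an injective immersion wherever $\xi_1$ is finite, every singular point of $C$ lies over $\xi_1=\infty$, and $C$ is smooth once one knows that every pole of $\xi_1$ on $\Gamma\setminus\{Z_\pm\}$ is simple and that no pole is a fixed point of $\sigma$ or is $\sigma$-conjugate to another pole. I would extract this from the comatrix description of the eigenvector used in the proof of Proposition \ref{prop:ev}: one has $\xi_1=\eta_1/\eta_0$ with $\eta_1$ a polynomial in $w$ and $\eta_0$ linear in $z$ (with nonvanishing leading $z^{-1}$-coefficient at $Z_+$), so the poles of $\xi_1$ are among the finitely many zeros of $\eta_0$; the rank-one factorization $\mathrm{com}(L-w\Id)=uv^{\mathsf T}$ shows that $\eta_1$ and $\eta_0$ have a common factor, and following the cancellations against this factor, in light of the nodality of $\Gamma_a$ (Proposition \ref{prop:nodal} and Remark \ref{nodal}) and of the location of the fixed points of $\sigma$ established in the proof of Proposition \ref{genus2}, pins down both the multiplicities and the $\sigma$-positions of the remaining $m$ poles. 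Conceptually: every ratio $\xi_k/\xi_1$ is a rational function of $(z,w)$, hence descends to $\Gamma_a$, so Lemma \ref{indep} can hold at a $\sigma$-conjugate pair of poles of $\xi_1$ only if that pair has distinct images in $\Gamma_a$; but $\Gamma\to\Gamma_a$ is one-to-one except over the nodes at $z=\pm1$, which confines how such poles can occur. Once $C$ is smooth, $g=p_a(C)=m$, finishing the proof.
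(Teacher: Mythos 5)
Your route is genuinely different from the paper's. You realize $\Gamma$ as the normalization of the image curve $C$ of $(w,\xi_1)\colon\Gamma\to\P^1\times\P^1$, of bidegree $(m+1,2)$, and reduce the statement to $\delta(C)=0$ via $g=p_a(C)-\delta(C)=m-\delta(C)$; the paper instead pushes everything down to the $w$-line and counts zeros and poles of the rational function $\zeta(u)=(\xi_1(X_+)-\xi_1(X_-))^2=2\,\tr_w(\xi_1^2)-(\tr_w\xi_1)^2$. Your preparatory steps are correct: the reduction to $\deg\xi_1=g+1$, the Riemann--Roch inequality $m\ge g$ (which in any case follows again from $\delta\ge0$), and especially the conversion of Lemmas \ref{indep} and \ref{indep2} into statements about $\xi_1$ alone via the observation that a solution of $\D_+\D_-\psi=w_0\psi$ is determined by $(\psi_0,\psi_1)$ and that $\xi_0\equiv1$. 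That gives the injective-immersion property away from the poles of $\xi_1$ cleanly.

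The gap is exactly at what you call the crux, and it is not closed. To get $\delta(C)=0$ you must exclude (i) a $\sigma$-conjugate pair of distinct poles of $\xi_1$, and (ii) a pole of order $\ge 2$ at a fixed point of $\sigma$; note that a higher-order pole at a non-fixed point does \emph{not} make $C$ singular (the branch is a graph over the $w$-coordinate), and a simple pole at a fixed point gives the smooth branch $(at^2+\dots,\;bt+\dots)$, so (i) and (ii) are precisely what is needed. Neither is proved. The comatrix sketch does not work as outlined: each $\eta_k$ is a polynomial in $(z,w)$, so the direction $[\eta_0:\eta_1:\cdots]$ would automatically agree at the two preimages of a node of $\Gamma_a$ unless \emph{all} $\eta_k$ vanish there (which, by Lemma \ref{indep}, they must); hence "$\xi_k/\xi_1$ descends to $\Gamma_a$" gives no information about branch values at nodes, and your argument at best confines conjugate pole pairs to lie over $z=\pm1$ rather than excluding them. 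Both facts are true and provable with the tools you already assembled: for (i), renormalize $\xi$ at each of the two poles so that it is finite and nonzero; both renormalized vectors then have vanishing $0$th component, hence proportional pairs $(\psi_0,\psi_1)=(0,*)$, hence are proportional as solutions of $\D_+\D_-\psi=w_0\psi$, contradicting Lemma \ref{indep} (this is Case 3 of the paper's proof). For (ii), if $t^{d}\xi$ is the renormalization at a fixed point with $d\ge2$, then $(t^{d}\xi)_0=t^{d}$ and its derivative both vanish there, so $(t^{d}\xi)(X)$ and $(t^{d}\xi)'(X)$ again have proportional pairs $(0,*)$, contradicting Lemma \ref{indep2}. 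Until these are supplied, your argument only yields $m\ge g$. It is worth noting that the paper's counting never needs to exclude (ii): a pole of order $d$ at a branch point merely converts a simple zero of $\zeta$ into a pole of order $2d-1$, and the global count closes regardless, which is why that proof is shorter than a completed version of yours would be.
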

\begin{proof}
Let $u \in \bar \C$, and let $X_\pm$ be the two preimages of $u$ under the function $w \colon \Gamma \to \bar \C$. Recall that the \textit{trace of a meromorphic function} $f$ on $\Gamma$ under $w$ is a meromorphic function on $\bar \C$ is defined by $(\tr_wf)(u) := f(X_+) + f(X_-) $. Let 
$$
\zeta(u) := \left|\!\begin{array}{cc}\xi_0(X_+) & \xi_1(X_+) \\ \xi_0(X_-) & \xi_1(X_-)\end{array}\!\right|^2. 
$$
Note that $\xi_0 \equiv 1$ by definition of the eigenvector function $\xi$, so
$
\zeta(u) =  (\xi_1(X_+) - \xi_1(X_-))^2.
$
This means that $\zeta := 2 \tr_w (\xi_1^2) - (\tr_w \xi_1)^2$, so in particular $\zeta$ is meromorphic (i.e. rational). To understand the behavior of that function, fix a point $u_0 \in \bar \C$. Let $\Sigma := w(\{X \in \Gamma \mid dw(X) = 0\})\subset \bar \C$ be the set of critical values of $w$ (this set contains two or four points depending on the genus of $\Gamma$). Then the following cases are possible: \\ \\
\textbf{Case 1.} $u_0 \notin \Sigma$ is finite, and $\xi_1$ is finite at both preimages $X_\pm$ of $u_0$ under $w$. In this case, $\zeta(u_0)$ is the squared Wronskian of the solutions $\xi(X_\pm)$ of equation $\D_+\D_-\eta = u_0\eta$. By Lemma \ref{indep}, these solutions are independent, so $\zeta(u_0)$ is finite and non-zero. \\ \\
\textbf{Case 2.}  $u_0 \notin \Sigma$ is finite, $\xi_1$ has a pole of order $d$ at one of the preimages $X_\pm$ of $u_0$ (say, $X_+$), and is finite at the other preimage. 
In this case, the function $(u-u_0)^{2d}\zeta(u)$ is finite at $u_0$ and is equal to the squared Wronskian of linearly independent solutions $((w - u_0)^d\xi)({X_+})$, $\xi(X_-)$ of  $\D_+\D_-\eta = u_0\eta$. So, $\zeta$ has a pole of order $2d$ at $u_0$.\\ \\
\textbf{Case 3.}  $u_0 \notin \Sigma$ is finite, and $\xi_1$ has poles at both preimages $X_\pm$ of $u_0$. This is not possible, since after renormalizing $\xi$ we would get
$$
\left(\!\begin{array}{cc}\xi_0(X_+) & \xi_1(X_+) \\ \xi_0(X_-) & \xi_1(X_-)\end{array}\!\right) = \left(\begin{array}{cc}0 & 1 \\ 0& 1\end{array}\right),
$$
which would mean that the Wronskian of $\xi(X_\pm)$ vanishes. \\ \\
\textbf{Case 4.}  $u_0 = \infty$ (in which case we also have $u \notin \Sigma$). In this case $X_\pm = Z_\pm$, so $\zeta$ has a pole of order $2$ at $u_0$ by Proposition \ref{behinf}. \\ \\
All in all, the function $\zeta$ does not vanish in $\bar \C \setminus \Sigma$, while the number of its poles in that domain is twice the number  of poles of $\xi_1$ in $\{X \in \Gamma \mid dw(X) \neq 0\}$ (counting with multiplicities). Now, consider $u_0 \in \Sigma$, and let $X \in \Gamma$ be the unique point such that $w(X) = u_0$. Then there exists a parameter $t$ near $X$ such that the function $w$ can be locally written as  $t \mapsto u_0 + t^2$. So $\zeta(u)$ near $u_0$ can be written as
$$
\zeta(u) =  \left|\!\begin{array}{cc}\xi_0(t) & \xi_1(t) \\ \xi_0(-t) & \xi_1(-t)\end{array}\!\right|^2,
$$
where $t = \sqrt{u - u_0}$. Then at $t = 0$ we have
\begin{equation}\label{asym}
\zeta(u) \sim  t^2\left|\!\begin{array}{cc}\xi'_0(0) & \xi'_1(0) \\ \xi_0(0) & \xi_1(0)\end{array}\!\right|^2,
\end{equation}
up to  a constant factor and higher order terms. So, when $u_0 \in \Sigma$, we have the following two cases:\\\\
\textbf{Case 5.} $u_0 \in \Sigma$, and $\xi_1$ is finite at the preimage $X$ of $u_0$. In this case, in view of Remark \ref{dersol}, the determinant in \eqref{asym} is the Wronskian of two independent solutions of  $\D_+\D_-\eta = u_0\eta$, so $\zeta(u) \sim t^2 = u - u_0$ and thus has a simple zero at $u_0$.\\ \\
\textbf{Case 6.} $u_0 \in \Sigma$, and $\xi_1$ has a pole of order $d$ at the preimage $X$ of $u_0$. In this case, renormalizing $\xi$ as in Case 2, we get that $\zeta$ has a pole of order $2d - 1$ at $u_0$.
\\\\
In the latter case, one can regard a pole of order $2d - 1$ as a pole of order $2d$ that collided with a simple zero. With this understanding, the number of zeros of $\zeta$ is equal to the number of branch points of $w$, while the number of poles of $\zeta$ is twice the number of poles of $\xi_1$ (with some zeros and poles possibly cancelling each other out). And since the number of zeros of $\zeta$ is equal to the number of its poles, it follows that the number of poles of $\xi_1$ is half the number of branch points of $w$, which is $2g + 2$. Furthermore, since $Z_+$ is not a pole of $\xi_1$, while $Z_-$ is its pole of order $1$ (see Proposition \ref{behinf}), it follows that the number of poles of $\xi_1$  in $\Gamma \setminus \{Z_\pm\} $ is exactly $g$, as desired. \end{proof}

\begin{corollary}\label{behfin}
In the rational case, all functions $\xi_k$ are holomorphic in $\Gamma\, \setminus \, \{Z_\pm\} $, while in the elliptic case all of them have at worst a simple pole at one and the same point $X_p$, and no other poles.
\end{corollary}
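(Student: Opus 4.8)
The plan is to reduce everything to the already-proved statement about $\xi_1$ (Proposition \ref{gpoles}) by propagating poles along the three-term recurrence satisfied by the eigenvector function. Recall from \eqref{xirel} that, $\xi$ being an eigenvector of the operator \eqref{prodOP} with eigenvalue $w$, its components obey
\[
\alpha_{k-1}\xi_{k-1} + \alpha_k\xi_{k+1} = (w - \beta_k)\xi_k
\]
as an identity of meromorphic functions on $\Gamma$, for every $k \in \Z$. Since $\alpha$ and $\beta$ are $n$-periodic real sequences and $\alpha_k \neq 0$ for all $k$, one can solve this recurrence for $\xi_{k+1}$ in terms of $\xi_k,\xi_{k-1}$, and also for $\xi_{k-1}$ in terms of $\xi_k,\xi_{k+1}$; the only non-constant coefficient occurring is $w$, which by Table \ref{table} is holomorphic on $\Gamma \setminus \{Z_\pm\}$. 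Consequently all coefficients of the recurrence are holomorphic on $\Gamma \setminus \{Z_\pm\}$, so at any point $X \in \Gamma \setminus \{Z_\pm\}$ the pole order of $\xi_{k+1}$ (resp. $\xi_{k-1}$) is no larger than the maximum of the pole orders of $\xi_k$ and $\xi_{k-1}$ (resp. $\xi_{k+1}$) at $X$.

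In the rational case $g=0$, so Proposition \ref{gpoles} gives that $\xi_1$ is holomorphic on $\Gamma\setminus\{Z_\pm\}$; together with $\xi_0\equiv 1$, an induction running the recurrence forward and backward from $k=0,1$ shows every $\xi_k$ is holomorphic on $\Gamma\setminus\{Z_\pm\}$. In the elliptic case $g=1$, so Proposition \ref{gpoles} gives that $\xi_1$ has a single pole in $\Gamma\setminus\{Z_\pm\}$, counted with multiplicity --- that is, a simple pole at one point, which we call $X_p$. Since $X_p\neq Z_\pm$, the recurrence coefficients are holomorphic at $X_p$, and starting from $\xi_0\equiv 1$ and $\xi_1$ (each holomorphic on $\Gamma\setminus\{Z_\pm,X_p\}$ and with at worst a simple pole at $X_p$) the same two-sided induction shows that every $\xi_k$ has at worst a simple pole at $X_p$ and no other pole in $\Gamma\setminus\{Z_\pm\}$. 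This is precisely the corollary.

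The proof is essentially immediate given Proposition \ref{gpoles}; the only points requiring attention are that the recurrence coefficients introduce no new poles away from $Z_\pm$ (holomorphy of $w$ there, from Table \ref{table}) and that $\alpha_k$ never vanishes (from \eqref{prodOP}), so that the recurrence can be iterated in both directions. I do not foresee any real obstacle.
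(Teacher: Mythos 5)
Your proof is correct and follows essentially the same route as the paper, which also deduces the corollary by induction on the recurrence \eqref{xirel} starting from $\xi_0 = 1$ and the properties of $\xi_1$ established in Proposition \ref{gpoles}. The points you flag explicitly (non-vanishing of $\alpha_k$, holomorphy of $w$ away from $Z_\pm$, and the two-sided induction) are exactly the details the paper leaves implicit.
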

\begin{proof}
First note that $\xi_0 \equiv 1$ by construction. Furthermore, in the rational case $\xi_1$ is holomorphic in $\Gamma\, \setminus \, \{Z_\pm\} $  by Proposition \ref{gpoles}. So both $\xi_0$ and $\xi_1$ are holomorphic in that domain. At the same time, by~\eqref{xirel} we have
$$
\xi_{k+1} =\frac{1}{\alpha_k}( ( w - \beta_k)\xi_k -  \alpha_{k-1}\xi_{k-1}),
$$
and since $w$ is holomorphic in $\Gamma\, \setminus \, \{Z_\pm\} $ (see Table \ref{table}), it follows by induction that so are all $\xi_k$'s with $k \geq 0$. Analogously, using~\eqref{xirel} to express $\xi_{k-1}$, we  get that $\xi_k$'s with $k < 0$ are holomorphic in $\Gamma\, \setminus \, \{Z_\pm\} $ too. This proves the corollary in the rational case. \par
In the elliptic case, the argument is similar, but now  Proposition \ref{gpoles} implies that $\xi_1$ has a single pole in $\Gamma\, \setminus \, \{Z_\pm\} $. Denoting that pole by $X_p$, we get by induction that all $\xi_k$'s are holomorphic in $\Gamma\, \setminus \, \{Z_\pm, X_p\} $, as desired.
\end{proof}

\par
\section{Proof of Theorem \ref{thm2}: a self-dual polygon fixed by the pentagram map is Poncelet}
In this section we prove Theorem \ref{thm2}: any weakly convex self-dual twisted odd-gon $P$ projectively equivalent to its pentagram image $P'$ is Poncelet. To that end, we use the results of Section \ref{sec:sc} to obtain explicit formulas for coordinates of vertices of $P$ (see Section~\ref{sec:rat} for the case $g = 0$ and Section~\ref{ss:sd} for the case $g = 1$) and hence show that $P$ is a Poncelet polygon. 
%
%

\subsection{The rational case: degenerate Poncelet polygons}\label{sec:rat}
In this section, we prove Theorem \ref{thm2} in the case when the genus of $\Gamma$ is $0$, i.e. when $\Gamma$ is a rational curve. 
In that case, we will show that $P$ is a degenerate Poncelet polygon in the sense that the corresponding inscribed and circumscribed conics are not in general position. We keep the notation of the previous two sections. 
\begin{proposition}
The set $s^{-1}(1) := \{X \in \Gamma \mid s(X) = 1\}$  consists of either one or three points.
\end{proposition}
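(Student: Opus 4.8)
The plan is to show that $s^{-1}(1)$ can have one or three points but never exactly two; since $\deg s = 3$ by Proposition~\ref{degIneq}, the fiber over $1$ has at most three points (counted without multiplicity), so ruling out the two-point case suffices. The only input beyond Proposition~\ref{degIneq} is a careful bookkeeping of how the involution $\sigma$ acts on the fiber $s^{-1}(1)$, together with the fixed-point count established while proving Proposition~\ref{genus2}.

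First I would record the relevant facts about $\sigma$ restricted to $\{s = 1\}$. Since $\sigma^* s = s^{-1}$, the involution $\sigma$ maps the level set $\{s = 1\}$ to itself; moreover, because $s^{-1} - 1 = (1-s)/s$ and $s$ does not vanish at any point of $s^{-1}(1)$, the divisor $s^*(1)$ coincides with its image under $\sigma$, so $\sigma$ permutes the points of $s^{-1}(1)$ \emph{preserving multiplicities}. Next I would count the fixed points of $\sigma$ in $s^{-1}(1)$ in the rational case. As $\Gamma$ has genus $0$, the Riemann--Hurwitz formula applied to the degree-$2$ map $w\colon \Gamma \to \P^1$, whose branch points are exactly the fixed points of $\sigma$, shows that $\sigma$ has precisely two fixed points. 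By the argument in the proof of Proposition~\ref{genus2}, one of them lies in $s^{-1}(-1)$ — indeed $z$ takes three distinct values on $s^{-1}(-1)$, so of those three points only one can be $\sigma$-invariant — and since every fixed point of $\sigma$ satisfies $s = \pm 1$, the remaining fixed point lies in $s^{-1}(1)$. Hence $s^{-1}(1)$ contains exactly one fixed point of $\sigma$.

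Now I would argue by contradiction: suppose $s^{-1}(1)$ consists of exactly two distinct points $X$ and $Y$. As the fiber $s^*(1)$ has total degree $3$, the multiplicities must be $(2,1)$, say $X$ of multiplicity $2$ and $Y$ of multiplicity $1$. But $\sigma$ permutes $\{X,Y\}$ while preserving multiplicities, so it must fix both $X$ and $Y$, which produces two fixed points of $\sigma$ in $s^{-1}(1)$ — contradicting the count above. Therefore $s^{-1}(1)$ cannot contain exactly two points, and so it consists of either one or three points.

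I do not expect a genuine obstacle here; the proof is short. The one point that deserves care is the claim that $\sigma$ preserves the fiber of $s$ over $1$ \emph{as a divisor with multiplicities} (not merely setwise), since this is exactly what forces a length-$(2,1)$ fiber to be pointwise $\sigma$-fixed; the rest is importing the fixed-point bookkeeping from the proof of Proposition~\ref{genus2}.
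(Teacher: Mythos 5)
Your proof is correct and follows essentially the same route as the paper: $\sigma$-invariance of $s^{-1}(1)$ together with the fact that it contains exactly one $\sigma$-fixed point (which you, like the paper, extract from the fixed-point bookkeeping in the proof of Proposition~\ref{genus2}, sharpened by the genus-$0$ count of branch points of $w$) forces the set to have odd cardinality, and $\deg s = 3$ then caps it at three. The only difference is cosmetic: the multiplicity-preservation step is not actually needed, since an involution of a two-element set has either zero or two fixed points, never exactly one, so the two-point case already contradicts the fixed-point count.
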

\begin{proof}
This set is invariant under the involution $\sigma$ and contains exactly one fixed point of that involution (see the proof of Proposition \ref{genus2}). So, it must contain odd number of points, and since $\deg s = 3$, it follows that $|s^{-1}(1)| = 1$ or $|s^{-1}(1)| = 3$.
\end{proof}
We consider the cases $|s^{-1}(1)| = 1$ and $|s^{-1}(1)| = 3$ separately. First, assume that $|s^{-1}(1)| = 3$. Denote points in $s^{-1}(1)$ by $A,B,C$, where $A$ and $B$ are switched by $\sigma$, while $C$ is fixed by~$\sigma$.
\begin{proposition}\label{abcind}
The vectors $\xi(A)$, $\xi(B)$, $\xi(C)$  form a basis of $\Ker(\D_+ -T^n \D_-)$. 
\end{proposition}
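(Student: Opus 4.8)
The plan is to realize $\xi(A),\xi(B),\xi(C)$ as three linearly independent vectors inside the three-dimensional kernel $\Ker(\D_+ - T^n\D_-)$. First I would note that, since $\D_- = -T^{-n}\D_r$, we have $\D_+ - T^n\D_- = \D_l + \D_r = \D$, the properly bounded order-three operator \eqref{diffOp2}, so $\dim\Ker(\D_+ - T^n\D_-) = 3$. The points $A,B,C$, being level points of $s$, lie in $\Gamma\setminus\{Z_\pm\}$ (recall $s$ has poles at $Z_\pm$ by Table \ref{table}); hence, in the rational case, the vectors $\xi(A),\xi(B),\xi(C)$ are finite (Corollary \ref{behfin}) and nonzero (as $\xi_0\equiv 1$). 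Evaluating equation \eqref{SEQN} at each of these points, where $s=1$, gives $(\D_+ - T^n\D_-)\xi(A)=(\D_+ - T^n\D_-)\xi(B)=(\D_+ - T^n\D_-)\xi(C)=0$, so it remains only to prove linear independence.

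For the independence, $\xi(A)$ and $\xi(B)$ are already independent by Lemma \ref{indep}, since $A\neq B$ and $\sigma(A)=B$ force $w(A)=w(B)$. Assume the three are dependent; as $\xi(C)\neq 0$, this produces a relation $\xi(C)=\alpha\,\xi(A)+\beta\,\xi(B)$ with $(\alpha,\beta)\neq(0,0)$, and by symmetry we may take $\alpha\neq 0$. I would then apply the commuting operators $\D_+,\D_-,T^n$, which act along $\xi$ as multiplication by the functions $\mu_+,\mu_-,z$ (Propositions \ref{musholo} and \ref{prop:ev}), all finite at $A,B,C$ (Table \ref{table}). Applying each operator to the relation and using the independence of $\xi(A),\xi(B)$ together with $\alpha\neq 0$ forces
$$
\mu_+(A)=\mu_+(C),\qquad \mu_-(A)=\mu_-(C),\qquad z(A)=z(C),
$$
whence $w(A)=\mu_+(A)\mu_-(A)=w(C)$. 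Since $\sigma(C)=C$ and $\sigma^*z=z^{-1}$ give $z(C)=\pm1$, we get $z(A)=\pm1$ and $z(B)=z(A)^{-1}=z(A)$; combined with $w(B)=w(A)$, this shows $A$, $B$ and $C$ all map to one and the same finite point $(z(A),w(A))$ under $(z,w)\colon\Gamma\to\Gamma_a$. But $\Gamma\setminus\{Z_\pm\}$ is the normalization of the nodal curve $\Gamma_a$ (Proposition \ref{prop:nodal}), so no fibre of this map over a finite point has more than two points — a contradiction. Hence $\xi(A),\xi(B),\xi(C)$ are independent and, being three vectors in a three-dimensional kernel, form a basis.

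The step I expect to be the main obstacle is the last one. The naive approach — arguing as in the proof of Proposition \ref{degIneq} that $\xi(A),\xi(B),\xi(C)$ are joint eigenvectors of $T^n$ and $\D_+\D_-$ with pairwise distinct eigenvalue pairs, hence independent — fails at the special level $s=1$, where $z$ may equal $\pm1$ and the eigenvalue pairs can collide. The fix is to turn a collision into a contradiction with the nodal structure: three distinct points of $\Gamma$ cannot lie over a single point of a nodal curve. To make this work, one must check that the colliding value $(z(A),w(A))$ is a \emph{finite} point of $\Gamma_a$ — it is, since $z(A)=\pm1\notin\{0,\infty\}$ and $w$ has poles only at $Z_\pm$ — so that the bound on fibre cardinality applies; and one should also verify that each use of $\D_\pm\xi=\mu_\pm\xi$ and $T^n\xi=z\xi$ at $A,B,C$ is a genuine equality of vectors, which it is because $\xi,\mu_+,\mu_-,z$ have no poles at those points.
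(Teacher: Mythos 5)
Your proof is correct, and its first half (membership in the kernel via \eqref{SEQN}, finiteness via Corollary \ref{behfin}, independence of $\xi(A),\xi(B)$ via Lemma \ref{indep}) coincides with the paper's. Where you diverge is the final step, and there you have talked yourself into an unnecessary detour: you claim the ``naive'' eigenvalue argument fails because the eigenvalue pairs can collide at the level $s=1$, but the collision you fear cannot happen for the eigenvalues of $\D_+\D_-$ alone. Indeed $\deg w = 2$, so the fiber of $w$ over $w(A)$ consists of exactly the two points $A$ and $B$ (which are distinct and swapped by $\sigma$); hence $w(C)\neq w(A)=w(B)$ automatically, and $\xi(C)$, being a $\D_+\D_-$-eigenvector for a different eigenvalue, is independent of $\mathrm{span}(\xi(A),\xi(B))$ with no further work --- this is the paper's one-line argument. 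Your contradiction argument is valid as written: assuming dependence, applying $\D_\pm$ and $T^n$, and concluding that $A$, $B$, $C$ all lie over one finite point of the nodal curve $\Gamma_a$ does yield a contradiction. But even inside your own argument, the moment you obtain $w(A)=w(C)$ you are done by $\deg w = 2$; the excursion through $\mu_\pm$, $z(C)=\pm1$, and Proposition \ref{prop:nodal} buys nothing. One small factual slip: by Table \ref{table}, $s$ has a pole at $Z_+$ but a \emph{zero} at $Z_-$ (order $2$ there, not a pole); this does not affect your conclusion that $A,B,C\neq Z_\pm$.
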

\begin{remark}
Note that the vectors $\xi(A)$, $\xi(B)$, $\xi(C)$ are finite because, by Corollary \ref{behfin}, the vector-function $\xi$ is holomorphic in $\Gamma \setminus \{Z_\pm\} $, and $A,B,C \neq Z_\pm$ since $s(Z_+) = \infty$ and $s(Z_-) = 0$ (see Table~\ref{table}).
\end{remark}
\begin{proof}[Proof of Proposition \ref{abcind}]
We have $\xi(A), \xi(B), \xi(C) \in \Ker(\D_+ -T^n \D_-)$ by \eqref{SEQN}, so it suffices to show that these vectors are linearly independent. To that end, recall that they are eigenvectors of the operator $\D_+\D_-$. Furthermore, the eigenvalue $w(C)$ corresponding to $\xi(C)$ is distinct from the eigenvalue $w(A) = w(B)$ corresponding to the other two vectors. So, it suffices to prove the independence of $\xi(A)$ and $\xi(B)$. But that follows from Lemma~\ref{indep}.
\end{proof}

Now recall that the polygon corresponding to the operator $\D_+ -T^n \D_-$ is $P$. So, by Proposition \ref{abcind} the vertices of $P$ (defined up to a projective transformation) are given by 
$
(\xi_k(A):\xi_k(B): \xi_k(C)) \in \P^2.
$
To explicitly compute the coordinates of vertices, we identify $\Gamma$ with $\bar \C$. Note that since automorphisms of $\bar \C$ act transitively on triples of points, the map $u \colon \Gamma \to \bar \C$ may be chosen in such a way that $u(Z_+ ) = 0$, $u(Z_-) = \infty$, and $u(C) = 1$. Then the involution $\sigma$, written in terms of $u$, is  $u \mapsto u^{-1}$, while the points $A,B$ are identified with $r$ and $r^{-1}$, where $r \in \C^*\setminus \{\pm 1\}$.  Furthermore, from Proposition \ref{behinf} and Corollary \ref{behfin} we get 
$
\xi_k(u) = c_k u^k,
$ where $c_k \neq 0$ is a constant. Therefore, the vertices of $P$ are given by
\begin{equation}\label{fund}
v_k = (r^k: r^{-k}: 1).
\end{equation}
So the polygon $P$ is inscribed in a conic with homogeneous equation \begin{equation}\label{circConic}x_1x_2 = x_3^2.\end{equation} Furthermore, since $P$ is self-dual, it is also circumscribed, and hence Poncelet. Thus, the proof of Theorem \ref{thm2} in the case when the spectral curve is rational and $|s^{-1}(1)| = 3$ is complete.

\begin{remark}A direct calculation shows that the conic inscribed in the polygon~\eqref{fund} is  \begin{equation}\label{insConic}x_1x_2 = \left(\frac{1}{2} + \frac{1}{4}(r + r^{-1})\right)x_3^2.\end{equation}
The conics \eqref{circConic} and \eqref{insConic} are tangent to each other at two points $(1:0:0)$ and $(0:1:0)$. In particular, they are not in general position (instead of four intersections we have two intersections of multiplicity~$2$). 
 \end{remark}

We now consider the case $|s^{-1}(1)| = 1$. 
To begin with, notice that this case can be thought of as a limit of the case  $|s^{-1}(1)| = 3$, with the points $A$, $B$, $C$ colliding together and forming a single point $D \in s^{-1}(1)$. This observation leads to the following version of Proposition \ref{abcind}: 
\begin{proposition}\label{dind}
The vectors $\xi(D)$, $\xi'(D)$, $\xi''(D)$  form a basis of $\Ker(\D_+ -T^n \D_-)$, where the derivatives are taken with respect to any local parameter near $D$.
\end{proposition}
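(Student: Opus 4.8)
The plan is to treat Proposition~\ref{dind} as the degenerate ($A,B,C\to D$) case of Proposition~\ref{abcind}, with the first two derivatives of $\xi$ at $D$ playing the role of the two points that have collided into $D$. Before anything else I would collect the data about $D$. Since $s(Z_+)=\infty$ and $s(Z_-)=0$ (Table~\ref{table}), we have $D\neq Z_\pm$, so by Corollary~\ref{behfin} the vector-function $\xi$ is holomorphic at $D$ and $\xi(D)\neq 0$ (because $\xi_0=1$). Since $\deg s=3$ while $s^{-1}(1)=\{D\}$ as a set, the function $s-1$ has a triple zero at $D$, hence $s'(D)=s''(D)=0$ in any local parameter $t$ near $D$. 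Finally, $s^{-1}(1)$ is $\sigma$-invariant and reduces to the single point $D$, so $D$ is fixed by $\sigma$, hence a branch point of $w$, and by Remark~\ref{z1} (cf. the proof of Proposition~\ref{genus2}) we get $z(D)=s(D)^{-1}=1$.

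First I would show all three vectors lie in $\Ker(\D_+-T^n\D_-)$. Differentiating the identity $(\D_+-sT^n\D_-)\xi=0$ of \eqref{SEQN} once and twice with respect to $t$ and evaluating at $D$, the terms carrying $s'$ and $s''$ drop out, leaving
\[
(\D_+-T^n\D_-)\,\xi(D)=(\D_+-T^n\D_-)\,\xi'(D)=(\D_+-T^n\D_-)\,\xi''(D)=0.
\]
Since $\ord(\D_+-T^n\D_-)=3$, the kernel of $\D_+-T^n\D_-$ is three-dimensional, so it remains only to prove that $\xi(D),\xi'(D),\xi''(D)$ are linearly independent; note that this, hence the whole statement, is independent of the choice of $t$, since the flag spanned by these vectors is intrinsic.

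For the independence I would use the monodromy $T^n$. Differentiating $T^n\xi=z\xi$ at $D$ (where $z=1$) gives $T^n\xi'(D)=z'(D)\xi(D)+\xi'(D)$ and $T^n\xi''(D)=z''(D)\xi(D)+2z'(D)\xi'(D)+\xi''(D)$, so the span $W$ of the three vectors is $T^n$-invariant and $N:=(T^n-\Id)|_W$ is nilpotent with $N\xi(D)=0$, $N\xi'(D)=z'(D)\xi(D)$ and $N\xi''(D)=z''(D)\xi(D)+2z'(D)\xi'(D)$. The crucial point is $z'(D)\neq 0$: because $\Gamma_a$ is nodal (Proposition~\ref{prop:nodal}), the map $(z,w)\colon\Gamma\setminus\{Z_\pm\}\to\C^2$ is an immersion, so $dz\neq 0$ at the branch point $D$ of $w$ — exactly the input used in the proof of Lemma~\ref{indep2}. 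Given $z'(D)\neq 0$, a short Jordan-block computation finishes: if $\xi'(D)$ were a multiple of $\xi(D)$, applying $N$ would force $z'(D)\xi(D)=0$; and if $\xi''(D)=\alpha\xi(D)+\beta\xi'(D)$, applying $N$ and comparing coefficients in the (now established independent) pair $\xi(D),\xi'(D)$ would force $2z'(D)=0$. Hence $\xi(D),\xi'(D),\xi''(D)$ are independent and form a basis of $\Ker(\D_+-T^n\D_-)$.

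The main obstacle is precisely this independence step; the rest is bookkeeping with Table~\ref{table} and the defining equations. The mechanism mirrors Lemma~\ref{indep2}: nodality of the spectral curve forces $dz$ to be nonzero at the branch point $D$, which makes $T^n$ act on the three-dimensional span as a single Jordan block, leaving no room for a linear relation. (One could instead pass to the limit in the Wronskian of $\xi(A),\xi(B),\xi(C)$ as $A,B,C\to D$, but the $T^n$-argument is cleaner.) With Proposition~\ref{dind} in hand, the next step — expanding $\xi_k=c_ku^k$ and evaluating it together with its derivatives at $D$ — will show that the vertices of $P$ lie on a parabola, i.e. $P$ is again a degenerate Poncelet polygon.
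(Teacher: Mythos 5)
Your proposal is correct, and the first half (showing that $s-1$ has a triple zero at $D$, hence that $\xi(D),\xi'(D),\xi''(D)$ all lie in $\Ker(\D_+-T^n\D_-)$) coincides with the paper's argument. Where you genuinely diverge is the independence step. The paper differentiates $(\D_+\D_--w)\xi=0$ twice at $D$: since $w'(D)=0$ and $w''(D)\neq 0$ (because $\deg w=2$ forces $D$ to be a simple branch point of $w$), the vectors $\xi(D),\xi'(D)$ are eigenvectors of $\D_+\D_-$ while $\xi''(D)$ is not, so $\xi''(D)$ cannot lie in their span; independence of the first two is then quoted from Lemma~\ref{indep2}. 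You instead run everything through $T^n$: differentiating $T^n\xi=z\xi$ twice produces a single Jordan-block structure for $T^n-\Id$ on the span, and the nonvanishing of $z'(D)$ (from nodality of $\Gamma_a$, exactly the input of Lemma~\ref{indep2}) rules out any linear relation in one unified computation. Your route is self-contained --- it absorbs Lemma~\ref{indep2} rather than citing it and never invokes $w''(D)\neq 0$ --- at the cost of redoing that lemma's two-vector case; the paper's route is shorter given that Lemma~\ref{indep2} is already on the books and cleanly separates ``eigenvector versus generalized eigenvector'' of $\D_+\D_-$. Both hinge on facts the paper has established ($\xi$ holomorphic and nonzero at $D\neq Z_\pm$ by Corollary~\ref{behfin}, nodality by Proposition~\ref{prop:nodal}), so your argument is a valid alternative proof.
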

\begin{proof}
First, notice that since the set $s^{-1}(1)$ consists of a single point $D$, the latter must be an order two branch point of the function $s$. In other words, we have $s'(D) = s''(D) = 0$. So, differentiating the equation $(\D_+ -sT^n \D_-)\xi = 0$ at the point $D$ twice, we get
$$
(\D_+ -T^n \D_-)\xi'(D) = (\D_+ -T^n \D_-)\xi''(D) = 0.
$$
Thus, we have $\xi(D), \xi'(D), \xi''(D) \in \Ker(\D_+ -T^n \D_-)$, and it suffices to show that these vectors are linearly independent. To that end, we differentiate the equation $(\D_+\D_- - w)\xi = 0$ twice at $D$. Using that $D$ is a branch point of $w$ and thus $w'(D) = 0$, we get
$$
(\D_+\D_- - w(D))\xi'(D) = 0, \quad (\D_+\D_- - w(D))\xi''(D) = w''(D)\xi(D).
$$
Furthermore, since the degree of the function $w$ is $2$, $D$ is an order $1$ branch point for $w$, so $w''(D) \neq 0$, which means that $\xi(D)$, $\xi'(D)$ are eigenvectors of $\D_+\D_-$, while $\xi''(D)$ is not. Furthermore, the vectors $\xi(D)$ and $\xi'(D)$ are linearly independent by Lemma \ref{indep2}. So, $\xi(D)$, $\xi'(D)$, $\xi''(D)$ are indeed independent, as desired.
\end{proof}
We now find the vertices of the polygon $P$ in the same fashion as in the case $|s^{-1}(1)| = 3$. Namely, choose an identification $u \colon \Gamma \to \P^1$ in such a way that $u(Z_+ ) = 0$, $u(Z_-) = \infty$, and $u(D) = 1$.  Then, as in the case  $|s^{-1}(1)| = 3$, we get $
\xi_k(u) = c_k u^k,
$ where $c_k \neq 0$ is a constant. In particular, at the point $D$ we get $\xi_k = c_k$, $\xi_k' = k c_k$, $\xi_k'' = k(k-1) c_k$, so up to a projective transformation the vertices of $P$ are given by
\begin{equation}\label{fundnilp}
v_k = (k: k^2: 1).
\end{equation}
These points belong to a conic \begin{equation}\label{circConic2}x_2x_3 = x_1^2,\end{equation}  so $P$ is inscribed and hence Poncelet. Thus, the proof of Theorem \ref{thm2} in the rational case is complete. \begin{remark}A direct calculation shows that the conic inscribed in the polygon \eqref{fundnilp} is \begin{equation}\label{insConic2}x_2x_3 = x_1^2 + \frac{1}{4}x_3^2.\end{equation} This is an even more degenerate case: the conics \eqref{circConic2} and \eqref{insConic2} intersect each other at one single point~$(0:1:0)$, of multiplicity $4$.

\end{remark}
%
\begin{remark}
Degenerate Poncelet polygons \eqref{fund} and \eqref{fundnilp} correspond to degenerations of an elliptic curve to Abelian groups $\C^*$ and $\C$ respectively.
Indeed, removing the tangency points with the inscribed conic  \eqref{insConic} from the circumscribed conic \eqref{circConic} we get an affine curve (a hyperbola) which is naturally isomorphic to $\C^*$. The vertices of the Poncelet polygon~\eqref{fund} are uniformly spaced on that hyperbola with respect to the $\C^*$ group structure. This should be compared with the case of genuine Poncelet polygons which are images of uniformly spaced points on an elliptic curve under a double covering map from the elliptic curve to the circumscribed conic. Thus, degenerate Poncelet polygons~\eqref{fund}  correspond to degenerations of an elliptic curve to $\C^*$. Likewise, removing from  \eqref{circConic2} the tangency point with~\eqref{insConic2} we get a parabola which identifies with $\C$, and points \eqref{fundnilp} are again uniformly spaced. Thus, degenerate Poncelet polygons~\eqref{fundnilp}  correspond to degenerations of an elliptic curve to $\C$.

\end{remark}

\subsection{The elliptic case: genuine Poncelet polygons}\label{ss:sd}
In this section, we complete the proof of Theorem \ref{thm2} in the case when the spectral curve $\Gamma$ has genus~$1$, i.e. is elliptic. The argument is similar to the rational case, but instead of elementary expressions~\eqref{fund} and \eqref{fundnilp}, we obtain formulas for vertices of $P$ in terms of theta functions. \par
Recall that in the elliptic case the involution $\sigma$ on $\Gamma$ has four fixed points, at three of which we have $s = 1$, while at the fourth one we have $s = -1$ (see the proof of Proposition \ref{genus2}). Denote those points by $A,B,C, D$, where $s(D) = -1$, and $s(A) = s(B) = s(C) = 1$.
\begin{proposition}
The directions in $\P^\infty$ determined by the values of the vector-function $\xi$ at the points $A$, $B$, $C$ (see Remark \ref{rem:holo}) are linearly independent.
\end{proposition}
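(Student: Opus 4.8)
The plan is to run the elliptic analogue of the argument used for Proposition \ref{abcind}. Since $s(Z_+) = \infty$ and $s(Z_-) = 0$ (see Table \ref{table}), none of the points $A$, $B$, $C$ coincides with $Z_\pm$, so by Remark \ref{rem:holo} the value of $\xi$ at each of them genuinely determines a direction in $\P^\infty$: near each $X \in \{A,B,C\}$ one picks a local holomorphic function $f$ with $u_X := (f\xi)(X)$ finite and non-zero, and this vector represents the direction in question. The point is that each $u_X$ is an eigenvector of the operator $\D_+\D_-$. Indeed, the identity $\D_+\D_-\,\xi = w\,\xi$ from \eqref{xidefeqns} is an equality of meromorphic vector-functions on $\Gamma$, and multiplying by the scalar function $f$ (which is constant in the discrete index) and evaluating at $X$ yields $\D_+\D_-\,u_X = w(X)\,u_X$.

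The key observation is that $w(A)$, $w(B)$, $w(C)$ are pairwise distinct. Unlike in the rational case --- where $A$ and $B$ were interchanged by $\sigma$, so that $w(A) = w(B)$ and Lemma \ref{indep} was needed --- here $A$, $B$, $C$ are three distinct \emph{fixed} points of $\sigma$, hence three distinct branch points of the degree two map $w \colon \Gamma \to \P^1$. For any branch point $X$ one has $w^{-1}(w(X)) = 2X$ as a divisor, so if two distinct branch points shared an image, the fiber of $w$ over that value would have degree at least $4$, contradicting $\deg w = 2$. Thus $w$ separates $A$, $B$, $C$, and since $u_A$, $u_B$, $u_C$ are eigenvectors of $\D_+\D_-$ for pairwise distinct eigenvalues, they are linearly independent, which is precisely the assertion. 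One may also note, exactly as in Proposition \ref{abcind}, that by \eqref{SEQN} these three vectors lie in $\Ker(\D_+ - T^n\D_-) = \Ker\D$, a three-dimensional space, so they in fact form a basis of it; this is what will be used to write down the vertices of $P$.

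There is no serious obstacle here: the argument is in fact shorter than its rational-case counterpart, because all three relevant points are fixed by $\sigma$ rather than only one of them being fixed. The only bookkeeping that requires care is the possible pole of $\xi$ at one of $A$, $B$, $C$ in the elliptic case (Corollary \ref{behfin}), handled by replacing $\xi$ with $f\xi$ as above, together with the elementary fact that a degree two map to $\P^1$ takes distinct values at distinct branch points.
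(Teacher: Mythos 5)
Your proof is correct and is essentially the paper's argument: the paper's entire proof is that $\xi(A)$, $\xi(B)$, $\xi(C)$ are eigendirections of $\D_+\D_-$ with distinct eigenvalues $w(A)$, $w(B)$, $w(C)$, the distinctness coming (as you say) from $A$, $B$, $C$ being distinct branch points of the degree-two map $w$. Your extra care about renormalizing $\xi$ near a possible pole is consistent with Remark \ref{rem:holo} and does not change the substance.
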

\begin{proof}
They are eigendirections of the operator $\D_+\D_-$ corresponding to distinct eigenvalues $w(A)$, $w(B)$, $w(C)$.
\end{proof}
As in the rational case, it follows that the vectors $\xi(A)$, $\xi(B)$, $\xi(C)$ form a basis of $\Ker(\D_+ -T^n \D_-)$ (as usual, one may need to renormalize these vectors to ensure that they are finite, see Remark \ref{rem:holo}). Therefore, the vertices of the corresponding polygon $P$ (defined up to projective transformation) are given by 
$
(\xi_k(A):\xi_k(B): \xi_k(C)) \in \P^2.
$
\begin{remark}\label{psdclosed}
Since $z(A) = z(B) = z(C) = 1$ (see Remark \ref{z1}), it follows that the infinite vectors $\xi(A)$, $\xi(B)$, $\xi(C)$ are $n$-periodic, so the polygon $P$ is closed.
\end{remark}
To explicitly compute the coordinates of vertices of $P$, we identify $\Gamma$ with $\C \,/\, \Lambda$, where $\Lambda \subset \C$ is a lattice. Without loss of generality, assume that $\Lambda$ is spanned by $1$ and $\tau$, where $\tau$ is in the upper half-plane. 
Furthermore, one can choose an identification between $\Gamma$ and $\C \,/\, \Lambda$ in such a way that the point $D \in \Gamma$ gets identified with $d: = (1 + \tau)/2$. Then $\sigma$, understood as an involution in $\C\, /\, \Lambda$, is simply $u \mapsto -u$. So, $A,B,C$ must coincide with the remaining order $2$ points in $\C\, /\, \Lambda$, namely $0$, ${1}/{2}$, ${\tau}/{2}$. Without loss of generality, assume that $A = 0$, $B = {1}/{2}$, $C = {\tau}/{2}$.

We will express the vertices of $P$ using theta functions. Recall (see e.g. \cite{mumford2007tata}) that the \textit{theta function} corresponding to the lattice $\Lambda = \langle 1, \tau \rangle$ t is defined by
\begin{equation}
\theta(u) := \sum_{k \in \Z}\exp(\pi \mathrm{i}(2ku + k^2 \tau))
\end{equation}
where $\mathrm{i} = \sqrt{-1}$, and the dependence of $\theta$ on $\tau$ is suppressed for notational convenience. We will also need \textit{theta functions with (half-integer) characteristics}, defined by
\begin{align*}
\begin{gathered}
\theta_{00}(u) := \theta(u), \quad \theta_{01}(u) := \theta(u +  1/2), \quad \theta_{10}(u) := \exp(\pi \mathrm{i}(u+ \tau / 4)) \theta(u + \tau/2), \\
\theta_{11}(u) := \exp(\pi \mathrm{i}( u+{\tau}/{4}  +  {1}/{2})) \theta(u + (1+\tau)/2).
\end{gathered}
\end{align*}
Choose complex numbers $x_p, z_\pm \in \C$ whose images in $\C\, /\, \Lambda$ are the points $X_p, Z_\pm \in \Gamma$ (see Corollary \ref{behfin} for the definition of $X_p$). Note that the points $Z_\pm$ are interchanged by the involution $\sigma$, so $z_+ + z_- = 0$ modulo $\Lambda$. Therefore, without loss of generality one can assume that $z_+ + z_- = 1 + \tau$. Fix $z_\pm$ satisfying the latter condition, and let  $\delta:= z_+ - z_-$.

\begin{proposition}\label{prop:verttheta}
Up to a projective transformation, the vertices of the polygon $P_{}$ are given by
\begin{equation}\label{verttheta}
 v_k = ( \theta_{00}( k\delta + d - x_p) : \theta_{01}( k\delta + d - x_p) : \theta_{10}( k\delta + d - x_p)).
\end{equation}
\end{proposition}

Before proving this proposition, we recall standard properties of the theta function $\theta(u)$. It is easily seen from its definition that the theta function is holomorphic in $\C$, even, periodic with period $1$, and quasi-periodic with period $\tau$:
\begin{equation*}
\theta(-u) = \theta(u), \quad \theta(u+1) = \theta(u), \quad \theta(u+\tau) =\exp(-\pi \mathrm{i}(2u + \tau)) \theta(u).
\end{equation*}
In addition to that, one can show using the argument principle that the theta function has a unique simple zero at the point $d = (1 + \tau)/2$, and no other zeros in the fundamental parallelogram spanned by $1$ and $\tau$. These properties allow one  to express any meromorphic function on  $\C\, /\, \Lambda$ in terms of the theta function. The construction is based on the following well-known result: there exists a meromorphic function with zeros at $p_1, \dots, p_m \in \C \,/\, \Lambda$ and poles at $q_1, \dots, q_m \in \C \,/\, \Lambda$ if and only if $\sum p_k = \sum q_k$. So assume that we are given a collection of points with this property. Then the expression
\begin{equation}\label{thetaexpr}
f(u) := \frac{\prod\limits_{k=1}^m \theta(u - p_k + d)}{\prod\limits_{k=1}^m \theta(u - q_k +d)}
\end{equation}
defines a meromorphic function on $\C$ which can be easily seen to be periodic with respect to both $1$ and $\tau$ (here we regard $p_k$'s and $q_k$'s as points in $\C$ and assume that they are chosen in such a way that $\sum p_k = \sum q_k$ exactly, and not just modulo $\Lambda$). Therefore, this function can be viewed as a meromorphic function on  $\C \,/\, \Lambda$. Furthermore, the only zeros of $f(u)$ in  $\C \,/\, \Lambda$ are $p_k$'s, while its only poles are $q_k$'s. Since zeros and poles determine a meromorphic function up to a constant factor, it follows that any meromorphic function on  $\C \,/\, \Lambda$ with zeros at $p_1, \dots, p_m$ and poles $q_1, \dots, q_m$ can be written as \eqref{thetaexpr} times a constant. \par
\begin{proof}[Proof of Proposition \ref{prop:verttheta}]
Using Proposition~\ref{behinf} and Corollary \ref{behfin}, we get 
\begin{equation}\label{xiifla}
\xi_k(u) = c_k\frac{ \left(\theta(u - z_+ + d )\right)^k\theta(u - x_p + k\delta + d)}{\left(\theta(u - z_- + d)\right)^k\theta(u - x_p + d)},
\end{equation}
where $c_k$ is a non-zero constant, and the term containing $\delta$ is found by equating the sum of zeros with the sum of poles. Note that since we are only interested in the direction of the vector $\xi$, we may multiply all $\xi_k$'s by $\theta(u - x_p + d)$, which results in
$$
\tilde \xi_k(u) = c_k\frac{ \left(\theta(u - z_+ + d )\right)^k}{\left(\theta(u - z_- + d)\right)^k}{}{}\theta(u - x_p + k\delta + d).
$$
These are no longer meromorphic functions on $\C\, /\, \Lambda$, but still meromorphic functions on $\C$. Furthermore, in contrast to $\xi_k$'s, the functions $\tilde \xi_k$ are always finite at the points $0, {1}/{2}, {\tau}/{2} \in \C$ corresponding to $A,B,C \in \Gamma$, so the vertices of $P$ are given by
$
 (\tilde \xi_k(0):\tilde \xi_k({{1}/{2}}): \tilde \xi_k({{\tau}/{2}})).
$
Also notice that the values of the constants $c_k$ do not affect the latter expression, so one can assume that $c_k=1$. Under this assumption, we get
$$
\tilde \xi_k(0) =\frac{ \left(\theta( d - z_+ )\right)^k}{ \left(\theta( d - z_-)\right)^k}\theta(  k\delta + d - x_p) = \theta( k\delta + d - x_p),
$$
where the last equality follows from $d - z_- = -(d -z _+)$ and $\theta(-u) = \theta(u)$. Similarly, we have
\begin{align*}
\begin{gathered}
\tilde \xi_k\left({{1}/{2}}\right) =\frac{ \left(\theta({{1}/{2}} + d - z_+ )\right)^k}{ \left(\theta( {{1}/{2}} + d - z_-)\right)^k}\theta( {{1}/{2}} + k\delta + d - x_p) \\=  \frac{ \left(\theta(-{{1}/{2}} + d - z_+ )\right)^k}{ \left(\theta( {{1}/{2}} + d - z_-)\right)^k}\theta( {{1}/{2}} + k\delta + d - x_p)  = \theta( \textstyle{{1}/{2}} + k\delta + d - x_p).
\end{gathered}
\end{align*}
where  the second last equality follows from $1$-periodicity of $\theta$, and the last one from $d - z_- = -(d -z _+)$ and $\theta(-u) = \theta(u)$. 
Finally,
\begin{align*}
\begin{gathered}
\tilde \xi_k\left(\textstyle{{\tau}/{2}}\right) =  \frac{ \left(\theta({{\tau}/{2}} + d - z_+ )\right)^k}{ \left(\theta( {{\tau}/{2}} + d - z_-)\right)^k} \theta( {{\tau}/{2}} + k\delta + d - x_p) = \frac{ \left(\theta({{1}/{2}} + \tau - z_+ )\right)^k}{ \left(\theta( {{1}/{2}} + \tau - z_-)\right)^k}\theta( {{\tau}/{2}} + k\delta + d - x_p) \\
=  \exp( \pi k\mathrm{i} ( 2z_+ - 1- \tau)) \frac{ \left(\theta({{1}/{2}}  - z_+ )\right)^k}{\left(\theta( {{1}/{2}} + \tau - z_-)\right)^k}\theta( {{\tau}/{2}} + k\delta + d - x_p) \\  =  \exp( \pi k   \mathrm{i} \delta)\theta( \textstyle{{\tau}/{2}} + k\delta + d - x_p),
\end{gathered}
\end{align*}
where the second equality uses the definition  $d = (1 + \tau)/2$, the third one uses the formula for $\theta(u + \tau)$, while the last one uses that $\theta$ is even along with the relation $\delta = 2z_+ - 1- \tau$. 
\par
Now, to complete the proof it remains to rewrite the obtained fromulas using {theta functions with characteristics}.
We have  $\tilde \xi_k(0)= \theta_{00}( k\delta + d - x_p)$,  $\tilde \xi_k\left(\textstyle{{1}/{2}}\right)  = \theta_{01}( k\delta + d - x_p),$
while $\tilde \xi_k\left(\textstyle{{\tau}/{2}}\right)  = \theta_{10}( k\delta + d - x_p)$ up to a factor not depending on $k$. Since the latter factor does not affect the projective equivalence class of $P_{}$, one gets the desired formulas for vertices.
\end{proof}
\begin{remark}
Note that the functions $\xi_k$ may, but not necessarily do, have poles at $X_p$ (see Corollary~\ref{behfin}). However, formula~\eqref{xiifla} is valid anyway. Indeed, if $\xi_k$ does not have a pole at $X_p$, then its only pole is the point $Z_-$ (which is of order $k$), while its only zero is the point $Z_+$ (which is also of order $k$). So, we must have $kz_+ = kz_-$ modulo $\Lambda$, i.e. $k \delta \in \Lambda$. But then the factor ${\theta(u - x_p + k\delta + d)}/{\theta(u - x_p + d)}$ in \eqref{xiifla} is a non-vanishing holomorphic function, so the analytic properties (i.e. zeros and poles) of the right-hand side of \eqref{xiifla}  are the same as for the left-hand side, which means that these functions coincide for a suitable value of $c_k$.
\end{remark}
Now, to prove that $P_{}$ is Poncelet it suffices to establish the following: \begin{proposition} The image of the map $\Phi \colon \C \to \C\P^2$ given by
\begin{equation}\label{phiMap}
\Phi(u) := (\theta_{00}(u):\theta_{01}(u):\theta_{10}(u))
\end{equation}
is a conic.
\end{proposition}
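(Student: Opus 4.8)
The plan is to produce an explicit quadratic relation among $\theta_{00}$, $\theta_{01}$, $\theta_{10}$ and then to check that the image of $\Phi$ is precisely the conic that this relation defines. I keep the identification $\Gamma \simeq \C/\Lambda$, $\Lambda = \langle 1, \tau\rangle$, fixed above. Recall that $\theta = \theta_{00}$ has a single simple zero in the fundamental parallelogram, at $(1+\tau)/2$; from the definitions of the characteristics it follows that $\theta_{01}(u) = \theta(u+1/2)$ vanishes simply and only at $u = \tau/2$, and $\theta_{10}(u)$ vanishes simply and only at $u = 1/2$, the three zeros being pairwise distinct points of $\C/\Lambda$. A short computation with the quasi-periodicity relations shows that $\theta_{00}(u)^2$, $\theta_{01}(u)^2$, $\theta_{10}(u)^2$ all transform under $u \mapsto u+1$ and $u \mapsto u+\tau$ by one and the same automorphy factor, so their pairwise ratios are honest meromorphic functions on the elliptic curve $E := \C/\Lambda$.

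Next I would set $x := \theta_{00}^2/\theta_{01}^2$ and $y := \theta_{10}^2/\theta_{01}^2$, regarded as meromorphic functions on $E$. By the previous step $x$ has a double zero at $(1+\tau)/2$, a double pole at $\tau/2$, and no other zeros or poles, while $y$ has a double zero at $1/2$, a double pole at $\tau/2$, and no other zeros or poles. Hence for a suitable $c \in \C$ the function $y - cx$ has a pole of order at most $1$ at $\tau/2$ and is holomorphic on $E$ away from that point; since a non-constant meromorphic function on a compact Riemann surface cannot have a single simple pole, $y - cx$ is a constant $c'$. Clearing denominators gives
\[
\theta_{10}(u)^2 = c\,\theta_{00}(u)^2 + c'\,\theta_{01}(u)^2 \qquad (u \in \C),
\]
a relation which is one of the classical Jacobi identities between theta functions with characteristics (see e.g. \cite{mumford2007tata}). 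Moreover $c \neq 0$, for otherwise $\theta_{10}^2$ and $\theta_{01}^2$ would be proportional and hence have the same zero divisor, contradicting $1/2 \neq \tau/2$ in $\C/\Lambda$; likewise $c' \neq 0$.

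It follows that the image of $\Phi$ is contained in the plane conic $C := \{\, x_2^2 = c\,x_0^2 + c'\,x_1^2 \,\} \subset \C\P^2$, which is smooth — and in particular irreducible — because $c, c' \neq 0$. To finish I would observe that $\Phi$ is a well-defined holomorphic map on $\C$ (the functions $\theta_{00},\theta_{01},\theta_{10}$ have no common zero) and is non-constant (already $x$ is non-constant), and that $\Phi(u+2) = \Phi(u)$ and $\Phi(u+2\tau) = \Phi(u)$, so $\Phi$ descends to a holomorphic map from the compact Riemann surface $\C/\langle 2, 2\tau\rangle$. Its image is therefore a closed irreducible curve contained in the irreducible curve $C$, hence equals $C$; thus the image of $\Phi$ is a conic, as claimed. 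The only delicate point is the bookkeeping of zeros and automorphy factors of the theta functions with characteristics; granting that, the argument reduces to the ``no meromorphic function with a single simple pole'' principle together with irreducibility of a smooth conic.
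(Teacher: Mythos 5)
Your proof is correct, and it takes a genuinely different route from the one in the paper. You produce the conic directly by exhibiting a quadratic relation $\theta_{10}^2 = c\,\theta_{00}^2 + c'\,\theta_{01}^2$ (one of the Jacobi identities): you check that the squares of the three theta functions share a common automorphy factor for $\Lambda$, so that $x=\theta_{00}^2/\theta_{01}^2$ and $y=\theta_{10}^2/\theta_{01}^2$ are degree-$2$ elliptic functions with known divisors, and then you kill the double pole of $y-cx$ at $\tau/2$ and invoke the fact that an elliptic function with a single simple pole is constant (true here because the genus is $1$ --- or by the residue theorem for elliptic functions; your blanket statement ``on a compact Riemann surface'' fails for genus $0$, so it is worth saying which fact you mean). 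The nonvanishing of $c,c'$ then makes the conic smooth, hence irreducible, and the image of the induced map from the compact surface $\C/2\Lambda$ is an irreducible closed curve inside it, hence all of it. The paper instead never writes down the equation inside the proof: it shows the image is an algebraic curve of degree $m/\deg\Phi$ where $m=4$ is the number of zeros of a generic linear combination of the $\theta_{jk}$ on $\C/2\Lambda$, uses evenness of the $\theta_{jk}$ to get $\deg\Phi\ge 2$, and rules out a line by linear independence; the explicit equation $-\theta_{00}^2(0)x_1^2+\theta_{01}^2(0)x_2^2+\theta_{10}^2(0)x_3^2=0$ is only derived afterwards, in a remark, from Riemann's relation. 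Your divisor argument buys the explicit (and manifestly smooth) conic in one stroke and avoids the degree bookkeeping, while the paper's counting argument avoids any computation with zero divisors and dovetails with its later use of Riemann's relation to identify the inscribed conic as well. Both are complete proofs of the stated proposition.
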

\begin{proof} First of all, notice that the functions $\theta_{00}$, $\theta_{01}$, $\theta_{10}$ have no common zeros, so the mapping $\Phi$ is well-defined. Further, following \cite{mumford2007tata}, define the following operators $\mathcal S, \mathcal T$ on holomorphic functions on~$\C$:
$$
(\mathcal Sf)(u) := f(u + 1), \quad (\mathcal Tf)(u) = \exp(\pi \mathrm{i}( 2u+\tau)) f(u + \tau).
$$
Then \begin{equation}\label{Heis}
\mathcal S\theta_{jk} = (-1)^{j}\theta_{jk}, \quad \mathcal T\theta_{jk} = (-1)^{k}\theta_{jk}.
\end{equation}
In particular, we have $\mathcal S^2\theta_{jk} = \theta_{jk}$, $\mathcal T^2\theta_{jk} = \theta_{jk}$, which means that
\begin{equation}\label{charqp}
\theta_{jk}(u + 2) = \theta_{jk}(u), \quad \theta_{jk}(u + 2\tau) = \exp(-4\pi \mathrm{i}(u + \tau))\theta_{jk}(u).
\end{equation}
From the latter it follows that $\Phi$ descends to a holomorphic mapping $ \C\, /\, 2\Lambda \to \C\P^2$, so the image of $\Phi$ is an algebraic curve. To find the degree of that curve, one needs to find the number of its intersections with a generic line. Clearly, that number can be found as ${m}\,/\,{\deg \Phi}$, where $m$ is the number of zeros of a generic linear combination of $\theta_{00}$, $\theta_{01}$, $\theta_{10}$ in the fundamental parallelogram of the lattice $2L$, while $ \deg \Phi$ is the degree of $\Phi$, when the latter is regarded as a mapping  $ \C\, /\, 2\Lambda \to \C\P^2$. The number $m$ can be easily computed using quasi-periodicity relations \eqref{charqp} and the argument principle. That number is equal to $4$. Further, notice that the functions $\theta_{00}$, $\theta_{01}$, $\theta_{10}$ are even, so $\Phi(-x) = \Phi(x)$, which means that $\deg \Phi \geq 2$. Therefore, the degree of the image of $\Phi$ is either $2$ or $1$, i.e. the image of $\Phi$ is a conic or a straight line. However, it cannot be a straight line, because the functions $\theta_{k,j}$ are linearly independent by \eqref{Heis}. So, the image of $\Phi$ is a conic. 
\end{proof}
Thus, we conclude that the vertices \eqref{verttheta} of the polygon $P_{}$ lie on a conic. Since $P_{}$ is self-dual, it is also circumscribed about a conic, and hence Poncelet, q.e.d. So, Theorem \ref{thm2} is proved.
\begin{remark}
One can also explicitly describe the image of the mapping \eqref{phiMap} and hence the conic circumscribed about $P$ using Riemann's relation
\begin{equation}\label{rr0}
\sum_{j,k \in \{0,1\}} \theta_{jk}(\alpha_1)\theta_{jk}(\alpha_2)\theta_{jk}(\alpha_3)\theta_{jk}(\alpha_4) = 2\,\theta_{00}(\beta_1)\theta_{00}(\beta_2)\theta_{00}(\beta_3)\theta_{00}(\beta_4),
\end{equation}
where $\beta_1 := (\alpha_1 + \alpha_2 + \alpha_3 + \alpha_4)\,/\,2$,  $\beta_2 := (\alpha_1 + \alpha_2 - \alpha_3 - \alpha_4)\,/\,2$,  $\beta_3 := (\alpha_1 - \alpha_2 + \alpha_3 - \alpha_4)\,/\,2$,  $\beta_4 := (\alpha_1 - \alpha_2 - \alpha_3 + \alpha_4)\,/\,2$.
Taking $\alpha_1 = 0$, $\alpha_2 = u$, $\alpha_3 = v$, $\alpha_4 = u + v$, we get the identity
\begin{equation}\label{rr1}
\begin{gathered}
-\theta_{00}(0)\theta_{00}(u)\theta_{00}(v)\theta_{00}(u + v) +  \theta_{01}(0)\theta_{01}(u)\theta_{01}(v)\theta_{01}(u + v) \\+\,  \theta_{10}(0)\theta_{10}(u)\theta_{10}(v)\theta_{01}(u + v) =  0,
\end{gathered}
\end{equation}
which, after a further substitution $v = 0$, becomes
\begin{equation}\label{rr}
-  \theta_{00}^2(0) \theta_{00}^2(u) + \theta_{01}^2(0) \theta_{01}^2(u) + \theta_{10}^2(0) \theta_{10}^2(u) = 0.
\end{equation}
 So, the conic circumscribed about $P$ is given by
\begin{equation}\label{explconic}
-  \theta_{00}^2(0) x_1^2 + \theta_{01}^2(0) x_2^2 + \theta_{10}^2(0) x_3^2 = 0.
\end{equation}
Similarly, the conic inscribed in $P$ is
\begin{equation}\label{explconic2}
-  \theta_{00}^2(\delta/2) x_1^2 + \theta_{01}^2(\delta/2) x_2^2 + \theta_{10}^2(\delta/2) x_3^2 = 0.
\end{equation}
Indeed, let $t_k: =  k\delta + d - x_p$, $m:= k+ 1/2$, and  $t'_{m}:= (t_k + t_{k+1})/2$. Then, as follows from \eqref{rr}, the point
\begin{equation}\label{tangentpt}
v'_{m} := \left(\frac{\theta_{00}(0)\theta_{00}(t'_m)}{\theta_{00}(\delta/2)} : \frac{\theta_{01}(0)\theta_{01}(t'_m)}{\theta_{01}(\delta/2)} : \frac{\theta_{10}(0)\theta_{10}(t'_m)}{\theta_{10}(\delta/2)}\right)
\end{equation}
belongs to the conic  \eqref{explconic2}. Furthermore, the tangent line to  \eqref{explconic2} at $v'_{m}$ passes through the vertices $v_k$ and $v_{k+1}$ of $P$. Indeed, that is equivalent to the relation
\begin{align}
\begin{gathered}
- \theta_{00}(0)\theta_{00}(\delta/2)\theta_{00}(t'_m) \theta_{00}(t_{m \pm 1/2}) + \theta_{01}(0)\theta_{01}(\delta/2)\theta_{01}(t'_m)\theta_{01}(t_{m\pm 1/2}) \\ +\, \theta_{10}(0)\theta_{10}(\delta/2)\theta_{10}(t'_m)\theta_{10}(t_{m \pm 1/2}) = 0,
\end{gathered}
\end{align}
which is a particular case of \eqref{rr1} corresponding to $u = t_{m \pm 1/2} $, $v = \mp \delta/2$. So indeed the polygon $P$ is circumscribed about the conic~\eqref{explconic2}.\par
\end{remark}
\begin{remark}
Note that formula \eqref{verttheta} describes a \textit{family} of polygons, parametrized by $x_p$. Our argument shows that all these polygons are inscribed in one and the same conic~\eqref{explconic} and circumscribed about one and the same conic \eqref{explconic2}. So, polygons~\eqref{verttheta} form what is called a \textit{Poncelet family}, i.e. a family of polygons inscribed in the same conic and circumscribed about the same conic (recall that every Poncelet polygon is a member of such a family by Poncelet's porism). Also note that the expression~\eqref{verttheta} is periodic in $x_p$ with the periods given by the lattice $2\Lambda$. So, the Poncelet family containing our polygon $P$ is parametrized by the elliptic curve $\C\, /\, 2\Lambda$, which is a $4$-to-$1$ covering of the spectral curve $\Gamma = \C \,/\, \Lambda$. As a corollary, the Poncelet family containing $P$ contains four polygons projectively equivalent to $P$: one of those polygons is $P$, while the other three can be obtained from $P$ by replacing $x_p$ in formula~\eqref{verttheta} with $x_p + 1$, $x_p + \tau$, and $x_p + 1 + \tau$. This quadruple of polygons admits a geometric description when the circumscribed conic $C_1$ and inscribed one $C_2$ are confocal. In this case, these polygons can be obtained from $P$ by means of reflection with respect to the common symmetry axes of $C_1$, $C_2$.\par
This argument also shows that the spectral curve is the same for all polygons in a Poncelet family. Using a different approach, this was earlier proved in \cite{schwartz2015pentagram}. Formulas for Poncelet families similar to~\eqref{verttheta} are given in \cite{veselov1988integrable}.
\end{remark}
\begin{remark}
Note that since the polygon $P_{}$ is closed (Remark \ref{psdclosed}), the expression \eqref{verttheta} must be $n$-periodic in $k$. Therefore, we must have $n\delta \in 2\Lambda$. Another way to see this is to consider the function $(s-1)\mu_-$ on $\Gamma$. Using Table \ref{table} and the fact that $s(A) = s(B) = s(C) = 1$, we conclude that this function has simple zeros at $A$, $B$, $C$, a zero of order $(n-3)/2$ at $Z_-$, and a pole of order $(n+3)/2$ at $Z_+$. So, we have
$
0 + {1}/{2} + {\tau}/{2}+ {(n-3)}/{2} \cdot z_- =  {(n+3)/}{2} \cdot z_+ \,(\mathrm{mod}\,\Lambda),
$
 which implies
$
 {n}/{2} \cdot \delta =  {n}/{2} \cdot(z_+ - z_-) =  {1}/{2} + {\tau}/{2} - {3}/{2} \cdot (z_+ + z_-) = -2d = 0  \,(\mathrm{mod}\,\Lambda),
$
and thus $n\delta \in 2\Lambda$, as desired.\par
Also note that formula \eqref{verttheta} still defines a Poncelet polygon if $n\delta \in \Lambda  \setminus 2\Lambda$. It is then a \textit{twisted} $n$-gon, which can also be viewed as a closed $2n$-gon. Such twisted Poncelet polygons do not arise in our setting, because they are not fixed points of the pentagram map.

\end{remark}

\section{Proof of Theorem \ref{thm1}: a closed polygon fixed by the pentagram map is Poncelet}

In this section, we derive Theorem \ref{thm1} from Theorem \ref{thm2}. To that end, we first show, in Section \ref{sec:sd}, that the self-duality assumption of Theorem \ref{thm2} is not very restrictive. Namely, any polygon satisfying all the assumptions of the theorem except for possibly self-duality, can be transformed, by means of rescaling~\eqref{rescaling} with $s > 0$, into a self-dual polygon. From that we conclude that a polygon as in Theorem~\ref{thm1} (i.e. weakly convex, closed, and projectively equivalent to its pentagram image) must be Poncelet up to rescaling \eqref{rescaling} with $s > 0$. So to show that that polygon is actually Poncelet we need to prove that the rescaling is trivial, i.e. corresponds to $s = 1$. To that end, we show that if a weakly convex Poncelet polygon is rescaled in a non-trivial way, then the resulting polygon cannot be closed. This is done separately in the rational (see Section \ref{ss:genrat}) and elliptic (see Section \ref{ss:genell}) cases. In the rational case we have a very simple explicit description of the corresponding degenerate Poncelet polygons (see Section~\ref{sec:rat}), so in that case the proof is completely elementary. As for the the elliptic situation, in that case the proof relies on the study of the real part of the corresponding elliptic curve and location of various special points within that real part.  
\par
\subsection{Self-duality up to rescaling}\label{sec:sd}
\begin{proposition}\label{prop:sd}
Assume that a closed or twisted weakly convex polygon $P$ is projectively equivalent to its pentagram image  $P'$. Then one can choose the $n$-periodic operator $\D$ of the form \eqref{diffOp2} associated with $P$ in such a way that the corresponding commuting operators $\Dl , \Dr $ given by \eqref{dldr} satisfy
\begin{equation}\label{sde}
 \Dr  = - s_0  T^{n}  \Dl ^*
\end{equation}
for certain $ s_0  \in \R_+$.
\end{proposition}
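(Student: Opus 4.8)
The plan is to reduce to the self-dual situation of Theorem \ref{thm2} by showing that the operator $\D$ supplied by Proposition \ref{prop:cdo} can be gauged, within its class, so that $\Dr$ becomes a negative constant multiple of the dual $T^{n}\Dl^{*}$; the constant $s_0$ then records how far $P$ is from being self-dual.

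First I would fix, by the first two parts of Proposition \ref{prop:cdo}, an $n$-periodic operator $\D=\Dl+\Dr$ of the form \eqref{diffOp2} representing $P$ (and $P'$), with $\Dl\Dr=\Dr\Dl$ and with alternating signs \eqref{altCond}. The key preliminary observation is that the dual polygon $P^{*}$ is again a weakly convex twisted $n$-gon projectively equivalent to its own pentagram image: weak convexity is inherited because passing to the dual swaps and reindexes the corner invariants, so the conditions of Definition \ref{def:lcp} persist, and being fixed by the pentagram map is inherited because the pentagram map commutes with projective duality. By Proposition \ref{prop:duality} the operator $T^{n}\D^{*}$ represents $(P^{*})'$, and its splitting into first-order summands supported in $[(n-3)/2,(n-1)/2]$ and $[(n+1)/2,(n+3)/2]$ is $T^{n}\D^{*}=(T^{n}\Dr^{*})+(T^{n}\Dl^{*})$, the two summands commuting because $\Dl^{*},\Dr^{*}$ do.

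The heart of the matter is to prove that $P^{*}$ lies in the rescaling orbit of $P$, i.e. $P^{*}\cong R_{t}(P)$ for some $t>0$, positivity of $t$ being forced by weak convexity (cf. the third and fourth parts of Proposition \ref{alt}). Granting this, $T^{n}\D^{*}$ and the rescaled operator $R_{t}(\D)=\Dl+t\Dr$ both represent $(P^{*})'\cong R_{t}(P')$, so by Proposition \ref{freedom} they differ by a gauge transformation $T^{n}\D^{*}=\lambda(\Dl+t\Dr)\mu^{-1}$ with $\lambda,\mu$ quasi-periodic of equal quasi-period. Matching the left and right summands gives $T^{n}\Dr^{*}=\lambda\Dl\mu^{-1}$ and $T^{n}\Dl^{*}=t\,\lambda\Dr\mu^{-1}$. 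Taking the adjoint of the first identity, and using $(T^{n}\Dr^{*})^{*}=T^{-n}\Dr$, yields $\Dr=z_{\mu}^{-1}\mu^{-1}T^{n}\Dl^{*}\lambda$; substituting the second identity into this gives $\Dr=t\,z_{\mu}^{-1}\,\kappa\,\Dr\,\kappa$ with $\kappa:=\mu^{-1}\lambda$, and comparing the two coefficients of the first-order operator $\Dr$ forces $\kappa_{k}\kappa_{k+m}=\kappa_{k}\kappa_{k+m+1}$ for all $k$ (here $m=(n+1)/2$), hence $\kappa$ is a constant $\kappa_{0}$ and $z_{\mu}=t\kappa_{0}^{2}>0$. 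Propagating the alternating-sign condition \eqref{altCond} through these identities, exactly as in the proof of the third part of Proposition \ref{prop:cdo}, shows that $\mu$ has constant sign, so one may take $\mu>0$, and that $\kappa_{0}<0$. Finally, conjugating $\D$ by the quasi-periodic positive sequence $\rho:=\mu^{1/2}$ leaves $n$-periodicity, the support $[(n-3)/2,(n+3)/2]$, the commutativity of the two summands, the alternating signs and the represented polygon all unchanged, while $T^{n}\Dl^{*}=t\kappa_{0}\rho^{2}\Dr\rho^{-2}$ becomes $\hat\Dr=(z_{\rho}/(t\kappa_{0}))\,T^{n}\hat\Dl^{*}$, i.e. $\hat\Dr=-s_{0}T^{n}\hat\Dl^{*}$ with $s_{0}=z_{\rho}/(t|\kappa_{0}|)>0$, which is the assertion.

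The main obstacle is the middle step, $P^{*}\cong R_{t}(P)$. I expect to establish it by comparing the commuting triples $(\Dl,\Dr,T^{n})$ and $(-T^{n}\Dr^{*},-T^{n}\Dl^{*},T^{n})$ attached to $P$ and $P^{*}$: these generate commutative algebras identified under operator duality and multiplication by $T^{\pm n}$, hence share a common joint spectrum, and what remains is to show that the generically one-dimensional joint eigenspaces — equivalently, the eigenvector data reconstructing the polygon — differ for $P$ and $P^{*}$ only by a shift along the one-parameter rescaling direction. The weak-convexity bounds $0<z_{l}<z_{r}$ on the monodromies of $\Dl,\Dr$ (second part of Proposition \ref{alt}) enter precisely here, to pin down this shift and to keep the rescaling parameter real and positive.
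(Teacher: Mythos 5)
There is a genuine gap, and it sits exactly where you flag it: the claim $P^{*}\cong R_{t}(P)$ for some $t>0$ is not proved, only announced with a sketch (``I expect to establish it by comparing the commuting triples\dots''). Worse, this step is essentially equivalent to the proposition itself: if $\Dr=-s_{0}T^{n}\Dl^{*}$ then $T^{n}\D^{*}=T^{n}\Dl^{*}-s_{0}\Dl=-s_{0}\bigl(\Dl-s_{0}^{-1}T^{n}\Dl^{*}\bigr)$, i.e.\ $P^{*}$ is the rescaling of $P$ by $s_{0}^{-2}$; so your reduction replaces the statement by an equivalent one rather than by something easier. The spectral-curve route you sketch for the missing step also cannot be invoked as it stands: in the paper the rank-one property, the eigenvector function, and the identification of the two commuting algebras under duality are all developed \emph{after} (and using) the anti-self-duality $\D_{-}=\D_{l}^{*}=-T^{-n}\D_{r}$, so you would have to rebuild that machinery for the pair $(\Dl,\Dr)$ without knowing $\Dr\propto T^{n}\Dl^{*}$, and even then ``the joint eigenspaces differ only by a shift along the rescaling direction'' is precisely the nontrivial content, not a formal consequence of the algebras having the same spectrum. (Your downstream gauge computation, granting $T^{n}\D^{*}=\lambda(\Dl+t\Dr)\mu^{-1}$, is essentially sound — matching supports, dualizing, and forcing $\kappa=\mu^{-1}\lambda$ to be constant all work — but it rests on the unproved step. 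The auxiliary claims that duality preserves weak convexity and commutes with the pentagram map are also asserted without proof, though these are minor.)

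For comparison, the paper's argument avoids any geometric comparison of $P$ with $P^{*}$. Starting from the operator of Proposition~\ref{prop:cdo} with alternating signs, it observes that $T^{-n}\Dl\Dr=\alpha T^{-1}+\beta+\gamma T$ has $\alpha_{k},\gamma_{k}>0$, hence can be \emph{symmetrized} by conjugating with a positive quasi-periodic sequence, giving $T^{-n}\Dl\Dr=T^{n}\Dr^{*}\Dl^{*}$. The weak-convexity inequality $0<z_{l}<z_{r}$ then forces $z_{l}=z_{r}^{-1}$ and identifies $\Ker\Dl^{*}=\Ker\Dr$, so $\Dl^{*}=T^{-n}\mu\Dr$ for an $n$-periodic $\mu$; a short commutation argument shows $\mu$ is constant, and the alternating signs fix the sign of $s_{0}$. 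If you want to salvage your outline, the most direct fix is to replace the unproved middle step by this symmetrization argument — the rest of your computation then becomes unnecessary.
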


\begin{proof}
Let $ \D$ be an $n$-periodic operator corresponding to $P$ such that the corresponding operators
$
 \Dl ,   \Dr 
$
commute, and, moreover, the coefficients of $ \D$ satisfy the alternating signs condition \eqref{altCond} (such $ \D$ exists by Proposition \ref{prop:cdo}). Then the operator $T^{-n} \Dl  \Dr$ has the form
\begin{equation}\label{symmeOP}
T^{-n} \Dl  \Dr = \alpha T^{-1} + \beta + \gamma T.
\end{equation}
Moreover from the alternating signs condition we have $\alpha_k, \gamma_k > 0$ for all $k \in \Z$. Therefore, the operator~\eqref{symmeOP} can be symmetrized. Namely, there exists a positive quasi-periodic sequence $\lambda$ such that the operator
$
\lambda T^{-n} \Dl  \Dr  \lambda^{-1}
$ is self-dual. That sequence can be found from the equation
$
{\lambda_{k+1}}/{\lambda_k} = \sqrt{{\gamma_k}/{\alpha_{k+1}}}
$. So, conjugating $\D$ by $\lambda$ if needed, we may assume that the operator \eqref{symmeOP} is self-dual, meaning that
\begin{equation}\label{prodsd}
T^{-n} \Dl  \Dr  = T^{n}  \Dr ^* \Dl ^*.
\end{equation}
We now show that under that assumption we must have \eqref{sde}. Let $z_l$, $z_r$ be the monodromies of $\D_l$, $\D_r$ respectively. Then, by the second statement of Proposition \ref{alt}, we have $0 < z_l < z_r$. Furthermore, since $\Dl$ and $\Dr$ commute, it follows that the kernels of both of them are contained in $\Ker \Dl\Dr$. So, the spectrum of the monodromy of  $\Dl\Dr$ is $\{z_l, z_r\}$. Moreover, we have
$$
\Ker (\Dl\Dr)\vert_{\v{z_l}} = \Ker \Dl, \quad \Ker (\Dl\Dr)\vert_{\v{z_r}} = \Ker \Dr.
$$

Similarly, using that the monodromy of $\Dl^*$ and $\Dr^*$ is given by $z_l^{-1}$ and $z_r^{-1}$ respectively, we conclude that the monodromy of $\Dr ^* \Dl ^*$ is $\{z_l^{-1}, z_r^{-1}\}$, which, in view of \eqref{prodsd} and the inequality $0 < z_l < z_r$ implies $z_l = z_r^{-1}$. Furthermore, we have
$$
\Ker \Dl^* = \Ker (\Dr^*\Dl^*)\vert_{\v{z_l^{-1}}}=  \Ker (\Dl\Dr)\vert_{\v{z_l^{-1}}} = \Ker (\Dl\Dr)\vert_{\v{z_r}} = \Ker \Dr,
$$
so
\begin{equation}\label{asde}
\Dl ^* = T^{-n} \mu \Dr 
\end{equation}
for a certain $n$-periodic sequence $\mu$ of non-zero real numbers. 
Taking the dual of both sides, we also get
$
\Dr ^* = T^{-n}\Dl  \mu^{-1},
$
so
$$
\Dl ^* \Dr ^* = T^{-2n}\mu \Dr  \Dl  \mu^{-1} =  T^{-2n}\mu \Dl \Dr  \mu^{-1}.
$$
At the same time, we have
$$
\Dl ^* \Dr ^* = \Dr ^* \Dl ^* = T^{-2n} \Dl  \Dr ,
$$
so $\mu$ commutes with $ \Dl  \Dr $. But that is only possible if $\mu$ is a constant sequence $\mu_k = c$. So,~\eqref{asde} implies~\eqref{sde}, with $ s_0  = -c^{-1}$. Furthermore, since the coefficient of the highest degree term in $\D_r$ is a sequence of negative numbers, while the coefficient of  the coefficient of the highest degree term in $\D_l^*$ is a sequence of positive numbers, equation~\eqref{sde} can only be satisfied for $ s_0  > 0$, as desired.
\end{proof}
\begin{corollary}\label{cor:rescalingPon}
Assume that a closed or twisted weakly convex polygon $P$ is projectively equivalent to its pentagram image $P'$. Then there exists a polygon $P_{sd}$ with the same properties which is, in addition, self-dual (and hence Poncelet by Theorem \ref{thm2}), such that $P = R_{s_0} (P_{sd})$ where $R_{s_0} $ is the rescaling \eqref{rescaling} with $ s = s_0  > 0$.
\end{corollary}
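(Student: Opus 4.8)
The plan is to read the corollary off Proposition \ref{prop:sd} by undoing the rescaling by $s_0$. First I would apply Proposition \ref{prop:sd} to obtain an $n$-periodic operator $\D = \Dl + \Dr$ of the form \eqref{diffOp2} representing $P$, whose coefficients satisfy the alternating signs condition \eqref{altCond}, for which $\Dl$ and $\Dr$ commute, and for which $\Dr = -s_0 T^n\Dl^*$ with $s_0 \in \R_+$. Then I would set $P_{sd} := R_{s_0^{-1}}(P)$; since the rescalings form a one-parameter group ($R_s \circ R_t = R_{st}$), this gives $P = R_{s_0}(P_{sd})$ with $s_0 > 0$ as required, so everything reduces to checking that $P_{sd}$ inherits the hypotheses of Theorem \ref{thm2}.

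By Corollary \ref{cor:rescalingDO}, the polygon $P_{sd}$ is represented by the operator $\D_{sd} := \Dl + s_0^{-1}\Dr = \Dl - T^n\Dl^*$. Weak convexity of $P_{sd}$ is then immediate from the third statement of Proposition \ref{alt}, since $s_0^{-1} > 0$; and $P_{sd}$ is still projectively equivalent to its pentagram image because rescaling \eqref{rescaling} commutes with the pentagram map (equivalently, the ``left'' and ``right'' parts $\Dl$ and $s_0^{-1}\Dr$ of $\D_{sd}$ still commute, since $[\Dl,\Dr]=0$ and $s_0\neq 0$, so one may invoke the converse of the first statement of Proposition \ref{prop:cdo} recorded in Remark \ref{rem:crit}). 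The substance is the self-duality of $P_{sd}$, and the plan there is to check directly that $\D_{sd}$ is anti-self-dual up to $T^{-n}$: using $(T^nA)^* = A^*T^{-n}$, $(\Dl^*)^* = \Dl$, and the fact that the $n$-periodic operator $\Dl$ commutes with $T^{\pm n}$, one gets
\[
\D_{sd}^* = \Dl^* - \Dl T^{-n} = \Dl^* - T^{-n}\Dl = -T^{-n}(\Dl - T^n\Dl^*) = -T^{-n}\D_{sd}.
\]
Since $\D_{sd}$ is a properly bounded $n$-periodic operator supported in $[(n-3)/2,(n+3)/2]$, Proposition \ref{prop:duality} identifies the polygon attached to $\D_{sd}^*$ with $P_{sd}^*$; on the other hand $\D_{sd}^* = -T^{-n}\D_{sd}$ has exactly the same kernel as $\D_{sd}$ and hence represents $P_{sd}$ itself. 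Therefore $P_{sd}$ is projectively equivalent to its dual, i.e. self-dual, and Theorem \ref{thm2} applies to conclude that $P_{sd}$, and hence $P = R_{s_0}(P_{sd})$, is Poncelet.

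I do not expect a genuine obstacle here: all the real work is hidden inside Proposition \ref{prop:sd} and Theorem \ref{thm2}. The only thing requiring a modicum of care is the bookkeeping with the shift $T^{\pm n}$ and with $n$-periodicity when dualizing $\D_{sd}$, together with the observation that premultiplying a difference operator by $-T^{-n}$ leaves the associated polygon unchanged — this is precisely what upgrades ``$\D_{sd}$ anti-self-dual'' to ``$P_{sd}$ self-dual'' rather than merely anti-self-dual at the level of operators.
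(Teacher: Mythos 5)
Your proposal is correct and follows essentially the same route as the paper: take the operator $\D = \Dl - s_0 T^n\Dl^*$ from Proposition \ref{prop:sd}, pass to $\D_{sd} = \Dl - T^n\Dl^*$ (so that $P = R_{s_0}(P_{sd})$ by Corollary \ref{cor:rescalingDO}), and verify weak convexity via Proposition \ref{alt}, invariance under the pentagram map via commutation with rescaling, and self-duality via the identity $\D_{sd}^* = -T^{-n}\D_{sd}$. Your explicit bookkeeping of the duality computation and the remark that multiplying by $-T^{-n}$ does not change the associated polygon just make explicit what the paper leaves implicit.
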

\begin{proof}
Take the operator $\D$ provided by Proposition {\ref{prop:sd}}. It has the form
$
\D = \Dl  -   s_0  T^n \Dl ^*,
$
where $ s_0  \in \R_+$. Consider also the operator
$
 \D_{sd} = \Dl  -  T^n \Dl ^*,
$
and the associated polygon $P_{sd}$.
 Then, by Corollary \ref{cor:rescalingDO}, we have $P = R_{s_0}  (P_{sd})$. In particular, $P_{sd}$ is projectively equivalent to its pentagram image (because the pentagram map commutes with rescaling) and weakly convex (by the third statement of Proposition~\ref{alt}). Furthermore, we have $
 \D_{sd}^* =  -T^{-n}  \D_{sd},
$
so $P_{sd}$  is self-dual, as desired.
\end{proof}


\par
\subsection{End of proof in the rational case}\label{ss:genrat}
Let $P$ be a weakly convex closed polygon projectively equivalent to its pentagram image $P'$, as in Theorem \ref{thm1}.  Then, by Corollary \ref{cor:rescalingPon}, there exists a generally speaking twisted polygon $P_{sd}$ such that $P = R_{s_0} (P_{sd})$ for some $s_0 > 0$, and $P_{sd}$ is self-dual. Consider the spectral curve associated with $P_{sd}$, constructed in the proof of Theorem \ref{thm2}. In this section, we prove Theorem \ref{thm1} in the case when the genus of $\Gamma$ is $0$, i.e. when $\Gamma$ is rational. To that end, we will show that $s_0 = 1$, so $P = P_{sd}$ and hence Poncelet.


\par

As we know from Section \ref{sec:rat}, in the rational case the vertices of $P_{sd}$ are given by \eqref{fund} or \eqref{fundnilp}. In case \eqref{fund}, the associated difference operator reads
\begin{equation}\label{ccoperator}
  \D_{sd} =  T^{{(n-3)}/{2}}-aT^{{(n-1)}/{2}} + aT^{{(n+1)}/{2}} -T^{{(n+3)}/{2}},
\end{equation}
where $a$ is such that the roots of the corresponding characteristic polynomial $1 - ax + ax^2 - x^3$ are $r, r^{-1}$, and $1$ (note that since the polygon $P_{sd}$ is real, $a$ must be real too, so we must have $|r| = 1$). Indeed, the kernel of such an operator is spanned by the sequences $r^k$, $r^{-k}$, and a constant sequence, so the associated polygon is precisely \eqref{fund}. Likewise, in the case  \eqref{fundnilp}, the associated difference operator is also of the form \eqref{ccoperator}, with $a = 3$. So, since the polygon $P_{sd}$ is defined by the operator \eqref{ccoperator}, the polygon $P = R_{s_0}(P_{sd})$ is defined by
%
$$
\D  = T^{{(n-3)}/{2}}-aT^{{(n-1)}/{2}} + s_0(aT^{{(n+1)}/{2}} -T^{{(n+3)}/{2}}).
$$
 The kernel of this operator is spanned by the sequences $x_1^k$, $x_2^k$, $x_3^k$, where $x_1$, $x_2$, $x_3$ are the roots of the characteristic polynomial $h(x) := 1 - ax + s_0(ax^2 - x^3)$ (note that we do not need to consider the case of multiple roots, because in that case the monodromy of $\D$ is not diagonalizable, and the polygon $P$ cannot be closed). Moreover, since $P$ is closed, we must have $x_1^n = x_2 ^n  = x_3^n$, so $|x_1| = |x_2| = |x_3| = \lambda$, where $\lambda > 0$ is a real number. So, the roots of the polynomial   $h(\lambda x) = 1 - a\lambda x + s_0(a\lambda ^2x^2 - \lambda^3x^3)$ must all have absolute value $1$. Also taking into account that this polynomial is real, and that $s_0 \lambda^3 > 0$, we conclude that the roots of $h(\lambda x)$ are of the form $1, \alpha, \bar \alpha$, where $|\alpha| = 1$. But this yields $s_0 \lambda^3 = 1$ and $s_0 \lambda^2 = \lambda$, so $s_0 = 1$. Therefore, the polygon $P$ coincides with $P_{sd}$ and hence Poncelet. Thus, the proof of Theorem \ref{thm1} in the rational case is complete.
 \begin{remark}
One can also give a more concrete description of $P$, as follows. Since the vertices of $P$ are given by \eqref{fund} (with \eqref{fundnilp} being impossible due to closedness of $P$), and $P$ is a closed $n$-gon, it follows that $r^n = 1$. So, applying a linear transformation to \eqref{fund}, we get a polygon whose vertices have affine coordinates
$
\cos({2\pi mk}/{n})$, $\sin({2\pi mk}/{n}),$ where ${2\pi m}/{n} = \arg r$. In particular, if $m = 1$, then $P$ is a regular $n$-gon.
\end{remark}

\par
\subsection{End of proof in the elliptic case}\label{ss:genell}
In this section, we prove Theorem \ref{thm1} in the case when the genus of $\Gamma$ is $1$, i.e. when $\Gamma$ is elliptic. As in the rational case, we show that $s_0 = 1$, so $P = P_{sd}$ and hence Poncelet. We keep the notation of Sections \ref{ss:genus} and \ref{ss:sd}\par

Recall that a \textit{real structure} on a Riemann surface $\Gamma$ is an anti-holomorphic involution $\rho \colon \Gamma \to \Gamma$. The \textit{real part} $\Gamma_\R$ of $\Gamma$ (with respect to the real structure $\rho$) is then defined as the set of fixed points of $\rho$: $\Gamma_\R := \{ X \in \Gamma \mid \rho(X) = X\}$. A meromorphic function $f$ on $\Gamma$ is called a \textit{real function} if $\rho^*f = \bar f$. Real functions take real values at real points (i.e. points in $\Gamma_\R$).\par
In our case, the spectral curve $\Gamma$ is endowed with a {real structure} $\rho \colon \Gamma \to \Gamma$ induced by the involution $(z,w) \mapsto (\bar z, \bar w)$ on the affine spectral curve $\Gamma_a$. 

\begin{proposition}\label{realfunctions}
The functions $z$, $w$, $\mu_\pm$, $s$, $\xi$ on $\Gamma$ are real (see Section \ref{ss:genus} for the definition of those functions).
\end{proposition}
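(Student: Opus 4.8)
The plan is to reduce the statement to two simple facts: that $z$ and $w$ are real essentially by the definition of the real structure $\rho$, and that every other function on the list is assembled from $z$, $w$, and difference operators with real coefficients in a way that visibly preserves reality.

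First I would record why $\rho$ is well defined. Since $\D_+$ and $\D_-$ have real coefficients, so does the product $\D_+\D_-$; hence the sequences $\alpha,\beta$ in \eqref{prodOP} are real, and the defining polynomial $p(z,w)$ of $\Gamma_a$ (the characteristic polynomial of the matrix~\eqref{todaMatrix}, up to the factor $\alpha_1\cdots\alpha_n$) has real coefficients. Therefore $(z,w)\mapsto(\bar z,\bar w)$ maps $\Gamma_a$ to itself, and by uniqueness of the normalization and of the smooth compactification it lifts to an anti-holomorphic involution $\rho$ of $\Gamma$ with $z\circ\rho=\bar z$, $w\circ\rho=\bar w$. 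In particular $z$ and $w$ are real.

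Next I would treat $\xi$. By the construction in the proof of Proposition~\ref{prop:ev}, each component $\xi_k$ equals $\eta_k/\eta_0$, where the $\eta_k$ are entries of the comatrix of $L-w\Id$ extended quasi-periodically; these entries are polynomials in $z$, $z^{-1}$, $w$ with real coefficients, because $L$ has real entries except for the corner entries $\alpha_n z^{\pm1}$ with $\alpha_n\in\R$. Composing with $\rho$ and using $z\circ\rho=\bar z$, $w\circ\rho=\bar w$ together with reality of the coefficients gives $\xi_k\circ\rho=\overline{\xi_k}$, so $\xi$ is real. For $\mu_\pm$ I would use the normalization $\xi_0\equiv1$: from $\D_\pm\xi=\mu_\pm\xi$ one gets $\mu_\pm=(\D_\pm\xi)_0$, a finite $\R$-linear combination of the real functions $\xi_k$, since the coefficients of $\D_+=\Dl$ and of $\D_-=\Dl^*=-T^{-n}\Dr$ are all real (the operator $\D$ representing the real polygon $P$ is chosen with real coefficients, and passing to the dual or shifting preserves this). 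Hence $\mu_\pm$ are real, and finally $s=\mu_+/(z\mu_-)$ is a quotient of real functions, so $s$ is real too.

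I do not anticipate a genuine obstacle: the argument is bookkeeping about which coefficients are real and tracing through the explicit formulas of Section~\ref{ss:genus}. The only point deserving a sentence of care is the passage from invariance of the affine curve $\Gamma_a$ to the existence of $\rho$ on the compact Riemann surface $\Gamma$; but that is precisely the real structure already introduced in the paragraph preceding the proposition, and it is forced by the universal properties of normalization and compactification.
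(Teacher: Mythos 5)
Your proof is correct, and its overall architecture coincides with the paper's: $z$ and $w$ are real by construction of $\rho$, reality of $\mu_\pm$ is read off from $\D_\pm\xi=\mu_\pm\xi$ (using that $\D_\pm$ have real coefficients), and reality of $s$ follows from \eqref{SDEF}. The one step where you genuinely diverge is the reality of $\xi$. You argue explicitly: each $\xi_k=\eta_k/\eta_0$ is a rational function of $(z,w)$ with real coefficients by the comatrix construction of Proposition~\ref{prop:ev}, so $\rho^*\bar\xi_k=\xi_k$ is immediate. The paper instead argues abstractly: $\rho^*\bar\xi$ satisfies the same defining equations \eqref{xidefeqns} as $\xi$, hence by the rank-one property (one-dimensionality of the generic joint eigenspace, Proposition~\ref{prop:rankOne}) one has $\rho^*\bar\xi=f\xi$ for some meromorphic $f$, and the normalization $\xi_0=1$ forces $f=1$. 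Your version buys concreteness and avoids invoking rank one, at the cost of depending on the particular construction of $\xi$; the paper's version is normalization-independent in spirit and would survive any other construction of the eigenvector function. Both are complete. Your extra sentence checking that $(z,w)\mapsto(\bar z,\bar w)$ actually preserves $\Gamma_a$ (because $p(z,w)$ has real coefficients) is a point the paper leaves implicit when it introduces $\rho$, and is worth making.
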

\begin{proof}
The functions $z,w$ are real by construction of the real structure $\rho$. To prove that the vector-function $\xi$ is real, notice that it is defined by equations \eqref{xidefeqns} up to a scalar factor. Taking the complex conjugate of those equations and then applying $\rho^*$, we get that $\rho^* \bar \xi = f\xi$ for a certain meromorphic function $f$. But then the normalization condition $\xi_1 = 1$ implies $f = 1$, as desired. Now, the reality of the functions $\mu_\pm$ follows from equation \eqref{mupluschar}, while reality of $s$ follows from its definition \eqref{SDEF}.
\end{proof}
\begin{corollary}\label{realpoints}
The points $Z_\pm,S_\pm,A,B,C,D \in \Gamma$ are real  (see Section \ref{ss:genus} for the definition of $Z_\pm, S_\pm$ and Section \ref{ss:sd} for the definition of $A,B,C,D$). 
\end{corollary}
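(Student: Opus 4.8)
The plan is to recognize each of the special points $Z_\pm,S_\pm,A,B,C,D$ as a point that is singled out by the values of \emph{real} functions on $\Gamma$, and then to use the elementary fact that an anti-holomorphic involution preserves the zero and pole divisors of a real function, and sends any point where a real function takes a real value to another such point. All the functions I need, namely $z,w,\mu_\pm,s$, are real by Proposition~\ref{realfunctions}.

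First I would dispose of $Z_\pm$ and $S_\pm$, which is immediate from Table~\ref{table}. The point $Z_+$ is the unique zero of $z$ and $Z_-$ its unique pole, so $\rho$ fixes both; in particular $\Gamma\setminus\{Z_\pm\}$ is $\rho$-invariant. Next, $S_+$ is the unique zero of $\mu_+$ in $\Gamma\setminus\{Z_\pm\}$ and $S_-$ the unique zero of $\mu_-$ there, so $\rho$ fixes $S_\pm$ as well. The substantive part is the reality of $A,B,C,D$, which are the fixed points of $\sigma$, equivalently the branch points of the degree-$2$ function $w$. The key observation will be that their $w$-values are real: by Remark~\ref{z1} we have $z=1$ at $A,B,C$ and $z=s^{-1}=-1$ at $D$, so $w(A),w(B),w(C)$ are eigenvalues of $\D_+\D_-\vert_{\v{1}}$ and $w(D)$ is an eigenvalue of $\D_+\D_-\vert_{\v{-1}}$; these restrictions are self-adjoint with respect to the real inner product~\eqref{pairing} — equivalently, the matrix~\eqref{todaMatrix} is real and symmetric when $z=\pm1$, as already noted in the proof of Proposition~\ref{prop:nodal} — so their eigenvalues are real. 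Finally, since each of $A,B,C,D$ is a branch point of $w$, it is the unique preimage of its own $w$-value; because $w$ is real, $\rho$ sends such a point to a preimage of the complex-conjugate value, which is the same value, and hence fixes the point.

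I expect the only step with any real content to be the reality of $A,B,C$: these three points are not separated from one another by $s$ or $z$ (all three have $s=z=1$), and on the face of it $\rho$ is only known to permute them, so one genuinely needs the self-adjointness of $\D_+\D_-$ on $\v{\pm1}$ to force their $w$-values — and therefore, through the branch-point characterization, the points themselves — into the real locus $\Gamma_\R$. Everything else is a direct bookkeeping check against Table~\ref{table} and Remark~\ref{z1}.
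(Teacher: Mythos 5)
Your proof is correct and follows essentially the same route as the paper: reality of the functions from Proposition~\ref{realfunctions} combined with the uniqueness of the relevant zeros, poles, and level sets. The only (harmless) variations are that you characterize $S_\pm$ via the zeros of $\mu_\pm$ rather than via the simple zero/pole of $s$, and that for $A,B,C,D$ you use that a branch point of the degree-$2$ function $w$ is the unique preimage of its own real $w$-value, whereas the paper instead notes that $\rho$ permutes $\{A,B,C\}$ and cannot interchange points with distinct real $w$-values --- two equivalent ways of exploiting the same branch-point fact together with the self-adjointness of $\D_+\D_-$ on $\v{\pm 1}$.
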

\begin{proof}
Since $z$ is real function (Proposition \ref{realfunctions}), the involution $\rho$ takes zeros of $z$ to zeros of $z$. But the only zero of $z$ is $Z_+$ (see Table \ref{table}), so the latter must be real. Analogously, $Z_-$ is real as the only pole of $z$, $S_+$ is real as the only simple zero of $s$, $S_-$ is real as the only simple pole of $s$, while $D$ is real as the only point where both $s$ and $z$ are equal to $-1$ (see Remark \ref{z1}). To show that $A, B, C$ are real, observe that they constitute the set of points where $s = 1$, so $\rho$ takes the set $\{A, B, C\}$ to itself. Further, notice that the values of the function $w$ at $A, B, C$ are eigenvalues of a self-adjoint operator $(\D_+\D_-)\vert_{\v{1}}$ and hence real. Furthermore, those values are distinct, because $A,B,C$ are branch points of $w$, while $\deg w = 2$. But if, say, $\rho(A) = B$, then we must have $w(B) = \bar w(A)$, which is not possible since $w(A), w(B)$ are real and distinct. So, $\rho$ cannot permute $\{A, B, C\}$ and thus preserves each of them.
\end{proof}
\begin{corollary}
The real part $\Gamma_\R$ of $\Gamma$ consists of two disjoint circles.
\end{corollary}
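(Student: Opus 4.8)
The plan is to combine the group structure on the elliptic curve $\Gamma \cong \C/\Lambda$ with the fact, just established in Corollary \ref{realpoints}, that all four points $A,B,C,D$ lie in $\Gamma_\R$. Recall from Section \ref{ss:sd} that the identification $\Gamma \cong \C/\Lambda$ was chosen so that $A$ corresponds to $0$; the points $A,B,C,D$ are then precisely the four points of order dividing $2$, since they are the fixed points of $\sigma$, which in this identification is $u \mapsto -u$. Because $\rho(A) = A$, the real structure $\rho$ lifts to an anti-holomorphic bijection $\tilde\rho \colon \C \to \C$ that fixes $0$ and preserves $\Lambda$; any such map is of the form $\tilde\rho(u) = c\bar u$ for some $c \in \C^*$, and the relation $\rho^2 = \mathrm{Id}$ forces $|c| = 1$. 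In particular $\tilde\rho$ is additive, so $\rho$ is a group automorphism of $\Gamma$ of order $2$, and hence $\Gamma_\R = \{X \in \Gamma \mid \rho(X) = X\}$ is a closed subgroup of the torus $\Gamma$.

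The next step is to pin down the topology of $\Gamma_\R$. It is non-empty, since it contains $A$. Being the fixed-point locus of an anti-holomorphic involution of a compact Riemann surface, it is a disjoint union of $k \geq 1$ circles: near each of its points one can choose a local holomorphic coordinate in which $\rho$ acts by complex conjugation, so $\Gamma_\R$ is a closed real-analytic $1$-manifold. By Harnack's classical bound, the real part of a curve of genus $g$ has at most $g+1$ components, so here $k \in \{1,2\}$.

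Finally I would rule out the case $k = 1$, and this is precisely where the four real $2$-torsion points are used. If $\Gamma_\R$ were a single circle, then as a connected compact $1$-dimensional subgroup of the torus $\Gamma$ it would be a subtorus isomorphic to $\R/\Z$, which contains exactly two elements $x$ with $2x = 0$ (the identity and one point of order $2$). But $\Gamma_\R$ contains the four distinct points $A,B,C,D$, all of order dividing $2$---a contradiction. Hence $k = 2$, i.e. $\Gamma_\R$ is a disjoint union of two circles. (One can phrase the last step more symmetrically: a compact abelian Lie group with $k$ components is isomorphic to $S^1 \times \Z/k\Z$, whose subgroup of elements of order dividing $2$ has order $2$ when $k$ is odd and $4$ when $k$ is even; containing the Klein four-group $\{A,B,C,D\}$ forces $k$ even, and together with $k \leq 2$ this gives $k = 2$.)

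The one point that requires a little care, and which I regard as the crux, is the reduction to the group-theoretic setting: verifying that $\rho$ fixes the origin of the group law and is therefore additive. This is what upgrades the purely analytic statement ``$\Gamma_\R$ is a union of circles'' to ``$\Gamma_\R$ is a subgroup'', after which the count of $2$-torsion points settles the number of components. The remaining ingredients---Harnack's inequality, the normal form $\tilde\rho(u) = c\bar u$ for an anti-holomorphic bijection fixing a point, and the structure of $1$-dimensional compact abelian Lie groups---are all standard.
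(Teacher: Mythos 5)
Your proof is correct and follows essentially the same route as the paper: both arguments note that $\Gamma_\R$ is a nonempty union of at most two circles by Harnack, observe that after identifying $\Gamma$ with $\C/\Lambda$ so that a real point is the origin the set $\Gamma_\R$ becomes a closed subgroup, and then distinguish one component from two by counting the real $2$-torsion points $A,B,C,D$. The only difference is that you spell out why $\rho$ is a group homomorphism (via the lift $\tilde\rho(u)=c\bar u$), a step the paper asserts without proof.
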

\begin{proof}
The real part of any Riemann surface consists a finite number of disjoint circles (ovals). Furthermore, since the genus of $\Gamma$ is $1$, the number of connected components of $\Gamma_\R$ is at most $2$ by Harnack's theorem. At the same time, the number of connected components is non-zero since the real part $\Gamma_\R$ of $\Gamma$ is not empty (by Corollary \ref{realpoints}). So, it remains to determine whether the number of connected components is $1$ or $2$. These cases can be distinguished by counting the number of real points of order $2$ on $\Gamma$. Namely, if $\Gamma$ is identified with $\C \,/\, \Lambda$ in such a way that $0$ is a real point, then $\Gamma_\R$ is a subgroup of $\Gamma$ isomorphic to $S^1$ if $\Gamma_\R$ is connected, and $S^1 \times \Z_2$ if $\Gamma_\R$ has two components. So the number of real order $2$ points in $\Gamma_\R$ is $2^{m}$, where $m$ is the number of components of $\Gamma$. Identifying $\Gamma$ with $\C / L$ as in Section \ref{ss:sd}, we see that the order $2$ points are $A,B,C,D$, which are all real. So, $m=2$, q.e.d.
\end{proof}
This argument also shows that one of the components of $\Gamma_\R$ contains the point $D$ and one of the points $\{A, B, C\}$, while the second component of $\Gamma_\R$ contains the remaining two points. Without loss of generality, assume that $C$ and $D$ are located in the same component. Denote that component by~$\Gamma_\R^0$. 
\begin{proposition}
We have $Z_\pm, S_\pm \in \Gamma_\R^0$.
\end{proposition}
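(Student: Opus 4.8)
The plan is to locate these four points via the interplay between the real structure $\rho$, the involution $\sigma$, and the values of $z$ and $s$ at $C$ and $D$. Throughout I use that $\sigma$ and $\rho$ commute: they are induced, respectively, from the commuting involutions $(z,w)\mapsto(z^{-1},w)$ and $(z,w)\mapsto(\bar z,\bar w)$ of the affine spectral curve $\Gamma_a$, hence they commute on the normalization and on $\Gamma$.

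First I would show that $\sigma$ preserves each connected component of $\Gamma_\R$. Since $\sigma$ commutes with $\rho$, it maps $\Gamma_\R$ onto itself; being a homeomorphism of the two disjoint circles $\Gamma_\R=\Gamma_\R^{0}\sqcup\Gamma_\R^{1}$ that fixes $C\in\Gamma_\R^{0}$ as well as whichever of $A,B$ lies on $\Gamma_\R^{1}$ (recall that $A,B,C,D$ are exactly the fixed points of $\sigma$), it must carry each component to itself. As $\sigma$ interchanges $Z_+\leftrightarrow Z_-$ and $S_+\leftrightarrow S_-$, each of the pairs $\{Z_+,Z_-\}$ and $\{S_+,S_-\}$ lies entirely inside a single component of $\Gamma_\R$. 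Hence it suffices to prove that $\Gamma_\R^{0}$ contains at least one point of each pair.

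Next I would invoke that $z$ and $s$ are real functions (Proposition~\ref{realfunctions}), so they take real (or infinite) values along the oval $\Gamma_\R^{0}$, together with $z(C)=s(C)=1$ and $z(D)=s(D)=-1$; here $s(C)=1$, $s(D)=-1$ hold by the choice of $C,D$, and $z=s^{-1}$ at fixed points of $\sigma$ by Remark~\ref{z1}. The elementary fact I would use is that a real meromorphic function changes sign along a real oval exactly at its odd-order zeros and poles (in a real local coordinate it looks like $c\,t^{m}(1+O(t))$ with $c\in\mathbb{R}^{*}$, and the sign flips iff $m$ is odd). Since $C,D\in\Gamma_\R^{0}$ and $z$ (resp. $s$) takes opposite signs there, it must change sign somewhere on $\Gamma_\R^{0}$, so $\Gamma_\R^{0}$ contains an odd-order zero or pole of that function. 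By Table~\ref{table} the only zeros and poles of $z$ are $Z_\pm$, each of the odd order $n$, so one of $Z_\pm$ lies on $\Gamma_\R^{0}$, hence both; likewise the only odd-order zero and pole of $s$ are the simple zero $S_+$ and the simple pole $S_-$ (the points $Z_\pm$ are zeros/poles of $s$ of order $2$), so one of $S_\pm$ lies on $\Gamma_\R^{0}$, hence both.

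The step needing the most care is the case of $S_\pm$. The naive intermediate-value argument that settles $z$ — being positive at $C$ and negative at $D$ forces $z$ to attain $0$ or $\infty$ on $\Gamma_\R^{0}$, and $z$ vanishes or blows up only at $Z_\pm$ — is not enough for $s$, because $s$ also vanishes and blows up at $Z_\pm$, which are already known to lie on $\Gamma_\R^{0}$. This is precisely why one must track the \emph{parity} of the orders of $s$ at its zeros and poles, using Table~\ref{table} / Proposition~\ref{cor:table} that $S_+,S_-$ are the simple zero and pole of $s$ while $Z_\pm$ contribute only even orders. All the other ingredients — commutation of $\sigma$ and $\rho$, reality of $z$ and $s$, and the values at $C$ and $D$ — are already in hand from the preceding sections.
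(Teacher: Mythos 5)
Your proof is correct and follows essentially the same route as the paper: reality of $z$ and $s$, the opposite signs $z(C)=s(C)=1$ and $z(D)=s(D)=-1$, and the observation that a sign change along $\Gamma_\R^0$ can occur only at an odd-order zero or pole --- which by Table \ref{table} means $Z_\pm$ for $z$ (order $\pm n$, $n$ odd) and $S_\pm$ for $s$ (the orders of $s$ at $Z_\pm$ being even). The only cosmetic difference is that the paper counts at least two sign changes (one on each arc of $\Gamma_\R^0$ between $C$ and $D$) to capture both points of each pair at once, whereas you capture one point and then promote it to the pair via the $\sigma$-invariance of each component of $\Gamma_\R$; both steps are sound.
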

\begin{proof}
The function $z$ is real-valued on $ \Gamma_\R^0$ and satisfies $z(C) = 1$, $z(D) = -1$. So, there must be at least two points on $ \Gamma_\R^0$ where $z$ changes sign. But the only points which have this property are $Z_\pm$ (see Table \ref{table}).
Similarly, $s(C) = 1$, $s(D) = -1$, so the function $s$ should also change sign at two points. Moreover, these cannot be the points $Z_\pm$, since at those points $s$ has a zero and a pole of order $2$. So, we must have $S_\pm \in \Gamma_\R^0$, as desired.
\end{proof}

\begin{figure}[t]
\centering
\begin{tikzpicture}[scale = 1]

\draw (0,0) circle (1);
\coordinate (C) at (0,1);
\coordinate (D) at (0,-1);
\coordinate (Zp) at (-0.8,0.6);
\coordinate (Zm) at (0.8,0.6);
\coordinate (Sp) at (-0.8,-0.6);
\coordinate (Sm) at (0.8,-0.6);

\node[label={[shift={(0,-0.05)}]above:${C}$}] at (C) () {};
\node[label={[shift={(0,0.05)}]below:${D}$}] at (D) () {};
\node[label={[shift={(0,0)}]left:${S_+}$}] at (Zp) () {};
\node[label={[shift={(0,0)}]right:${S_-}$}] at (Zm) () {};
\node[label={[shift={(0,0)}]left:${Z_+}$}] at (Sp) () {};
\node[label={[shift={(0,0)}]right:${Z_-}$}] at (Sm) () {};
\fill (C) circle (0.07);
\fill (D) circle (0.07);
\fill (Zp) circle (0.07);
\fill (Zm) circle (0.07);
\fill (Sp) circle (0.07);
\fill (Sm) circle (0.07);

\end{tikzpicture}
\caption{Location of the points $C,D,Z_\pm,S_\pm$ in the component  $\Gamma_\R^0$ of the real part of the spectral curve.}\label{Fig:realpart}
\end{figure}
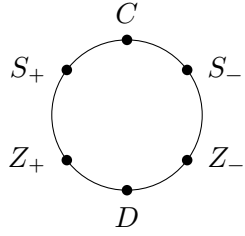

\begin{proposition}\label{cyclicOrder}
The cyclic order of the points  $C,D,Z_\pm,S_\pm$ on $ \Gamma_\R^0$ is as shown in Figure \ref{Fig:realpart}.
\end{proposition}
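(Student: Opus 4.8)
The plan is to pin down the cyclic order of $C,D,Z_\pm,S_\pm$ on the oval $\Gamma_\R^0$ by tracking the signs and the monotonicity of the real functions $z$, $w$, $s$, $\mu_\pm$ along that circle, and by using the orders of these functions at the six points (Table \ref{table}). Since $\Gamma_\R^0$ is a topological circle, any two of the six points cut it into two arcs; on each arc a real meromorphic function with no zeros or poles there keeps a constant sign, and the number of sign changes of $f$ along a closed arc between consecutive special points equals (mod 2) the number of zeros plus poles of $f$ in that arc. So the combinatorics is rigid once we know which zeros and poles of $z$, $s$, $w$ lie on $\Gamma_\R^0$ — and we already established $Z_\pm, S_\pm \in \Gamma_\R^0$, that $w$ has its only zeros at $S_\pm$ and only poles at $Z_\pm$, and the values $z(C)=s(C)=1$, $z(D)=s(D)=-1$.

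First I would fix an orientation of $\Gamma_\R^0$ and record: on $\Gamma_\R^0$ the function $z$ takes the value $1$ at $C$ and $-1$ at $D$ and has exactly one simple zero ($Z_+$) and one simple pole ($Z_-$); hence going around the circle $z$ changes sign exactly at $Z_+$ and $Z_-$, which forces $Z_+$ and $Z_-$ to separate $C$ from $D$ — i.e.\ in cyclic order $C$ and $D$ alternate with the pair $\{Z_+,Z_-\}$. The same argument applied to $s$ (value $1$ at $C$, $-1$ at $D$, and on $\Gamma_\R^0$ its only zero $S_+$ and only pole $S_-$, both simple — note $S_\pm$, not $Z_\pm$, carry the order-$\pm1$ of $s$ on this component, since at $Z_\pm$ the function $s$ has a zero/pole of even order $2$) shows $S_+$ and $S_-$ also separate $C$ from $D$. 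Combining, on each of the two arcs of $\Gamma_\R^0$ cut out by $C$ and $D$ there sits exactly one of $\{Z_+,Z_-\}$ and exactly one of $\{S_+,S_-\}$. To decide which of $Z_+,S_+$ versus $Z_-,S_-$ goes on which arc, and their order within an arc, I would use the relation $s = \mu_+/(z\mu_-)$ together with $\sigma^*z = z^{-1}$, $\sigma^*s = s^{-1}$, $\sigma^*\mu_+ = \mu_-$: the involution $\sigma$ (which is $u\mapsto -u$ on $\C/\Lambda$) fixes $C$ and $D$ and must preserve $\Gamma_\R^0$, acting on it as a reflection with fixed points exactly $C$ and $D$; since $\sigma$ swaps $Z_+\leftrightarrow Z_-$ and $S_+\leftrightarrow S_-$, the two points $Z_+$ and $Z_-$ lie on opposite arcs and are mirror images under this reflection, and likewise $S_+$, $S_-$. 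Finally, within one arc the order of $Z_+$ relative to $S_+$ is forced by examining the sign of $w$ (positive or negative where $z>0$): between the pole $Z_+$ of $w$ and the zero $S_+$ of $w$, and using $w = \mu_+\mu_-$ with the known orders of $\mu_\pm$ at $Z_+$ (namely $\mathrm{ord}_{Z_+}\mu_+ = (n-3)/2$, $\mathrm{ord}_{Z_+}\mu_- = -(n-1)/2$) together with $\mu_+(S_+)=0$, $\mu_-(S_+)\neq0$, one reads off that $S_+$ lies between $Z_+$ and $D$ (on the side of $Z_+$ away from $C$), matching Figure \ref{Fig:realpart}.

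I expect the main obstacle to be the last step: distinguishing between the two a priori possible cyclic orders that survive the parity/sign count — i.e.\ ruling out the configuration where $S_+$ sits between $C$ and $Z_+$ rather than between $Z_+$ and $D$. This cannot be settled by sign-of-$z$ alone and requires genuinely using the orders of $\mu_\pm$ (or equivalently of $s$) at $Z_\pm$, or, alternatively, computing the relevant points explicitly in the $\C/\Lambda$ model via the theta-function formulas of Section \ref{ss:sd}: there $A,B,C,D$ are the half-period points $0,1/2,\tau/2,(1+\tau)/2$, the point $D$ is $(1+\tau)/2$, and $Z_\pm = \pm(z_+)$ with $z_+ + z_- = 1+\tau$, $S_\pm$ are the preimages of the zeros of $w$; plugging into $s = \mu_+/(z\mu_-)$ and using the $\Gamma_\R^0 = \{$ the real line segment through $\tau/2$ and $(1+\tau)/2$, say $\}$ description lets one simply compare real parameters along the oval. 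Either route is routine once set up, so I would present the sign-and-order bookkeeping argument as the main line and mention the $\C/\Lambda$ computation as a cross-check.
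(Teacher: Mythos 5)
Your reduction to four candidate configurations is sound and essentially the same as the paper's: the sign changes of $z$ occur only at $Z_\pm$ (of odd order $n$) and those of $s$ only at $S_\pm$ (since $s$ has even order $\mp 2$ at $Z_\pm$), so together with $z(C)=s(C)=1$, $z(D)=s(D)=-1$ and the fact that $\sigma$ acts on $\Gamma_\R^0$ as the reflection fixing $C,D$, each of the two arcs from $C$ to $D$ carries exactly one of $Z_\pm$ and one of $S_\pm$, in mirror-symmetric positions. The gap is in the last step, which is where the real content lies, and your stated conclusion there is in fact wrong: you claim $S_+$ lies between $Z_+$ and $D$ ``matching'' Figure~\ref{Fig:realpart}, but that figure places $S_+$ between $C$ and $Z_+$; the order $C,Z_+,S_+,D$ is precisely the excluded configuration of Figure~\ref{Fig:realpartimp}(b), where $z(S_+)<0$. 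Moreover the tool you propose cannot decide the question: since $\D_+\D_-$ restricted to $\v{\pm 1}$ has the form $AA^*$, one has $w(C)>0$ and $w(D)>0$, and $w$ has simple zeros/poles at all four points $S_\pm,Z_\pm$, so along either arc from $C$ to $D$ the sign pattern of $w$ is $+,-,+$ in \emph{every} one of the four configurations and carries no information; likewise the parity of $\mathrm{ord}_{Z_+}\mu_+=(n-3)/2$ depends on $n\bmod 4$ and cannot feed a robust sign argument.

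What is actually needed, and what the paper proves as Lemmas~\ref{flemma2} and~\ref{flemma1}, are two quantitative facts absent from your outline: (i) $z(S_+)\in(0,1)$, because $\xi(S_+)$ spans $\Ker{\D_+}$, so $z(S_+)$ is the monodromy $z_l$ of $\Dl$ and weak convexity gives $0<z_l<z_r=z_l^{-1}$ (Proposition~\ref{alt}); and (ii) $C$ is the only point of $\Gamma_\R^0$ with $z=1$, because self-adjointness of $\D_+\D_-$ on $\v{1}$ forces $\mu_+=\mu_-$ and hence $s=z^{-1}=1$ there. Fact (i) rules out the two configurations in which $S_+$ lies on the arc between $Z_+$ and $Z_-$ containing $D$, where $z<0$; facts (i) and (ii) together rule out the order $C,S_+,Z_-,D$, since on the arc from $S_+$ to $Z_-$ the function $z$ would have to run from a value in $(0,1)$ to $\infty$ without ever crossing $0$ or $1$. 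Your proposed cross-check in the $\C/\Lambda$ model cannot substitute for this: the positions of $Z_\pm$, $S_\pm$, $X_p$ there are free parameters of the theta-function formulas, so their location on the real oval is the conclusion of the proposition, not something one can read off from that model.
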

The proof is based on the following two lemmas.
\begin{lemma}\label{flemma2}
We have $z(S_+) \in (0,1)$.
\end{lemma}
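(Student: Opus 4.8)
The plan is to compute $z(S_+)$ exactly, identifying it with the monodromy $z_l$ of the first order operator $\Dl$, and then to read off $z_l\in(0,1)$ from properties of $P_{sd}$ that are already available. Throughout I use that $\D_+=\Dl$ and $\D_-=-T^{-n}\Dr=\Dl^*$.

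First I would check that $S_+\neq Z_\pm$: the function $z$ vanishes only at $Z_+$ and has a pole only at $Z_-$, whereas $\ord_{S_+}z=0$ by Table~\ref{table}. Renormalizing the eigenvector function $\xi$ if necessary (Remark~\ref{rem:holo}), the value $V:=\xi(S_+)\in\R^\infty$ is then a well-defined nonzero vector, and the relations $\D_\pm\xi=\mu_\pm\xi$ and $T^n\xi=z\xi$ are unaffected by the renormalization. Since $\ord_{S_+}\mu_+=1$ by Table~\ref{table}, we have $\mu_+(S_+)=0$, so Proposition~\ref{musholo} gives $\D_+V=\mu_+(S_+)V=0$; that is, $V$ is a nonzero element of $\Ker\D_+=\Ker\Dl$. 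By the description of $\Ker\Dl$ in the proof of the second statement of Proposition~\ref{alt}, every nonzero element of $\Ker\Dl$ is quasi-periodic with monodromy $z_l$, whence $T^nV=z_lV$. Comparing with $T^nV=z(S_+)V$ gives $z(S_+)=z_l$; in particular $z(S_+)>0$, since $z_l>0$ by the second statement of Proposition~\ref{alt}.

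To finish I would bound $z_l$ from above by exploiting the involution $\sigma$. Since $\sigma$ interchanges the two points in any level set of $w$, and $S_+\neq S_-$, we have $\sigma(S_+)=S_-$; moreover $\sigma^*\mu_+=\mu_-$ (Proposition~\ref{sigmamumu}) gives $\mu_-(S_-)=\mu_+(S_+)=0$. Repeating the argument of the previous paragraph with $\D_-=-T^{-n}\Dr$ in place of $\D_+$ — so that $\Ker\D_-=\Ker\Dr$, whose nonzero elements have monodromy $z_r$ — yields $z(S_-)=z_r$. On the other hand $z(S_-)=(\sigma^*z)(S_+)=z(S_+)^{-1}=z_l^{-1}$, so $z_lz_r=1$. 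Since $P_{sd}$ is weakly convex (being a rescaling of the weakly convex polygon $P$ with positive parameter, by the third statement of Proposition~\ref{alt}), the second statement of Proposition~\ref{alt} gives $0<z_l<z_r$, and together with $z_lz_r=1$ this forces $z_l^2<1$, i.e. $z(S_+)=z_l\in(0,1)$.

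I do not expect a genuine obstacle: each step reduces to a fact established earlier in the paper, and the computation is essentially bookkeeping with the functions $z,\mu_\pm$ on $\Gamma$ and the monodromies $z_l,z_r$. The only point requiring care is the legitimacy of evaluating $\xi$ at $S_+$ and $S_-$, which in the elliptic case could coincide with the pole $X_p$ of $\xi$; this is handled by Remark~\ref{rem:holo}, which supplies a finite nonzero representative of $\xi$ at any point outside $\{Z_\pm\}$, and the eigenvector equations $\D_\pm\xi=\mu_\pm\xi$, $T^n\xi=z\xi$ survive the necessary rescaling, so the identifications $z(S_+)=z_l$ and $z(S_-)=z_r$ go through verbatim.
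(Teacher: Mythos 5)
Your proposal is correct and follows essentially the same route as the paper: identify $z(S_+)$ with the monodromy $z_l$ of $\D_+$ via $\mu_+(S_+)=0$, then combine $0<z_l<z_r$ from the second statement of Proposition~\ref{alt} with $z_r=z_l^{-1}$. The only cosmetic difference is that you derive $z_r=z_l^{-1}$ by running the same argument at $S_-=\sigma(S_+)$ and using $\sigma^*z=z^{-1}$, whereas the paper gets it directly from the operator duality $\D_r=-T^n\D_+^*$; these are the same fact in two guises.
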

\begin{proof}
Without loss of generality, assume that the vector $\xi(S_+)$ is finite and non-zero (see Remark~\ref{rem:holo}). Then, using the definition of the function $\mu_+$ and the fact that $\mu_+(S_+) = 0$ (see Table \ref{table}), we get
$
\D_+ \xi(S_+) =\mu_+(S_+) \xi(S_+) = 0.
$ Therefore, $ \xi(S_+) $ spans the kernel of the operator $\D_+$, while $z(S_+)$ is the monodromy of that operator. So, by the second statement of Proposition \ref{alt}, the number $z(S_+)$ is positive and is less than the monodromy of $\D_r = -T^n\D_+^*$ But the monodromy of the latter operator is the same as the monodromy of $\D_+^*$, which is $z(S_+)^{-1}$. So, we get
$
0 < z(S_+) < z(S_+)^{-1},
$
and the result follows.
\end{proof}
\begin{lemma}\label{flemma1}
The only point in $\Gamma_\R^0$ where $z =  1$ is the point $C$.
\end{lemma}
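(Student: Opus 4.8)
The plan is to prove a slightly sharper statement: \emph{every} real point $X \in \Gamma$ with $z(X) = 1$ lies in $\{A,B,C\}$ (recall $A,B,C$ are defined in Section~\ref{ss:sd}). Since $A$ and $B$ lie on the component of $\Gamma_\R$ other than $\Gamma_\R^0$, this immediately gives the lemma. So I would begin with a real point $X$ satisfying $z(X) = 1$. Because $z(Z_+) = 0$ and $z(Z_-) = \infty$ (Table~\ref{table}), we have $X \neq Z_\pm$, so by Remark~\ref{rem:holo} the value $\xi(X)$ determines a direction in $\P^\infty$; after multiplying $\xi$ by a suitable \emph{real} meromorphic function on $\Gamma$ (possible because $\xi$ is a real function, hence $\rho$ preserves its pole divisor) I may assume that $\xi(X)$ is a finite non-zero vector lying in $\R^\infty$. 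The equation $z(X) = 1$ then reads $T^n\xi(X) = \xi(X)$, i.e. $\xi(X) \in \v{1}$, and by Proposition~\ref{musholo} (recall $X \neq Z_\pm$) we have $\D_\pm \xi(X) = \mu_\pm(X)\,\xi(X)$ with $\mu_\pm(X)$ finite, and real because $\mu_\pm$ are real functions (Proposition~\ref{realfunctions}).

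The crucial step is to use that the restrictions of $\D_+ = \D_l$ and $\D_- = \D_l^*$ to $\v{1}$ are dual to each other with respect to the pairing $\langle \xi, \eta\rangle = \sum_{k=1}^n \xi_k\eta_k$ (Section~\ref{sec:primer}). Pairing $\D_+\xi(X) = \mu_+(X)\xi(X)$ with $\xi(X)$ and moving $\D_+$ across gives
\[
\mu_+(X)\,\langle \xi(X), \xi(X)\rangle \;=\; \langle \D_+\xi(X), \xi(X)\rangle \;=\; \langle \xi(X), \D_-\xi(X)\rangle \;=\; \mu_-(X)\,\langle \xi(X), \xi(X)\rangle .
\]
Since $\xi(X) \in \v{1}$ is real and non-zero, $\langle \xi(X), \xi(X)\rangle = \sum_{k=1}^n \xi(X)_k^2 > 0$, so $\mu_+(X) = \mu_-(X)$. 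This common value cannot be zero: otherwise $\xi(X) \in \Ker\D_+ \cap \Ker\D_-$, which is impossible by the second statement of Proposition~\ref{alt}, applied exactly as in the proof of Proposition~\ref{degIneq}. Therefore, from the definition $s = \mu_+/(z\mu_-)$ and $z(X) = 1$ we obtain $s(X) = 1$, so $X \in s^{-1}(1)$. Finally, since $\deg s = 3$ (Proposition~\ref{degIneq}) and $s$ equals $1$ at the three distinct points $A,B,C$, we get $s^{-1}(1) = \{A,B,C\}$, hence $X \in \{A,B,C\}$.

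I do not anticipate a genuine obstacle here. The only points that require a little care are the normalization of $\xi$ to a finite real vector at $X$ (routine, using that the real structure $\rho$ fixes the pole divisor of $\xi$ — alternatively one can use Corollary~\ref{behfin} directly, since then the only possible pole is a single real point and $\xi_0 \equiv 1$ has no pole) and the non-vanishing of $\mu_+(X)$, which is handled via Proposition~\ref{alt}. The one conceptual ingredient is the observation that over a real point the mutual adjointness of $\D_+$ and $\D_-$ on $\v{1}$ forces $\mu_+ = \mu_-$ there, so that the condition $z = 1$ propagates to $s = 1$, placing $X$ among the three points where $s=1$.
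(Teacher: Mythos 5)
Your proof is correct and follows essentially the same route as the paper's: normalize $\xi(X)$ to a finite non-zero real vector, use the duality of $\D_+$ and $\D_-$ on $\v{1}$ with respect to the pairing \eqref{pairing} to force $\mu_+(X)=\mu_-(X)$, rule out the common value being zero, and conclude $s(X)=z(X)^{-1}=1$, hence $X\in\{A,B,C\}$ and so $X=C$ on $\Gamma_\R^0$. The only cosmetic differences are that you state the conclusion for all real points rather than just $\Gamma_\R^0$, and you justify $\mu_\pm(X)\neq 0$ via the second statement of Proposition~\ref{alt} rather than by citing Table~\ref{table}; both are valid.
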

\begin{proof}
Assume that $X \in \Gamma_\R^0$ and $z(X)  =  1$. Then the latter condition in particular implies $X \neq Z_\pm$. Therefore, without loss of generality, we may assume that the vector $\xi(X)$ is finite and non-zero (if not, we renormalize $\xi$, see Remark \ref{rem:holo}). Under this assumption, using the inner product \eqref{pairing} on $\v{\pm 1}$, we get
$$
\mu_+(X) \left\langle  \xi(X), \xi(X) \right\rangle =  \left\langle  \D_+\xi(X), \xi(X) \right\rangle = \left\langle  \xi(X), \D_-\xi(X) \right\rangle = \mu_-(X) \left\langle  \xi(X), \xi(X) \right\rangle.
$$
Furthermore, since the vector $\xi(X)$ is real, it follows that $\left\langle  \xi(X), \xi(X) \right\rangle > 0$, and thus $\mu_+(X) = \mu_-(X)$. So, using formula \eqref{SDEF} for the function $s$, we get $s(X) = z(X)^{-1} =  1$ (here we use that the value $\mu_+(X) = \mu_-(X)$ is finite and non-zero, which is true because the functions $\mu_\pm$ do not have common zeros or poles, see Table \ref{table}). Furthermore, recall that  the set of points where $s = 1$ consists of the point $C$, plus points $A$ and $B$ which do not belong to $\Gamma_\R^0$. The result follows.
\end{proof}

\begin{proof}[Proof of Proposition \ref{cyclicOrder}]
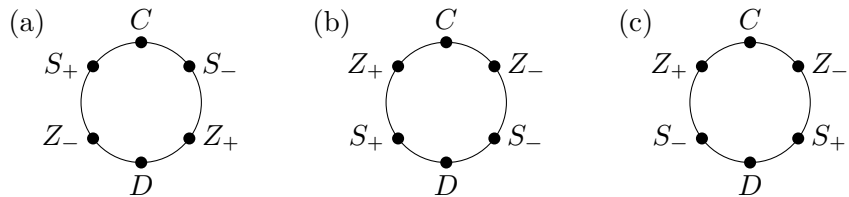
\begin{figure}[t]
\centering
\begin{tikzpicture}[scale = 0.9]

\node at (0,0) () {

\begin{tikzpicture}[scale = 0.8]
\node at (-1.9,1.3) () {(a)};
\draw (0,0) circle (1);
\coordinate (C) at (0,1);
\coordinate (D) at (0,-1);
\coordinate (Zp) at (-0.8,0.6);
\coordinate (Zm) at (0.8,0.6);
\coordinate (Sp) at (-0.8,-0.6);
\coordinate (Sm) at (0.8,-0.6);

\node[label={[shift={(0,-0.1)}]above:${C}$}] at (C) () {};
\node[label={[shift={(0,0.1)}]below:${D}$}] at (D) () {};
\node[label={[shift={(0.1,0)}]left:${S_+}$}] at (Zp) () {};
\node[label={[shift={(-0.1,0)}]right:${S_-}$}] at (Zm) () {};
\node[label={[shift={(0.1,0)}]left:${Z_-}$}] at (Sp) () {};
\node[label={[shift={(-0.1,0)}]right:${Z_+}$}] at (Sm) () {};
\fill (C) circle (0.1);
\fill (D) circle (0.1);
\fill (Zp) circle (0.1);
\fill (Zm) circle (0.1);
\fill (Sp) circle (0.1);
\fill (Sm) circle (0.1);

\end{tikzpicture}
};
\node at (4.5,0) () {
\begin{tikzpicture}[scale = 0.8]
\node at (-1.9,1.3) () {(b)};
\draw (0,0) circle (1);
\coordinate (C) at (0,1);
\coordinate (D) at (0,-1);
\coordinate (Zp) at (-0.8,0.6);
\coordinate (Zm) at (0.8,0.6);
\coordinate (Sp) at (-0.8,-0.6);
\coordinate (Sm) at (0.8,-0.6);

\node[label={[shift={(0,-0.1)}]above:${C}$}] at (C) () {};
\node[label={[shift={(0,0.1)}]below:${D}$}] at (D) () {};
\node[label={[shift={(0.1,0)}]left:${Z_+}$}] at (Zp) () {};
\node[label={[shift={(-0.1,0)}]right:${Z_-}$}] at (Zm) () {};
\node[label={[shift={(0.1,0)}]left:${S_+}$}] at (Sp) () {};
\node[label={[shift={(-0.1,0)}]right:${S_-}$}] at (Sm) () {};
\fill (C) circle (0.1);
\fill (D) circle (0.1);
\fill (Zp) circle (0.1);
\fill (Zm) circle (0.1);
\fill (Sp) circle (0.1);
\fill (Sm) circle (0.1);
\end{tikzpicture}
};
\node at (9,0) () {
\begin{tikzpicture}[scale = 0.8]
\node at (-1.9,1.3) () {(c)};
\draw (0,0) circle (1);
\coordinate (C) at (0,1);
\coordinate (D) at (0,-1);
\coordinate (Zp) at (-0.8,0.6);
\coordinate (Zm) at (0.8,0.6);
\coordinate (Sp) at (-0.8,-0.6);
\coordinate (Sm) at (0.8,-0.6);

\node[label={[shift={(0,-0.1)}]above:${C}$}] at (C) () {};
\node[label={[shift={(0,0.1)}]below:${D}$}] at (D) () {};
\node[label={[shift={(0.1,0)}]left:${Z_+}$}] at (Zp) () {};
\node[label={[shift={(-0.1,0)}]right:${Z_-}$}] at (Zm) () {};
\node[label={[shift={(0.1,0)}]left:${S_-}$}] at (Sp) () {};
\node[label={[shift={(-0.1,0)}]right:${S_+}$}] at (Sm) () {};
\fill (C) circle (0.1);
\fill (D) circle (0.1);
\fill (Zp) circle (0.1);
\fill (Zm) circle (0.1);
\fill (Sp) circle (0.1);
\fill (Sm) circle (0.1);
\end{tikzpicture}
};

\end{tikzpicture}
\caption{Impossible locations of the points $C,D,Z_\pm,S_\pm$ on  $\Gamma_\R^0$.}\label{Fig:realpartimp}
\end{figure}
The restriction of the involution $\sigma$ to $\Gamma_\R^0$ preserves the points $C,D$, interchanges $Z_+$ with $Z_-$, and interchanges $S_+$ with $S_-$. For this reason, the only possible locations of those points on $\Gamma_\R^0$ are the one depicted in Figure \ref{Fig:realpart}, as well as the ones depicted in Figure \ref{Fig:realpartimp}. Assume that $C,D,Z_\pm,S_\pm$ are located as in Figure \ref{Fig:realpartimp}a. Then, since $z(S_+) \in (0,1)$ by Lemma \ref{flemma2}, while $Z_-$ is a pole of $z$, there must be a point $X$ in the open arc  $(S_+, Z_-)$ such that $z(X) = 1$ or $z(X) = 0$ (here and below $(X,Y)$ denotes an open arc going from $X$ to $Y$ in counter-clockwise direction). However, the former is impossible by Lemma \ref{flemma1}, while the latter is impossible since the only zero of $z$ is the point $Z_+$. So, the points cannot be located as in Figure \ref{Fig:realpartimp}a. Further, since $z(D) = -1$, while the only points where $z$ changes sign are $Z_\pm$, in Figures \ref{Fig:realpartimp}b and  \ref{Fig:realpartimp}c we must have $z(S_+) < 0$, which is impossible by Lemma \ref{flemma2}. So, the points $C,D,Z_\pm,S_\pm$ are located as in Figure~\ref{Fig:realpart}.
\end{proof}
Now recall that the elliptic curve $\Gamma$ is associated with a Poncelet $n$-gon $P_{sd}$, and in addition we have a closed $n$-gon $ P = R_{s_0}(P_{sd})$, where $s_0 > 0$. Our aim is to show that $s_0 = 1$.
\begin{proposition}
There is a point $X_0 \in \Gamma_\R^0$ such that $s(X_0) = s_0$ and $z(X_0) = s_0^{-n/3}$.
\end{proposition}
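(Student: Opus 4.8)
\emph{Plan of proof.} The idea is that closedness of $P = R_{s_0}(P_{sd})$ forces the monodromy of the operator $\D := \D_+ - s_0 T^n\D_-$ representing $P$ to be a scalar matrix $z_0\,\Id$, and that the (at most three) points of $\Gamma$ lying over $s = s_0$ are exactly the ones carrying quasi‑periodic solutions of $\D\xi = 0$; all of them then have $z = z_0$, so it remains to pin down $z_0$ and to place one such point inside $\Gamma_\R^0$. First I would record that on the eigenvector function one has $\D_\pm\xi = \mu_\pm\xi$ and $T^n\xi = z\xi$, hence, using the definition \eqref{SDEF} of $s$,
\[
\D\xi = (\D_+ - s_0 T^n\D_-)\xi = (\mu_+ - s_0 z\mu_-)\xi = z\mu_-\,(s - s_0)\,\xi .
\]
Consequently, for any $X \in \Gamma\setminus\{Z_\pm,S_\pm\}$ with $s(X) = s_0$, after renormalizing $\xi$ so that $\xi(X)$ is finite and non‑zero (Remark \ref{rem:holo}) one gets a non‑zero vector $\xi(X) \in \Ker\D$ with $T^n\xi(X) = z(X)\xi(X)$, so $z(X)$ is an eigenvalue of the monodromy $M$ of $\D$. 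Since $P$ is a closed $n$‑gon its projective monodromy is trivial, so $M = z_0\,\Id$ for some $z_0$, and hence $z(X) = z_0$ for every such $X$; as $M$ is a real matrix, $z_0 \in \R$.

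Next I would identify $z_0$ through $z_0^3 = \det M$ and formula \eqref{monodet}. By Corollary \ref{cor:rescalingDO} the operator $\D$ is obtained from $\D_{sd} := \D_+ - T^n\D_-$ by multiplying its two trailing coefficients (those of $T^{(n+1)/2}$ and $T^{(n+3)/2}$) by $s_0$, so $\det M = s_0^{-n}\det M_{sd}$, where $M_{sd}$ is the monodromy of $\D_{sd}$. Now $\D_{sd}$ is anti‑self‑dual up to $T^{-n}$, i.e. $\D_{sd}^* = -T^{-n}\D_{sd}$ (valid because $P_{sd}$ is self‑dual, see Corollary \ref{cor:rescalingPon}); comparing coefficients in this relation expresses the trailing coefficient of $\D_{sd}$ as the negative of an index shift of its leading coefficient, and feeding this into \eqref{monodet} together with the $n$‑periodicity of the coefficients gives $\det M_{sd} = (-1)^{4n} = 1$ (using $\ord \D_{sd} = 3$). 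Hence $z_0^3 = s_0^{-n}$, and since $z_0$ is real and $s_0 > 0$, we conclude $z_0 = s_0^{-n/3}$.

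Finally I would produce the real point. The function $s$ is real (Proposition \ref{realfunctions}), hence continuous and real‑valued on $\Gamma_\R^0$. By Proposition \ref{cyclicOrder} the open arc of $\Gamma_\R^0$ running from $S_-$ through $C$ to $S_+$ contains none of $Z_\pm$, $D$; along it $s$ has no zeros or poles, while at its endpoints $S_-$ and $S_+$ it has (by Table \ref{table}) a simple pole and a simple zero respectively, and $s(C) = 1 > 0$. Therefore $s$ is positive on this arc, tends to $+\infty$ at $S_-$ and to $0$ at $S_+$, so by the intermediate value theorem it attains the value $s_0 \in (0,\infty)$ at some point $X_0$ in the open arc. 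If $s_0 \neq 1$ then $X_0 \notin \{C,S_\pm,Z_\pm\}$, so the first two paragraphs apply and $z(X_0) = z_0 = s_0^{-n/3}$; if $s_0 = 1$ one simply takes $X_0 = C$, which satisfies $z(C) = 1 = s_0^{-n/3}$ by Remark \ref{z1}. In either case $X_0 \in \Gamma_\R^0$ has the required properties.

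I expect the main obstacle to be the last step: one must be certain that the value $s_0$ is actually attained on the component $\Gamma_\R^0$ (and not only on the other oval), and at a point away from the special points — this is precisely what the cyclic order of Proposition \ref{cyclicOrder} and the order data of Table \ref{table} are set up to guarantee. The determinant identity $\det M_{sd} = 1$ is routine, but the index shifts in the anti‑self‑duality relation must be tracked carefully.
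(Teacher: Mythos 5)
Your proof is correct and follows essentially the same route as the paper: closedness of $P$ forces the monodromy of $\D_+ - s_0T^n\D_-$ to be $\lambda\,\Id$ with $\lambda^3 = \det M = s_0^{-n}$ by \eqref{monodet}, and the real point is produced from the order data of $s$ on $\Gamma_\R^0$. The only cosmetic difference is that you get surjectivity of $s$ on $\Gamma_\R^0$ by an intermediate-value argument along the arc $(S_-,S_+)$ through $C$, whereas the paper counts the signed degree of $s\colon\Gamma_\R^0\to\RP^1$ (the double pole at $Z_+$ contributing $0$); both rest on the same Table~\ref{table} and Proposition~\ref{cyclicOrder}.
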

\begin{proof}
The function $s$ has one simple pole and one double pole in $\Gamma_\R^0$ (see Table \ref{table}). Therefore, the degree of the mapping $s \colon \Gamma_\R^0 \to \RP^1$ is equal to $\pm 1$ (depending on the orientations). In particular, this mapping is surjective. So there exists $X_0 \in \Gamma_\R^0$ such that $s(X_0) = s_0$. To show that $z(X_0) = s_0^{-n/3}$, recall that the polygon $P$ associated with the operator $\D_+ -s_0T^n \D_-$ is closed. Therefore, the monodromy of the that operator has the form $\lambda \Id$. At the same time, since $\D_- = \D_+^*$, the explicit form of that operator is
$$\D_+ -s_0T^n \D_- = aT^{(n-3)/2} + bT^{(n-1)/2} - s_0 \tilde b T^{(n+1)/2} - s_0 \tilde aT^{(n+3)/2},$$ where the sequences $\tilde a$, $\tilde b$ coincide with $a$, $b$ up to a shift of indices. So, by formula~\eqref{monodet}, the determinant of the monodromy of this operator is $s_0^{-n}$. Thus, we have $\lambda = s_0^{-n/3}$, and the result follows.
\end{proof}

We will now show that $X_0 = C$, which implies $s_0 = 1$ and thus proves Theorem \ref{thm1}. To that end, notice that since $s(X_0) = s_0$ is finite and positive, $X_0$ must be located in the open arc $(S_-, S_+)$ (see Figure~\ref{Fig:realpart}). At the same time, since the function $s$ is equal to $1$ at $C$, has a pole at $S_-$, and does not take values $0,1, \infty$ in $(S_-, C)$, it follows that $s > 1$ in $(S_-, C)$. Furthermore, the same argument applied to the function $z$ shows that  $z > 1$ in $(Z_-, C)$, and, in particular, in $(S_-, C)$. But then $X_0$ cannot belong to $(S_-, C)$, because it is not possible that both $s(X_0) = s_0$ and $z(X_0) = s_0^{-n/3}$ are greater than~$1$. Analogously, $s$ and $z$ are both less than $1$ in $(C, S_+)$, so $X_0$ cannot belong there either. Therefore, we must have $X_0 = C$, which implies $s_0 = 1$. But this means that the polygon $P$ is the same as the polygon $P_{sd}$ and hence Poncelet. So, Theorem \ref{thm1} is proved.

\par

\section{Appendix: Duality of difference operators and polygons}\label{sec:app}

    The goal of this appendix is to prove that polygons corresponding to dual difference operators are dual to each other. This seems to be a well-known result, and it explicitly appears as Proposition~4.4.3 in~\cite{morier2014linear}. Here we give a different proof, based on interpretation of difference operators as infinite matrices.
    \begin{proposition}\label{dualdual}
    Let $\D$ be a properly bounded difference operator supported in $[m_-,m_+]$, and let $P = \{v_k\}$ be the corresponding polygon in $\P^{d-1}$, where $d = m_+ - m_-$ is the order of $\D$. Then the dual operator $\D^*$ corresponds to a polygon $P^* = \{v_k^*\}$ in the dual space $(\P^{d-1})^*$ whose $k$'th vertex $v_k^*$ is the hyperplane in $\P^{d-1}$ spanned by the vertices $v_{k + m_-+1}, \dots, v_{k + m_+-1}$ of $P$.
    \end{proposition}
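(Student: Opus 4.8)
The plan is to work with the matrix representation \eqref{infMatrix0} of $\D$ and to exploit the relation between fundamental solutions and difference Wronskians. Recall from Section~\ref{sec:primer} that a fundamental solution $V \in (\R^d)^\infty$ of $\D$ is a sequence of vectors in $\R^d$ with $V_k, \dots, V_{k+d-1}$ linearly independent for all $k$, and that the associated scalar sequences $\xi^1, \dots, \xi^d$ (the coordinates of $V$) form a basis of $\Ker\D$. I would like to produce, from such a $V$, an explicit fundamental solution $V^*$ of $\D^*$ and then identify the point $v_k^* = [V_k^*]$ with the hyperplane $\mathrm{span}(v_{k+m_-+1}, \dots, v_{k+m_+-1})$.

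The key computation is the following. For each $k$, define $V_k^* \in (\R^d)^* $ to be the linear functional whose value on a vector $u \in \R^d$ is the $d\times d$ determinant obtained by replacing one designated row of the ``Wronskian-type'' matrix $(V_{k+m_-+1} \,|\, \dots \,|\, V_{k+m_+-1} \,|\, u)$ appropriately --- more precisely, $V_k^* := V_{k+m_-+1} \wedge \dots \wedge V_{k+m_+ - 1}$, viewed as an element of $\Lambda^{d-1}\R^d \cong (\R^d)^*$ via the volume form. By construction $\langle V_k^*, V_j \rangle = 0$ for $j = k+m_-+1, \dots, k+m_+-1$, which is exactly the statement that the hyperplane $v_k^*$ contains the vertices $v_{k+m_-+1}, \dots, v_{k+m_+-1}$; and since those $d-1$ vertices are in general position (fundamental solution), $v_k^*$ is precisely their span. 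So the only thing left to check is that the sequence $V^* = \{V_k^*\}$ is a fundamental solution of $\D^*$, i.e. that $\sum_{j=m_-}^{m_+} a_{k-j}^{j}\, V^*_{k-j} = 0$ for all $k$ (note the index pattern of $\D^* = \sum_j \tilde a^j T^{-j}$ with $\tilde a^j_k = a^j_{k-j}$), and that consecutive $V_k^*$ are independent. Independence of $V_k^*, \dots, V_{k+d-1}^*$ should follow from independence of the $V$'s by a Plücker/exterior-algebra argument (the $(d-1)$-fold wedges of $d$ generic consecutive vectors among $2d-2$ consecutive vectors are independent). The annihilation identity $\D^* V^* = 0$ is the heart of the matter: contracting $\sum_j a_{k-j}^j V_{k-j}^*$ against an arbitrary test vector $u$ turns it into an alternating sum of determinants; expanding each determinant along the column containing $u$ and using the original relation $\sum_j a_m^j V_{m+j} = 0$ (the defining equation $\D V = 0$) to eliminate one of the $V$'s in each term, everything should telescope/cancel. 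This is essentially a Jacobi-type identity for determinants (a Sylvester or Plücker relation), and it is the step I expect to require the most care.

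An alternative, cleaner route --- which is really the ``matrix'' proof alluded to in the statement --- is to observe that if $M$ denotes the (finite-band, bi-infinite) matrix of $\D$, then $\D^*$ has matrix $M^\top$, and a fundamental solution of $\D$ is encoded by a bi-infinite matrix $U$ whose rows are the $\xi^i$ (equivalently whose columns are the $V_k$), characterized up to left multiplication by $\GL_d$. The kernel of $M$ acting on the right and the kernel of $M^\top$ are related by the standard linear-algebra fact that, for the band structure at hand, the ``solution space'' of $M^\top$ is spanned by the maximal minors of $U$ taken over sliding windows of $d$ consecutive columns with one column omitted --- this is again the Wronskian/comatrix construction above, now phrased via minors. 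Concretely one writes down the $d\times(2d-2)$ sliding windows, forms the vector of its maximal minors (a Plücker coordinate vector), and checks by a cofactor expansion that this vector lies in $\Ker M^\top$; the identity needed is exactly the Plücker relation for a $d \times (d+1)$ matrix. Either way, once $V^*$ is exhibited as a fundamental solution of $\D^*$ whose $k$'th term spans $v_{k+m_-+1}, \dots, v_{k+m_+-1}$, Proposition~\ref{dualdual} follows immediately from the correspondence in Proposition~\ref{freedom}, since a fundamental solution determines the polygon up to projective equivalence.

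Finally, I would double-check the index bookkeeping: the support of $\D$ is $[m_-, m_+]$, so $\D^* = \sum_{j=m_-}^{m_+} \tilde a^j T^{-j}$ is supported in $[-m_+, -m_-]$, again of order $d = m_+ - m_-$; shifting by $T^{m_-+m_+}$ if one wants a support symmetric about the origin changes the polygon only by a relabeling, which is consistent with the hyperplane $v_k^*$ being attached to the window centered (for odd $d$, say) at index $k$. Specializing $m_- = (n-3)/2$, $m_+ = (n+3)/2$, $d = 3$ recovers Proposition~\ref{prop:duality}: $v_k^*$ is the line through $v_{k+(n-1)/2}$ and $v_{k+(n+1)/2}$, which is Definition~\ref{def:dual}. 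The main obstacle, as noted, is proving the determinant identity $\D^* V^* = 0$ cleanly; I expect the slickest formulation is via the exterior algebra, where it becomes the statement that $d_j \mapsto \sum_j a_m^j V_{m+j} = 0$ forces the wedge relations, i.e. a direct consequence of the Plücker relations together with $\D V = 0$.
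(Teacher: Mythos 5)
Your strategy---build a fundamental solution of $\D^*$ out of $(d-1)$-fold wedges of consecutive terms of a fundamental solution of $\D$---is the classical ``adjoint solution'' construction, and it is a genuinely different route from the paper's (the paper deliberately avoids any determinant identity by producing a two-sided annihilator $\mathcal L=\hat\D^{-1}-\check\D^{-1}$ whose columns lie in $\Ker\D$ and whose rows lie in $\Ker\D^*$, so the incidence relations fall out of the band structure of $\mathcal L$ for free). However, as written your proposal has a genuine gap, and its central identity is false in the form you state it. Take $d=2$, $m_-=0$, $m_+=2$, so $\D=aT^0+bT^1+cT^2$ and your $V_k^*$ is just $V_{k+1}$ (a single wedge factor, identified with a covector via the volume form). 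Then
\begin{equation}
(\D^*V^*)_k=a_kV_{k+1}+b_{k-1}V_k+c_{k-2}V_{k-1},
\end{equation}
whereas the relation supplied by $\D V=0$ at index $k-1$ is $c_{k-1}V_{k+1}+b_{k-1}V_k+a_{k-1}V_{k-1}=0$. These two combinations are proportional only if $a_k=c_{k-1}$ and $c_{k-2}=a_{k-1}$, which fails for generic coefficients. So the unnormalized wedge $V_{k+m_-+1}\wedge\cdots\wedge V_{k+m_+-1}$ is \emph{not} annihilated by $\D^*$: one must first rescale it by a scalar sequence (essentially the reciprocal of the difference Wronskian, cf.\ the recursion \eqref{wrons}) chosen so that the coefficients match. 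This rescaling is harmless projectively and can be absorbed into the quotient of Proposition~\ref{freedom}, but it has to be identified, and the resulting determinant/Pl\"ucker identity then has to be actually verified.

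That verification is precisely the step you defer (``the step I expect to require the most care''), and the independence of $d$ consecutive $V_k^*$'s is likewise asserted rather than proved. Since the annihilation identity and the fundamentality of $V^*$ together constitute essentially the whole content of the proposition, what you have is a plausible plan for the standard computational proof rather than a proof. If you want to complete this route, define $V_k^*$ as the Wronskian-normalized wedge and carry out the cofactor expansion against a test vector $u$: after substituting $\D V=0$ to eliminate one column, the cancellation does occur, but only because of the normalization factor---so the computation cannot be skipped. Alternatively, the paper's matrix argument shows how to get the same conclusion with no identity to check at all.
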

    \begin{proof}
    Let $\D = \sum_{j = m_-}^{m_+} a^j T^j$. Then one can interpret $\D$ as a finite-band matrix \eqref{infMatrix0}
	 whose non-zero diagonals have labels $m_-, \dots, m_+$. (Here and in what follows, the $k$'th diagonal of an infinite matrix is the collection of its entries $a_{ij}$ such that $j - i = k$. In other words, the diagonals are labeled from southwest to northeast, with the main diagonal labeled by $0$.) Note that even though infinite matrices do not form an algebra, any infinite matrix can be multiplied by a finite band matrix. 

\begin{lemma}\label{pseudoinverse}
There exists an infinite matrix $\mathcal L$ such that:\begin{enumerate}\item 
$\D\mathcal L = \mathcal L\D = 0$. \item The diagonals of $\mathcal L$ with labels $-m_++1, \dots, -m_- - 1$ vanish.
\item None of the entries of $\mathcal L$ on the diagonals with labels $-m_+$ and $-m_-$ vanish. \end{enumerate}
\end{lemma}
\begin{remark}
One can think of infinite matrices as formal Laurent series in terms of the shift operator $T$, with coefficients given by sequences. In this language, Lemma \ref{pseudoinverse} states the existence of $\mathcal L$ of the form
 $\sum_{j = -\infty}^{-m_+} b^j T^j +  \sum_{j = -m_-}^{+\infty} b^j T^j,$
 where $b_k^{-m_+} \neq 0$,  $b_k^{-m_-} \neq 0$ for any $k \in \Z$.
\end{remark}
\begin{proof}[Proof of Lemma \ref{pseudoinverse}]
The infinite matrix $\D$ can be regarded as an element of two groups: the group $\GL_{\infty}^+$ of invertible infinite matrices with finitely many non-zero diagonals below the main diagonal, and the group $\GL_{\infty}^-$ of invertible infinite matrices with finitely many non-zero diagonals above the main diagonal. Denote by $\hat \D^{-1}, \check \D^{-1}$ the inverses of $\D$ in these two groups, and set
$
\mathcal L := \hat \D^{-1} - \check \D^{-1}.
$
Then we clearly have $\D\mathcal L = \mathcal L\D = 0$. To see that $\mathcal L$ is of desired form, write $\D$ as 
$
a^{m_-}T^{m_-}(1 + \dots),
$
where the dots denote terms of higher order in $T$. Then the inverse of $(1 + \dots)$ in $\GL^+_{\infty}$ can be computed using the Taylor series $(1+x)^{-1} = 1 - x + \dots$. So, the inverse of $\D$ in $\GL^+_{\infty}$ reads $ \hat \D^{-1}=\,\, (1 + \dots)^{-1}T^{-m_-}(a^{m_-})^{-1}$ and hence is of the form  $\sum_{j = -m_-}^{+\infty} b^j T^j$ with $b_k^{-m_-} \neq 0$ . Likewise, $\check \D^{-1}$ is of the form $ \sum_{j = -\infty}^{-m_+} b^j T^j$ with $b_k^{-m_+} \neq 0$. The result follows.
\end{proof}

We now finish the proof of Proposition \ref{dualdual}. Let $V=\{V_k \in \R^d\}$ be a sequence of lifts of vertices $v_k$ of $P$ such that $\D V = 0$. Then any scalar sequence $\xi \in \Ker \D$ can be obtained from $V$ by means of term-wise application of a linear functional. In particular, since $\D \mathcal L = 0$, this applies to columns of the matrix $\mathcal L$. So, the $j$'th column of $\mathcal L$ is of the form $W_j(V_k)$ for a certain linear functional $W_j \in (\R^d)^*$. Furthermore, since the diagonals of $\mathcal L$ with labels $-m_++1, \dots, -m_- - 1$ vanish, it follows that $W_j$ annihilates $V_{j+m_-+1}, \dots, V_{j + m_+ -1}$. Moreover, since $\mathcal L$ has a non-vanishing diagonal, we have $W_j \neq 0$. Therefore, the projection of $W_j$ to $(\P^{d-1})^* = \P(\R^d)^*$ is exactly the hyperplane spanned by the vertices $v_{j + m_-+1}, \dots, v_{j + m_+-1}$ of $P$. So, to complete the proof, it suffices to show that the sequence of $W_j$'s is annihilated by $\D^*$. To that end, notice that since $\mathcal L\D = 0$, the rows of $\mathcal L$ are annihilated by $\D^*$. But those rows are of the form $W_j(V_k)$, and since $V_k$'s span $\R^d$, it follows that the sequence $W_j$ is annihilated by $\D^*$, as desired. 
\end{proof}
\begin{remark}
 It is also easy to see that the matrix $\mathcal L$ provided by Lemma \ref{pseudoinverse} is unique up to a constant factor. It takes a particularly simple form when the polygon $P$ is closed. To show that, assume for simplicity that $m_- = 0$, so that the operator $\D$ is supported in $[0,d]$. Furthermore, assume that $\D$ is $n$-periodic and has trivial monodromy (in particular, the polygon $P$ corresponding to $\D$ is closed). Then, as shown in \cite{krichever2015commuting}, there exists an $n$-periodic operator $\mathcal R$ supported in $[0,n-d]$ such that $\mathcal R \D = \D \mathcal R = 1 - T^n$ (the operator $\mathcal R$ is closely related to the so-called \textit{Gale dual} of $\D$). Using that, one can find the inverses of $\D$ in $\GL^\pm_{ \infty}$ as 
 $$
\begin{gathered}
 \hat \D^{-1} = \mathcal R (\widehat{1 - T^n})^{-1} = \mathcal R (1 + T^n + T^{2n} + \dots),\\
 \check \D^{-1} = \mathcal R (\widecheck{1 - T^n})^{-1} = -\mathcal RT^{-n}(\widecheck{1 - T^{-n}})^{-1} =  -\mathcal RT^{-n}(1 + T^{-n} + T^{-2n} + \dots) \\ = -\mathcal R(T^{-n} + T^{-2n} + \dots).
\end{gathered}
$$
 As a result, one gets
 $$
 \mathcal L = \hat \D^{-1} - \check \D^{-1} = \mathcal R\sum_{j = -\infty}^{+\infty} T^{jn}.
 $$
\end{remark}

\par\medskip

\bibliographystyle{plain}
\bibliography{pent.bib}

\end{document}